\numberwithin{equation}{section}
\theoremstyle{plain}
\newtheorem{theorem}{Theorem}
\newtheorem{corollary}{Corollary}
\newtheorem{lemma}{Lemma}
\newtheorem{proposition}{Proposition}
\theoremstyle{definition}
\newtheorem{definition}{Definition}
\theoremstyle{remark}
\newtheorem{example}{Example}
\newtheorem{remark}{Remark}
\renewcommand{\1}{\mathbbm{1}}
\DeclareFontFamily{OT1}{wncyi}{}
\DeclareFontShape{OT1}{wncyi}{m}{it}{
<5> <6> <7> <8> <9> gen * wncyi
<10> <10.95> <12> <14.4> <17.28> <20.74> <24.88> wncyi10
}{}
\DeclareSymbolFont{cyrletters}{OT1}{wncyi}{m}{it}
\DeclareSymbolFontAlphabet{\cyrmath}{cyrletters}
\DeclareMathSymbol{\rE}{\cyrmath}{cyrletters}{003}
\DeclareMathSymbol{\rD}{\cyrmath}{cyrletters}{068}
\DeclareMathSymbol{\rG}{\cyrmath}{cyrletters}{017}
\DeclareMathSymbol{\rI}{\cyrmath}{cyrletters}{073}
\DeclareMathSymbol{\rL}{\cyrmath}{cyrletters}{076}
\DeclareMathSymbol{\rZ}{\cyrmath}{cyrletters}{090}
\newcommand{\ev}{\mathrm{ev}}
\renewcommand{\phi}{\varphi}
\newcommand{\Hc}{\mathcal{H}}
\newcommand{\Oc}{\mathcal{O}}
\newcommand{\Ic}{\mathcal{I}}
\newcommand{\Dc}{\mathcal{D}}
\newcommand{\Bc}{\mathcal{B}}
\newcommand{\Mc}{\mathcal{M}}
\newcommand{\Nc}{\mathcal{N}}
\newcommand{\Lc}{\mathcal{L}}
\newcommand{\Sc}{\mathcal{S}}
\newcommand{\Kc}{\mathcal{K}}
\newcommand{\Pc}{\mathcal{P}}
\newcommand{\Ec}{\mathcal{E}}
\newcommand{\un}{\mathrm{un}}
\newcommand{\as}{\mathrm{as}}
\newcommand{\com}{\mathrm{com}}
\newcommand{\Fc}{\mathcal{F}}
\newcommand{\Rc}{\mathcal{R}}
\newcommand{\Cc}{\mathcal{C}}
\newcommand{\Uc}{\mathcal{U}}
\newcommand{\Ac}{\mathcal{A}}
\newcommand{\Db}{\mathbb{D}}
\newcommand{\A}{\mathbb{A}}
\newcommand{\Lb}{\mathbb{L}}
\newcommand{\C}{\mathbb{C}}
\newcommand{\R}{\mathbb{R}}
\newcommand{\Z}{\mathbb{Z}}
\newcommand{\N}{\mathbb{N}}
\newcommand{\mfk}{\mathfrak{m}}
\newcommand{\gfk}{\mathfrak{g}}
\newcommand{\Hb}{\mathbb{H}}
\newcommand{\Tc}{\mathcal{T}}
\newcommand{\Cliff}{\mathrm{Cliff}}
\newcommand{\Conn}{\mathrm{Conn}}
\newcommand{\Dirac}{\displaystyle{\not}D}
\newcommand{\uSpec}{\underline{\mathrm{Spec}}}
\newcommand{\Spec}{\mathrm{Spec}}
\newcommand{\Bun}{\mathrm{Bun}}
\newcommand{\Ker}{\mathrm{Ker}}
\newcommand{\Mod}{\mathrm{Mod}}
\newcommand{\Abrm}{\mathrm{Ab}}
\newcommand{\Sym}{\mathrm{Sym}}
\newcommand{\Sol}{\mathrm{Sol}}
\newcommand{\Open}{\mathrm{Open}}
\newcommand{\Groupoids}{\mathrm{Groupoids}}
\newcommand{\Cov}{\mathrm{Cov}}
\newcommand{\Hom}{\mathrm{Hom}}
\newcommand{\Ber}{\mathrm{Ber}}
\newcommand{\uHom}{\underline{\mathrm{Hom}}}
\newcommand{\uGamma}{\underline{\Gamma}}
\newcommand{\uExt}{\underline{\mathrm{Ext}}}
\newcommand{\uDiff}{\underline{\mathrm{Diff}}}
\newcommand{\Lotimes}{\overset{\Lb}{\otimes}}
\renewcommand{\1}{\mathbbm{1}}
\newcommand{\FSets}{\mathrm{FSets}}
\newcommand{\Homc}{\mathcal{H}om}
\newcommand{\Extc}{\mathcal{E}xt}
\newcommand{\Legos}{\textsc{Legos}}
\newcommand{\Der}{\mathrm{Der}}
\newcommand{\uDer}{\underline{\mathrm{Der}}}
\newcommand{\Derc}{\mathcal{D}er}
\newcommand{\Gr}{\mathrm{Gr}}
\newcommand{\Jet}{\mathrm{Jet}}
\newcommand{\Aff}{\textsc{Aff}}
\newcommand{\Sch}{\textsc{Sch}}
\newcommand{\Alg}{\textsc{Alg}}
\newcommand{\Vect}{\textsc{Vect}}
\newcommand{\Sets}{\textsc{Sets}}
\newcommand{\SSets}{\textsc{SSets}}
\newcommand{\End}{\mathrm{End}}
\newcommand{\id}{\mathrm{id}}
\newcommand{\gr}{\mathrm{gr}}
\newcommand{\sgn}{\mathrm{sgn}\;}
\newcommand{\Lie}{\mathrm{Lie}}
\newcommand{\limind}{{\underset{\longrightarrow}{\mathrm{lim}}\,}}
\newcommand{\Sh}{\mathrm{Sh}}
\newcommand{\DR}{\mathrm{DR}}
\newcommand{\Langle}{\big\langle}
\newcommand{\Rangle}{\big\rangle}
\title{Histories and observables\\
in\\
covariant field theory}
\author{Fr\'ed\'eric Paugam}
\begin{document}
\maketitle


\begin{abstract}
Motivated by DeWitt's viewpoint of covariant field theory,
we define a general notion of non-local classical
observable that applies to many physical Lagrangian
systems (with bosonic and fermionic variables), by using
methods that are now standard in algebraic geometry.
We review the methods of local functional
calculus, as they are presented by Beilinson and Drinfeld,
and relate them to our construction. We partially explain the relation
of these with Vinogradov's secondary calculus.
The methods present here are all necessary to understand
mathematically properly and with simple notions the full renormalization
of the standard model, based on functional integral methods.
Our approach is close in spirit to non-perturbative methods since
we work with actual functions on spaces of fields, and not only
formal power series.
This article can be seen as an introduction to well-grounded
classical physical mathematics, and as a good starting point
to study quantum physical mathematics, which make frequent use
of non-local functionals, like for example in the computation
of Wilson's effective action.
We finish by describing briefly a coordinate-free approach to
the classical Batalin-Vilkovisky formalism for general gauge theories,
in the language of homotopical geometry.
\end{abstract}



\newpage
\tableofcontents
\newpage

\section{Introduction}

\begin{flushright}
\emph{We [physicists] often do not know a priori\\
just where a given formalism [...] will take us.\\
We are compelled to leave it to the mathematicians\\
to tidy things up after we have left the playing field.\\
Bryce S. DeWitt.\\
The global approach to quantum field theory.}
\end{flushright}

A recurrent difficulty that the average mathematician has to face if
he opens an experimental physics book is the scarcity of definitions for the
mathematical objects used in computations. This does not mean
that the computations of experimental physicists are false, only that they are hard
to approach from a mathematician's viewpoint.

The aim of this article is to show
how standard abstract methods of pure mathematics (functors, sheaves, homotopical
spaces, monoidal categories and operads) can
be seen as parts of applied mathematics, since they are all necessary to
explain to a mathematician the standard model of particle physics or supersymmetric field theory
without coordinates.
The main advantage of computations without coordinates
is that they are often very algebraic, and general enough to be used in very
different contexts. Moreover, they open the road to many interesting mathematical
applications.

The starting point of particle physics is the study of variational
problems, their symmetries and conservation laws.
There are at least three ways to study such problems: functional analysis,
local functional calculus and non-local functional calculus.

The functional analytic methods (analysis on spaces of functions) proved
to be very useful in the linear case but have a quite limited scope for non linear problems.

Local functional calculus on jet spaces has been developed by many authors, starting
from Noether and her famous theorems (see \cite{Kosmann-Schwarzbach}), and including
Gelfand, Manin, Vinogradov and Gromov. We will mainly be interested
in Vinogradov's  $\Cc$-spectral sequence \cite{Vinogradov} and secondary
calculus, studied in the smooth setting in various reference books by
many people (see \cite{Vinogradov}, \cite{Vinogradov-Krasilshchik},
\cite{Krasilshchik-Verbovetsky-1998}),
and the algebraic approach of Beilinson and Drinfeld \cite{Beilinson-Drinfeld-Chiral},
using the language of differential algebra, which originated in Ritt's school
\cite{Ritt}. To sum up, these methods allow one to compute
symmetries and conservation laws of partial differential equations systematically and to solve
the inverse problem of variational calculus algebraically. It also allows one to prove
deep results on partial differential equations like the h-principle \cite{Gromov}, but we will
not go into this since we are only interested in the very formal computations
of physicists.

Non-local functional calculus is all around in the physical literature,
and its mathematical formalization is very close to Grothendieck's functorial approach
of geometry \cite{SGA4}. It was first introduced in physics in a special
case by Souriau \cite{Souriau} and his school \cite{piz} (see also \cite{classical-fields-deligne-freed}).
It will prove very close to the physicists' way of thinking of variational problems.
This functorial approach is already used in finite dimension in the IAS lectures
\cite{notes-on-supersymmetry}, and for spaces of fields in Lott's article
\cite{Lott-torsion-supergeometry}, but was not systematically developed
mathematically there (no sheaf condition, for example).

These three methods have their advantages and drawbacks.
Since the literature on functional methods is already very large,
we will concentrate on the two other approaches.
One can compare these two last theories to Grothendieck's scheme theory in the
setting of partial differential equations: it does not allow us to solve the equations explicitly
(this can be done, anyway, only in very special cases), but it gives powerful
methods to define and compute some very interesting invariants associated to
the geometry of the space of solutions. These methods
are based on differential calculus on the ``space of solutions'' of the given partial differential equations. The local and non-local calculus give two useful definitions
of this notion of ``space of solution'', which is central in modern physics.

The algebro-geometric methods are based on the ``punctual''
approach to geometry, which is essentially the way physicists think:
one does not have to wonder exactly on which functional space
one works, the only important things being what a parameterized function
is, and the formal changes of parameterizations that are allowed
(i.e., the categories and morphisms in play).
These methods already proved to be very useful in understanding
fermionic differential and integral calculus and they give a geometrically
intuitive way to work with spaces of solutions of nonlinear
partial differential equations. Our methods are aimed at the study of
non-perturbative quantum field theory, but here we describe here mainly
classical field theory.

We finish by briefly describing a general coordinate-free approach to
the classical Batalin-Vilkovisky formalism for general gauge theories.

In this paper, all classical manifolds used are implicitly supposed to be oriented,
if one needs to integrate differential forms on them.

\section{Points, coordinates and histories in geometry and physics}
There are two complementary ways of studying spaces in geometry and physics.
\begin{enumerate}
\item The \emph{functional viewpoint}, based on the notion of a \emph{coordinate} function,
also called observable
(see Nestruev \cite{Nestruev}, Connes \cite{Connes2} and most
of the literature on mathematical physics), translates most of the geometrical
constructions in algebraic terms.
\item The \emph{punctual viewpoint}, based on the notion of a \emph{point}
(see \cite{Souriau}, \cite{piz} and \cite{Gro1}), studies a given
space by giving all its parameterized families of points with values in some given
standard building blocks, like open subsets of $\R^n$ for example.
\end{enumerate}

Both of these methods have their advantages.
The main objects of modern physics, called spaces of histories or spaces of fields,
are functional spaces. In studying such infinite-dimensional spaces,
the functional viewpoint becomes sometimes too cumbersome or even
inefficient, and the systematic use of the punctual viewpoint proves to be closer
to the physicists' (sometimes informal) language. Various combinations of both of them
will give optimal geometrical contexts for physics.

\subsection{Lagrangian variational problems}

We first give a definition of a lagrangian variational problem,
which is general enough to treat many variational problems that appear in classical and
quantum physics (classical mechanics, Yang-Mills theory, general relativity, fermionic
field theory and supersymmetric sigma models).
We base our approach on a general notion of space, which will be described in this paper.

\begin{definition}
A lagrangian variational problem is composed of the following data:
\begin{enumerate}
\item a space $M$ called the parameter space for trajectories,
\item a space $C$ called the configuration space for trajectories,
\item a morphism $\pi:C\to M$ (often supposed to be surjective),
\item  a subspace $H\subset \Gamma(M,C)$ of the space of sections of $\pi$
$$\Gamma(M,C):=\{x:M\to C,\;\pi\circ x=\id\},$$
called the space of histories, and
\item a functional $S:H\to A$ (where $A$ is a space in rings that is often the real line
$\R$ or $\R[[\hbar]]$)
called the action functional.
\end{enumerate}
The space of classical trajectories for the variational problem is
the subspace $T$ of $H$ defined by
$$T=\{x\in H|\;d_{x}S=0\}.$$
If $B$ is another space, a classical $B$-valued observable is a functional
$F:T\to B$ and a quantum $B$-valued observable is a functional
$F:H\to B$.
\end{definition}

We will now give some physical examples, without going into details.
The definitions of the types of space that are
necessary to formalize these examples properly in the above language arise
with an increasing level of difficulties.

In classical mechanics, the parameter space $M$ for trajectories is
a compact time interval, e.g. $M=[0,1]$, the configuration bundle
is the natural projection $\pi:C=[0,1]\times \R^3\to [0,1]=M$,
and the space of histories is the space of trajectories $x:[0,1]\to \R^3$
with some fixed starting and ending points $x_{0}$ and $x_{1}$.
The action functional of the free particle is given by the formula
$$S(x)=\int_M \frac{1}{2}m\|\partial_{t}x\|^2dt.$$

To describe the variational problem of pure Yang-Mills theory, one
needs a lorentzian spacetime manifold $(M,g)$ and a principal $G$-bundle $P$ on $M$.
The projection $\pi:C=\Conn_{G}(P)\to M$ is the bundle whose sections $A$ are
principal $G$-connections ($G$-equivariant covariant derivatives).
The action functional is given by the formula
$$S(A)=\int_{M}F_{A}\wedge *F_{A}$$
where $F_{A}$ is the curvature of the connection $A$ on $P$.

In fermionic field theory, one starts as before with a four-dimensional
lorentzian manifold $(M,g)$ and a spinor bundle $S\to M$.
The projection $\pi:C=\Pi S\to M$ is the odd fiber bundle associated to
$S$ (with fiber odd supervector spaces) and the action functional is given by 
$$S(\psi)=\int_{M} \langle \psi,\Dirac\psi\rangle dx$$
where the above pairing is described in \cite{notes-on-supersymmetry}.
It is already hard there to give a proper sense to this expression,
because a usual section $\psi:M\to \Pi S$ is essentially trivial.

In supersymmetric sigma models, one starts with a supervariety (see
below for a general definition), for example $\R^{1|1}$,
and works with the bundle $\pi:C=\R\times \R^{1|1}\to \R^{1|1}$.
The supersymmetric lagrangian is then given by a superintegral over
$\R^{1|1}$.

To be able to treat all these examples on an equal footing, we will need
a flexible enough notion of ``space'' of trajectories and histories.

\subsection{Points and coordinates: useful nonsense}
To define a very general notion of space, one needs a category
of geometrical building blocks, which we call $\Legos$. This must be equipped
with a (subcanonical) Grothendieck topology $\tau$.

Recall that if $\Legos$ is a category, and if we denote $\Legos^\vee$ the category
of contravariant functors from $\Legos$ to $\Sets$, there is a
fully faithful Yoneda embedding
$$
\begin{array}{ccc}
\Legos & \to 		& \Legos^\vee\\
X          &\mapsto & \underline{X}:=\Hom(.,X):\Legos\to \Sets.
\end{array}
$$
This gives an embedding of $\Legos$ in a category that contains
all limits and colimits.

If one starts with the opposite category
$\Legos^{op}$ instead of $\Legos$, the ioneda embedding
will be given by
$$
\begin{array}{ccc}
\Legos^{op} 	& \to 		& (\Legos^{op})^\vee\\
X          		&\mapsto & \underline{X}:=\Hom_{\Legos^{op}}(.,X)=\Hom_{\Legos}(X,.).
\end{array}
$$

The above constructions are the categorical counterparts of the two
viewpoint of spaces used by physicists:
\begin{enumerate}
\item in the punctual viewpoint, points of a space $X$ with values in legos are given
by the contravariant functor $\Hom(.,X):\Legos\to \Sets$;
\item in the functional viewpoint, coordinates on a space $X$ are given by the
covariant functor
$\Hom(X,.):\Legos\to \Sets$, and as seen above, these are also points
of $X$ with values in objects of $\Legos^{op}$.
\end{enumerate}

Here are the main advantages of the two approaches.
\begin{enumerate}
\item The punctual viewpoint is very natural for the study of contravariant
functorial constructions (i.e. constructions with a pull-back), like differential forms for example.
\item The functional viewpoint is very natural for the study of covariant
constructions like vector fields and differential operators.
\end{enumerate}

It will thus often be useful to combine those two approaches by using
as $\Legos$ categories of algebras of functions, for which one
can apply covariant and contravariant constructions.

If one works with the category $\Legos^\vee$, one gets into trouble when pasting
building blocks. Indeed, suppose for example that $\Legos=\Open$ is the category
of open subsets of $\R^n$ for varying $n$. Then if an open set $U$ is covered by
two open subsets, i.e.,
$$U=U_{1}\cup U_{2}=:U_{1}\coprod_{U}U_{2},$$
one usually does not have
$$\underline{U}=\underline{U_{1}}\coprod_{\underline{U}}\underline{U_{2}}$$
in the category $\Legos^\vee$. One will thus get into trouble if one wants do define
varieties by pasting the spaces associated to legos.
The sheaf condition is here exactly to prevent this bad situation happening.

Spaces will thus simply be the category $\Sh(\Legos,\tau)\subset \Legos^\vee$ of sheaves
of sets on $\Legos$ with respect to the topology $\tau$. Yoneda's lemma implies that
the canonical functor
$$
\begin{array}{ccc}
\Legos 	& \to 		& \Sh(\Legos,\tau)\\
U		&\mapsto	& [\underline{U}:=\Hom(.,U):V\mapsto \Hom(V,U)]
\end{array}
$$
is an embedding. A common denomination for a sheaf $F\in \Sh(\Legos,\tau)$ is
that of the functor of points of the corresponding space with values in $\Legos$.

\begin{center}
\bf This thus gives a definition of spaces by their points.
\end{center}

In usual (finite-dimensional) geometry, one
defines a particular class of spaces, called geometrical sheaves.
These spaces are covered by a special kind of morphism
$$\coprod U_{i}\to F$$
from a union of legos. The precise definition of such geometric contexts
can be found in \cite{Toen-Vaquie-CACA}. We do not give it in details because it
does not apply to (usually infinite-dimensional) spaces of maps,
which are the central objects of covariant field theory.

We now recall the general definitions from \cite{Kashiwara-Schapira-categories-and-sheaves}.

\begin{definition}
Let $\Legos$ be a category. A Grothendieck topology $\tau$ on $\Legos$ is the
datum of families of morphisms $\{f_{i}:U_{i}\to U\}$, called covering families,
and denoted $\Cov_{U}$,
such that the following holds:
\begin{enumerate}
\item (Identity) The identity map $\{\id:U\to U\}$ belongs to $\Cov_{U}$.
\item (Refinement) If $\{f_{i}:U_{i}\to U\}$ belongs to $\Cov_{U}$ and
$\{g_{i,j}:U_{i,j}\to U_{i}\}$ belong to $\Cov_{U_{i}}$, then the composed covering
family $\{f_{i}\circ g_{i,j}:U_{i,j}\to U\}$ belongs to $\Cov_{U}$.
\item(Base change) If $\{f_{i}:U_{i}\to U\}$ belongs to $\Cov_{U}$ and
$f:V\to U$ is a morphism, then $\{f_{i}\times_{U}f:U_{i}\times_{U}V\to V\}$ belongs
to $\Cov_{V}$.
\item (Local nature) If $\{f_{i}:U_{i}\to U\}$ belongs to $\Cov_{U}$ and
$\{f_{j}:V_{j}\to U\}$ is a small family of arbitrary morphisms such that
$f_{i}\times_{U} f_{j}:U_{i}\times_{U}V_{j}\to U_{i}$ belongs to $\Cov_{U_{i}}$,
then $\{f_{j}\}$ belongs to $\Cov_{U}$.
\end{enumerate}
A category $\Legos$ equipped with a Grothendieck topology $\tau$ is
called a site.
\end{definition}

We remark that for this definition to make sense one needs the fiber products that
appear in it to exist in the given category. A more flexible and general definition
in terms of sieves can be found in \cite{Kashiwara-Schapira-categories-and-sheaves}.

Suppose that we work on the category $\Legos=\Open_{X}$ of open subsets of a given
topological space $X$, with inclusion morphisms and its usual topology.
The base change axiom then says that a covering of
$U\subset X$ induces a covering of its open subsets. The local character means that families
of coverings of elements of a given covering induce a (refined) covering.

\begin{definition}
Let $(\Legos,\tau)$ be a category with Grothendieck topology.
A functor $X:\Legos^{op}\to \Sets$ is called a sheaf if the sequence
$$
\xymatrix{
X(U)\ar[r] & \prod_{i}X(U_{i})\ar@<0.7ex>[r]\ar@<-0.4ex>[r] & \prod_{i,j}X(U_{i}\times_{U}U_{j})
}
$$
is exact.
The category of sheaves is denoted $\Sh(\Legos,\tau)$.
\end{definition}

One can think of a sheaf in this sense as something analogous to the continuous
functions on a topological space. A continuous function on an open set is uniquely
defined by a family of continuous functions on the open subsets of a given covering,
whose values are equal on their intersections.

From now on, one further supposes that, for all legos $U$, $\underline{U}$ is a sheaf
for the given topology.

The main advantage of the category of sheaves is the following classical fact. 
\begin{lemma}
Let $\Open$ be the category of open subsets of $\R^n$ for varying $n$
with its usual topology. Let $U=U_{1}\coprod_{U}U_{2}$ be a covering of an open subset
by two open subsets.
Then one has
$$\underline{U}=\underline{U_{1}}\coprod_{\underline{U}}\underline{U_{2}}$$
if the coproduct is taken in the category of sheaves on $\Open$.
\end{lemma}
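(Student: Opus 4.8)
The plan is to show that $\underline{U}$ has exactly the universal property of the pushout $\underline{U_1}\coprod_{\underline{U_1\cap U_2}}\underline{U_2}$ taken in $\Sh(\Open,\tau)$ (the gluing object in the subscript being the intersection $U_1\cap U_2$), by testing both objects against an arbitrary sheaf $F$ and then invoking Yoneda. So I would not try to construct an isomorphism by hand on sections, but instead compute the functors $F\mapsto\Hom_{\Sh}(-,F)$ corepresented by each side and match them.

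First I would recall how colimits are computed in sheaves. Sheafification $a:\Open^\vee\to\Sh(\Open,\tau)$ is left adjoint to the inclusion, hence preserves colimits; therefore the pushout in sheaves is the sheafification of the pushout $P:=\underline{U_1}\coprod_{\underline{U_1\cap U_2}}\underline{U_2}$ computed in presheaves. The adjunction then yields, for every sheaf $F$, a natural bijection $\Hom_{\Sh}(aP,F)\cong\Hom_{\Open^\vee}(P,F)$. Next, the universal property of the presheaf pushout (presheaf colimits being computed objectwise) together with the Yoneda lemma for presheaves, which identifies $\Hom_{\Open^\vee}(\underline{V},F)$ with $F(V)$, turns the right-hand side into the fiber product $F(U_1)\times_{F(U_1\cap U_2)}F(U_2)$; here the two maps into $F(U_1\cap U_2)$ are the restrictions induced by the inclusions $U_1\cap U_2\hookrightarrow U_i$.

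The crux is then to recognize this fiber product. I would observe that $\{U_1\to U,\,U_2\to U\}$ is a covering family for the usual topology, and that in $\Open$ one has $U_i\times_U U_j=U_i\cap U_j$. Writing out the exactness of the sheaf sequence of the preceding Definition for this two-element cover, the equalizer of the two arrows $\prod_i F(U_i)\rightrightarrows\prod_{i,j}F(U_i\cap U_j)$ consists of pairs $(s_1,s_2)$ with $s_1|_{U_1\cap U_2}=s_2|_{U_1\cap U_2}$, which is precisely $F(U_1)\times_{F(U_1\cap U_2)}F(U_2)$; and since $F$ is a sheaf this equalizer is $F(U)$. Hence $\Hom_{\Sh}(aP,F)\cong F(U)$, naturally in $F$.

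Finally, $\underline{U}$ is a sheaf by the standing assumption, so Yoneda gives $\Hom_{\Sh}(\underline{U},F)\cong F(U)$ as well. Comparing, $aP$ and $\underline{U}$ corepresent the same functor $F\mapsto F(U)$ on $\Sh(\Open,\tau)$, and the isomorphism is natural, so the canonical comparison map induced by the inclusions $U_i\hookrightarrow U$ must be an isomorphism. I expect the main difficulty to be purely bookkeeping rather than conceptual: one has to keep straight that the gluing object is the intersection $U_1\cap U_2$, that $U_i\times_U U_j=U_i\cap U_j$, and that the two arrows in the sheaf sequence restrict from the two different factors. Once these identifications are fixed, the sheaf axiom delivers exactly the fiber-product description that the pushout demands, and the equality follows.
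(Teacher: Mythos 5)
Your argument is correct. The paper itself offers no proof of this lemma --- it is stated as ``the following classical fact'' with nothing after it --- so there is no argument of the author's to compare yours against; yours is the standard one (sheafification preserves colimits, test the presheaf pushout against an arbitrary sheaf $F$ via Yoneda to get $F(U_1)\times_{F(U_1\cap U_2)}F(U_2)$, and recognize this as the equalizer in the sheaf condition for the cover $\{U_1,U_2\}$ using $U_i\times_U U_j=U_i\cap U_j$). You are also right to read the paper's idiosyncratic subscript in $\underline{U_1}\coprod_{\underline{U}}\underline{U_2}$ as the pushout over $\underline{U_1\cap U_2}=\underline{U_1}\times_{\underline{U}}\underline{U_2}$; that is the only sensible interpretation, consistent with the paper's earlier convention $U_1\coprod_U U_2:=U_1\cup U_2$.
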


The most standard example of a punctual geometrical setting is given by the
theory of diffeology, which was developed by Smith \cite{Smith-1966} and
Chen \cite{Chen-1977}, and used in the physical setting by Souriau
(see \cite{Souriau} and \cite{piz} for references and historical background,
and \cite{Baez-diffeo} for an overview)
to explain the geometric methods used by physicists to study variational problems.

\begin{definition}
Let $\Open_{\Cc^\infty}$ be the category of open subsets of $\R^n$
for varying $n$ with smooth maps between them and $\tau$ be the
usual topology on open subsets. The category of diffeologies
(also called smooth spaces) is the category
$\Sh(\Open_{\Cc^\infty},\tau)$.
\end{definition}

There is a fully faithful embedding of the category of smooth varieties
in diffeologies that sends a variety $X$ to its functor of parameterized points
$$
\begin{array}{cccc}
\underline{X}: 	& \Open_{\Cc^\infty}^{op}  	& \to 		& \Sets\\
			& U						&\mapsto	& \Hom_{\Cc^\infty}(U,X).
\end{array}
$$

If we replace $\Open_{\Cc^\infty}$ by its opposite category $\Open_{\Cc^\infty}^{op}$,
we get the functional viewpoint of varieties:
there is a natural embedding of the category of smooth varieties in the category
of covariant functors from $\Open_{\Cc^\infty}$ to $\Sets$ given by sending
$X$ to $\Hom(X,.)$.
Since every open subset $U$ of $\R^n$ can be thought as given with an
embedding $U\subset \R^n$, we always have
$$\Hom(X,U)\subset \Hom(X,\R^n)=\Hom(X,\R)^n.$$
This simple result implies that from the functional viewpoint, it is often
enough to consider functions with values in $\R$, and this is actually
what analysts usually do, and they are right!

We now come to the definition of contravariant constructions on spaces.
\begin{definition}
Let $(\Legos,\tau)$ be a category with Grothendieck topology and let
$X$ be a space (i.e., a sheaf for the given topology).
Let $C$ be a given category (of types of structure... think of vector spaces) and let
$\Omega:\Legos^{op}\to C$ be a contravariant functor (given by a standard contravariant
structure on $\Legos$... think of differential forms) that is moreover a sheaf.
If $\Legos/X$ denotes the
category of morphisms $U\to X$ for $U\in \Legos$, one defines $\Omega_{X}$
as the functor $\Omega_{X}:\Legos/X\to C$. An element of $\Omega_{X}$
is defined as a family of elements $\omega_{U}\in \Omega_{X}(U)$ compatible
with the functorial maps $\Omega_{U,V}:\Omega_{X}(U)\to \Omega_{X}(V)$
for $f:V\to U$ a morphism in $\Legos/X$.
\end{definition}

As an example, in the diffeological setting, the differential graded algebra of
differential form $U\mapsto (\Omega^*(U),d)$ on $\Legos$ has a well-defined
pull-back along smooth morphisms $f:U\to V$. This thus gives a definition
of the notion of differential form and a Rham complex on a diffeology.

\begin{definition}
Let $X:(\Open_{\Cc^\infty}^{op},\tau)\to \Sets$ be a diffeological space. A differential form
on $X$ is the datum, for every morphism $x:U\to X$ of a differential form $x^*\omega$,
such that for every $f:U\to V$ and $x':V\to X$ such that $x'\circ f=x$, one has
$$f^*((x')^*\omega)=x^*\omega.$$
\end{definition}

More generally, one can define the notion of a bundle over a diffeological space $X$
in a similar way by using the ``functor''
$$\Omega=\Bun:\Open_{\Cc^\infty}^{op}\to \Groupoids$$
that sends an open $U$ to the groupoid of bundles on $U$.
Let $X:(\Open_{\Cc^\infty}^{op},\tau)\to \Sets$ be a diffeological space. A bundle on $X$
is (roughly speaking) the datum, for every morphism $x:U\to X$ of a bundle $x^*E$ on $U$, and 
for every $f:U\to V$ and $x':V\to X$ such that $x'\circ f=x$, of an isomorphism
$$f^*((x')^*E)\cong x^*E,$$
with an additional identity associated to pairs of morphisms between points.
The proper mathematical formulation of this construction involves homotopical
methods, which we will talk about latter.

\begin{remark}
If we replace $\Legos$ by $\Legos^{op}$, and equip it with a Grothendieck topology $\tau'$
(also called a Grothendieck cotopology on $\Legos$), a $(\Legos^{op},\tau')$-space
will be defined by its spaces of functions $\Hom(X,.):\Legos\to \Sets$
that must fulfil a cosheaf property to preserve finite limits of $\Legos$. In this setting,
we get natural definitions of covariant operations like vector fields or differential
operators.
\end{remark}

\subsection{Equations and their solution functor with values in algebras}
\label{polynomial-solution}
Another take at the functorial approach to geometry is by the study
of equations and their solution functor (also called functor of points), as
explained in the introduction of the Springer version of EGA \cite{EGAI}.
Most of the spaces used in physics are described by some equations.

Let $\{P_{i}(x_{1},\dots,x_{n})\}$ be a family of polynomials with real coefficients.
To write down the equations $P_{i}=0$, one only needs a commutative
$\R$-algebra. If $A$ is a commutative $\R$-algebra, one can look for the solutions
of $P_{i}=0$ in $A^n$. This gives a functor
$$\Sol_{P_{i}=0}:\Alg_{\R}\to \Sets$$
defined by
$$\Sol_{P_{i}=0}(A):=\{(x_{1},\dots,x_{n})\in A^n,\forall i,\;P_{i}(x_{1},\dots,x_{n})=0\}.$$
The universal properties of the polynomial and of the quotient ring essentially tell
us that this functor is isomorphic to the functor
$$
\begin{array}{ccccc}
\uSpec(\R[x_{1},\dots,x_{n}]/(P_{i})):& \Alg_{\R} & \to & \Sets\\
& A & \mapsto & \Hom_{\Alg_{\R}}(\R[x_{1},\dots,x_{n}]/(P_{i}),A)
\end{array}
$$
corresponding to the algebra $\R[x_{1},\dots,x_{n}]/(P_{i})$.
The localization maps $A\to A[f^{-1}]$ for $f\in A$ define a Grothendieck
topology $\tau_{Zar}$ on $\Legos=\Alg_{\R}^{op}$ called the Zariski topology. The functor
$\Sol_{P=0}$ is a sheaf for this topology.

\begin{definition}
An algebraic space over $\R$ is a sheaf on the site $(\Alg_{\R}^{op},\tau_{Zar})$, i.e.,
a covariant functor
$$X:\Alg_{\R}\to \Sets$$
that is a sheaf fo $\tau_{Zar}$.
\end{definition}

A scheme over $\R$ (of finite type) is essentially an algebraic space
that ``can be covered'' (for more details, see
\cite{Toen-Vaquie-CACA}) by some solution spaces.

If one wants to work with equations defined by smooth functions, one has
to refine the $\Legos$ category to a particular kind of algebra.
For $A$ a real algebra and $a\in A$, we denote
$\Spec_{\R}(A):=\Hom_{\Alg_{\R}}(A,\R)$,
and
$$
\begin{array}{ccccc}
\ev_{a}:	&\Spec_{\R}(A)	& \to 		& \R\\
		& x							&\mapsto 	& x(a). 
\end{array}
$$
\begin{definition}
An algebra $A$ is called
\begin{enumerate}
\item smoothly affine if for every $n$, the natural map
$$
\begin{array}{ccc}
\uSpec(\Cc^\infty(\R^n))(A):=\Hom_{\Alg_{\R}}(\Cc^\infty(\R^n),A)&\to& A^n\\
\phi								&\mapsto & \phi(x_{1}),\dots,\phi(x_{n})
\end{array}
$$
is bijective;
\item smoothly closed geometric
if for every $a_{1},\dots,a_{n}\in A^n$ and $f\in \Cc^\infty(\R^n)$,
there exists a unique $a\in A$ such that the function
$$f\circ(\ev_{a_{1}}\times \dots\times\ev_{a_{n}}):\Spec_{\R}(A)\to \R$$
equals $\ev_{a}$.
\end{enumerate}
We denote $\Alg_{sa}$ (respectively, $\Alg_{scg}$) the category of smoothly affine (respectively,
smoothly closed geometric) real algebras.
\end{definition}
The notion of smoothly closed geometric algebra was first introduced by Nestruev \cite{Nestruev}.

The main advantage of smoothly affine algebras over usual algebras is that they allow us
to make sense of the solution space to smooth equations.
If $\{f_{i}:\R^n\to \R\}$ is a family of smooth functions on $\R^n$,
one can look for the solution to $f_{i}(x_{1},\dots,x_{n})=0$ for
$(x_{1},\dots,x_{n})\in A^n$ if $A$ is in $\Alg_{sa}$. Indeed,
$(x_{1},\dots,x_{n})$ corresponds to a morphism
$\phi_{x_{1},\dots,x_{n}}:\Cc^\infty(\R^n)\to A$
and one can evaluate it at the $f_{i}$ to get elements $\phi_{x_{1},\dots,x_{n}}(f_{i})\in A$.
This gives a functor
$$
\begin{array}{ccccc}
\Sol_{f_{i}=0}:& \Alg_{sa}	 &\to&  \Sets\\
			 & A & \mapsto & \{(x_{1},\dots,x_{n})\in A^n,\;\phi_{x_{1},\dots,x_{n}}(f_{i})=0\}.
\end{array}
$$
It is then easy to show that one has a functor isomorphism
$$\Sol_{f_{i}=0}\cong \uSpec(\Cc^\infty(\R^n)/(f_{i})):\Alg_{sa}\to \Sets$$
(we carefully inform the reader that $\Cc^\infty(\R^n)/(f_{i})$ is not smoothly affine in general).

\begin{theorem}
\label{smoothly-affine-basic}
\begin{enumerate}
\item An algebra that is smoothly closed geometric is smoothly affine.
\item Let $\otimes_{sa}$ be the tensor product in $\Alg_{sa}$. Then
$$\Cc^\infty(\R^p)\otimes_{sa}\Cc^\infty(\R^q)\cong \Cc^\infty(\R^{p+q}).$$
\item If $U\subset \R^n$ is an open subset then $\Cc^\infty(U)$ is smoothly closed and geometric.
\item The functor
$$
\begin{array}{cccc}
\Cc^\infty: 	& \Open_{\Cc^\infty} & \to 		&	 \Alg_{sa}\\
			& U					&\mapsto	& \Cc^\infty(U)
\end{array}
$$ is fully faithful with essential image denoted $\Alg_{\Cc^\infty}$. It induces
an equivalence of sites between $(\Open_{\Cc^\infty},\tau)$ and $(\Alg_{\Cc^\infty}^{op},\tau_{Zar})$.
\item If $X$ is a smooth variety, then $\Cc^\infty(X)$ is smoothly closed and geometric.
\item If $\otimes_{scg}$ denotes the tensor product in $\Alg_{scg}$ and $M$ and $N$
are smooth varieties, then
$$\Cc^\infty(M)\otimes_{scg}\Cc^\infty(N)\cong \Cc^\infty(M\times N).$$
\item There is a natural adjoint $\Alg_{\R}\to \Alg_{scg}$
to the forgetful functor $\Alg_{scg}\to \Alg_{\R}$. It is called the smooth geometric closure, and is
denoted $A\mapsto \overline{A}^{scg}$.
\end{enumerate}
\end{theorem}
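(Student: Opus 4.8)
The plan is to reduce every item to a single classical input: for a smooth manifold $M$ (in particular for an open subset of $\R^n$), every $\R$-algebra character of $\Cc^\infty(M)$ is evaluation at a point, so that $\Spec_\R(\Cc^\infty(M))=M$ (see \cite{Nestruev}). It is convenient first to reformulate the two defining conditions. The \emph{geometric} condition says exactly that the evaluation map $A\to \mathrm{Fun}(\Spec_\R(A),\R)$, $a\mapsto \ev_a$, into the algebra of all real-valued functions on $\Spec_\R(A)$ is injective; the \emph{smoothly closed} condition says that for $a_1,\dots,a_n\in A$ and $f\in\Cc^\infty(\R^n)$ the function $f\circ(\ev_{a_1}\times\dots\times\ev_{a_n})$ on $\Spec_\R(A)$ lies again in the image of $\ev$. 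With these reformulations the whole theorem becomes a sequence of bookkeeping arguments built on the point-evaluation lemma.

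For (1) I would prove bijectivity of $\phi\mapsto(\phi(x_1),\dots,\phi(x_n))$ directly. For injectivity, given $\phi,\psi:\Cc^\infty(\R^n)\to A$ agreeing on the coordinates $x_i$ and any $x\in\Spec_\R(A)$, the composite $x\circ\phi$ is a character of $\Cc^\infty(\R^n)$, hence evaluation at the point $(x(a_1),\dots,x(a_n))$ with $a_i=\phi(x_i)=\psi(x_i)$; thus $x(\phi(f))=x(\psi(f))$ for every $x$ and every $f$, and geometricity forces $\phi=\psi$. For surjectivity, given $(a_1,\dots,a_n)$, the smoothly closed geometric hypothesis produces for each $f$ a unique $\phi(f)\in A$ with $\ev_{\phi(f)}=f\circ(\ev_{a_1}\times\dots\times\ev_{a_n})$; since $a\mapsto\ev_a$ is an injective algebra map, one checks that $\phi$ respects sums and products by verifying the identities after applying $\ev$, where they hold pointwise, and clearly $\phi(x_i)=a_i$. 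Item (2) is then pure Yoneda: for smoothly affine $A$ one has $\Hom(\Cc^\infty(\R^{p+q}),A)\cong A^{p+q}$, which equals $A^p\times A^q\cong\Hom(\Cc^\infty(\R^p),A)\times\Hom(\Cc^\infty(\R^q),A)$, so $\Cc^\infty(\R^{p+q})$ (which is smoothly affine by (1) and (3)) represents the coproduct and is therefore $\Cc^\infty(\R^p)\otimes_{sa}\Cc^\infty(\R^q)$.

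Items (3) and (5) are the same statement for open subsets and for manifolds. Geometricity is immediate because point evaluations $\ev_p$, $p\in M$, are themselves characters, so $\ev_a=\ev_b$ forces $a(p)=b(p)$ for all $p$, whence $a=b$. For smooth closedness, given $a_1,\dots,a_m\in\Cc^\infty(M)$ and $F\in\Cc^\infty(\R^m)$, the pointwise composite $a:=F\circ(a_1,\dots,a_m)$ is smooth, and using that every character of $\Cc^\infty(M)$ is some $\ev_p$ one verifies $\ev_a=F\circ(\ev_{a_1}\times\dots\times\ev_{a_m})$ on all of $\Spec_\R(\Cc^\infty(M))$, not merely on $M$. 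Item (6) repeats the Yoneda argument of (2): via $\Spec_\R$, a morphism $\Cc^\infty(M)\to B$ with $B\in\Alg_{scg}$ (hence smoothly affine) is translated into a smooth map $\Spec_\R(B)\to M$, and since $\Spec_\R(\Cc^\infty(M\times N))=M\times N$ is the product of $\Spec_\R(\Cc^\infty(M))$ and $\Spec_\R(\Cc^\infty(N))$, the object $\Cc^\infty(M\times N)$ satisfies the universal property of the coproduct $\Cc^\infty(M)\otimes_{scg}\Cc^\infty(N)$; the elements of $B$ realizing the resulting characters exist precisely because $B$ is smoothly closed geometric.

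Finally, for (4), full faithfulness again comes from the point-evaluation lemma: an algebra map $\phi:\Cc^\infty(V)\to\Cc^\infty(U)$ sends the coordinates $y_j$ of $V\subset\R^m$ to functions $b_j\in\Cc^\infty(U)$, and for each $p\in U$ the character $\ev_p\circ\phi$ is evaluation at $(b_1(p),\dots,b_m(p))$, which therefore lies in $V$; this exhibits $g=(b_1,\dots,b_m):U\to V$ as a smooth map with $\phi=g^*$, unique by geometricity. The delicate point of (4), and the step I expect to be the main obstacle, is the equivalence of sites: I would show that the smooth localization $\Cc^\infty(U)[f^{-1}]$ is isomorphic to $\Cc^\infty(\{f\neq 0\})$, and that, since every open subset of $U$ has the form $\{f\neq 0\}$ for a smooth $f\geq 0$ (any closed set is a zero set, via bump functions), the basic Zariski opens are exactly the open subsets, whence ordinary open covers and $\tau_{Zar}$-covers coincide. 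For (7), I would construct $\overline{A}^{scg}$ explicitly as the smallest smoothly closed geometric subalgebra of $\mathrm{Fun}(\Spec_\R(A),\R)$ containing the image of $A$ under $a\mapsto\ev_a$, and establish the adjunction $\Hom_{\Alg_{scg}}(\overline{A}^{scg},B)\cong\Hom_{\Alg_\R}(A,B)$ directly; checking that this subalgebra is well defined and that unit and counit satisfy the triangle identities is the remaining bookkeeping obstacle.
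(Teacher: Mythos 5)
Your proposal is correct, and where the paper actually writes out a proof it follows the same route. The paper only argues items (1) and (2) explicitly: for (1) it builds the morphism $f\mapsto a_f$ from the smoothly-closed-geometric property exactly as you do (your additional injectivity step --- composing with characters $x\in\Spec_\R(A)$, invoking the fact that every character of $\Cc^\infty(\R^n)$ is a point evaluation, and then using geometricity --- is actually a welcome completion of the paper's terser ``it is uniquely defined by $a_1,\dots,a_n$''); for (2) it gives the identical Yoneda/corepresentability computation $\Hom(\Cc^\infty(\R^p)\otimes_{sa}\Cc^\infty(\R^q),A)\cong A^p\times A^q\cong A^{p+q}\cong\Hom(\Cc^\infty(\R^{p+q}),A)$. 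For items (3)--(7) the paper simply says they are translations of results in \cite{Nestruev}, so your sketches are supplying content the paper outsources; they are the standard arguments and all rest, as you say, on the single input $\Spec_\R(\Cc^\infty(M))=M$. The one place where your plan still has a genuine (though resolvable) gap is item (7): defining $\overline{A}^{scg}$ as ``the smallest smoothly closed geometric subalgebra of $\mathrm{Fun}(\Spec_\R(A),\R)$ containing the image of $A$'' is delicate because the smooth-closedness condition for a subalgebra $B$ is tested against $\Spec_\R(B)$, which varies with $B$ and may strictly contain $\Spec_\R(A)$; one cannot just intersect such subalgebras, and the reference instead builds the closure iteratively by adjoining all elements $f(a_1,\dots,a_n)$ and passing to the colimit. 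Since you flag exactly this as the remaining obstacle, I regard the proposal as sound.
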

\begin{proof}
Let $A$ be a smoothly closed geometric algebra. Let $(a_{1},\dots,a_{n})\in A^n$
be a family of elements in $A^n$ and $f\in\Cc^\infty(\R^n)$ be a smooth function.
By definition, there exists a unique $a_{f}\in A$ such that
$$f\circ (\ev_{a_{1}}\times \dots\times\ev_{a_{n}})=\ev_{a}:\Spec_{\R}(A)\to \R.$$
The map
$$
\begin{array}{ccccc}
\phi_{a_{1},\dots,a_{n}}:	& \Cc^\infty(\R^n)&\to &A\\
						& f & \mapsto & a_{f}
\end{array}
$$
is an algebra morphism by construction, and it is uniquely defined by $a_{1},\dots,a_{n}$.
This gives the fact that the natural map
$$\Hom(\Cc^\infty(\R^n),A)\to A^n$$
is bijective, so that $A$ is smoothly affine.
By definition of the tensor product, if $A$ is smoothly affine, one has a bijection
$$
\begin{array}{ccc}
\Hom_{\Alg_{sa}}(\Cc^\infty(\R^p)\otimes_{sa}\Cc^\infty(\R^q),A) & \cong &
\Hom(\Cc^\infty(\R^p),A)\times \Hom(\Cc^\infty(\R^q),A)\\
&=&A^p\times A^q.
\end{array}
$$
But since $A$ is smoothly affine, one gets
$$A^{p+q}\cong \Hom_{\Alg_{sa}}(\Cc^\infty(\R^{p+q}),A).$$
This shows that $\Cc^\infty(\R^p)\otimes_{sa}\Cc^\infty(\R^q)$ and
$\Cc^\infty(\R^{p+q})$ have the same spectrum functor, so that they
are canonically isomorphic (they fulfil the same universal property).
The other statements are translations of the results in \cite{Nestruev}.
\end{proof}

The above proposition tells us that one can translate statements about
diffeological spaces purely in algebraic terms, if one works with the category
$\Alg_{\Cc^\infty}$ or $\Alg_{scg}$. Moreover, in this setting, one can talk of the solution space
for a given smooth equation.

\begin{definition}
Consider the category $\Alg_{scg}^{op}$, equipped with the Zariski topology
$\tau_{Zar}$ (generated by the localization maps $A\to A[f^{-1}]$).
A smoothly algebraic space is a sheaf $X$ on the site $(\Alg_{scg}^{op},\tau_{Zar})$.
\end{definition}

\begin{definition}
Let $X:\Alg_{\R}\to \Sets$ be an algebraic space. The tangent
bundle of $X$ is defined by
$$\pi:TX(A):=X(A[\epsilon]/(\epsilon^2))\to X(A),$$
where $\pi$ is induced by the projection $A[\epsilon]/(\epsilon^2)\to A$ that
sends $\epsilon$ to $0$.
A vector field on $X$ is a section $\vec v:X\to TX$ of $\pi:TX\to X$.
\end{definition}

\begin{proposition}
Let $X$ be a usual smooth variety, seen as an algebraic space $X_{a}$.
Then a vector field on $X_{a}$ identifies with a usual vector field on $X$.
\end{proposition}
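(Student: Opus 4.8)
The plan is to exploit the representability of $X_a$ and to reduce the functorial notion of vector field to the purely algebraic notion of derivation of $\Cc^\infty(X)$, for which the identification with smooth vector fields is classical. Seen as an algebraic space, $X_a$ is the functor $A\mapsto \Hom_{\Alg_\R}(\Cc^\infty(X),A)$ on $\Alg_\R$, represented by the algebra $\Cc^\infty(X)$ (which is smoothly closed geometric by Theorem \ref{smoothly-affine-basic}, and whose real points $\Hom_{\Alg_\R}(\Cc^\infty(X),\R)$ recover the underlying point set of $X$). First I would unwind the tangent bundle from its definition: $TX_a(A)=X_a(A[\epsilon]/(\epsilon^2))=\Hom_{\Alg_\R}(\Cc^\infty(X),A[\epsilon]/(\epsilon^2))$, with $\pi$ induced by $\epsilon\mapsto 0$.

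Next I would apply the Yoneda lemma. Since $X_a$ is represented by $\Cc^\infty(X)$, a vector field, i.e.\ a morphism of functors $\vec v:X_a\to TX_a$, is the same datum as the single element $\Phi:=\vec v_{\Cc^\infty(X)}(\id)\in TX_a(\Cc^\infty(X))=\Hom_{\Alg_\R}(\Cc^\infty(X),\Cc^\infty(X)[\epsilon]/(\epsilon^2))$, namely the image of the universal point $\id\in X_a(\Cc^\infty(X))$. Writing $\Phi(f)=\Phi_0(f)+\epsilon\,\Phi_1(f)$ and using $\epsilon^2=0$, the fact that $\Phi$ is an algebra morphism forces $\Phi_0$ to be an algebra endomorphism and $\Phi_1$ to be a $\Phi_0$-derivation. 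The section condition $\pi\circ\vec v=\id$ translates, again by Yoneda, into $\Phi_0=\id$, so that $D:=\Phi_1$ is an ordinary $\R$-linear derivation $D:\Cc^\infty(X)\to\Cc^\infty(X)$ and $\Phi(f)=f+\epsilon\,Df$. Conversely every derivation yields such a $\Phi$ and hence, by naturality, a section: for an arbitrary point $x:\Cc^\infty(X)\to A$ one recovers $\vec v_A(x)=[\,f\mapsto x(f)+\epsilon\,x(Df)\,]$. Thus sections of $\pi$ are in natural bijection with $\Der_\R(\Cc^\infty(X))$.

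It then remains to identify $\Der_\R(\Cc^\infty(X))$ with the module $\Gamma(X,TX)$ of usual smooth vector fields on $X$. Here I would invoke the classical theorem that, for a smooth manifold $X$, every $\R$-linear derivation of $\Cc^\infty(X)$ is a local operator and is given by a unique smooth section of the tangent bundle; evaluating $D$ at a real point $x$ (composing with $\ev_x$) produces precisely the tangent vector $f\mapsto (Df)(x)$, which matches the description of $\vec v_\R(x)$ above and confirms that the abstract and the classical objects agree point by point.

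The main obstacle is exactly this last step, the equality $\Der_\R(\Cc^\infty(X))=\Gamma(X,TX)$: it is the only place where the hypothesis that $X$ is a genuine smooth manifold is used in an essential way (through locality of derivations, Hadamard's lemma and partitions of unity), and it is what would fail for a general space. The remaining verifications — that $\Phi_1$ is a derivation, that the section condition amounts to $\Phi_0=\id$, and that the resulting bijection is natural in $A$ — are formal consequences of the Yoneda lemma together with the algebra structure of the dual numbers, and I would treat them as routine.
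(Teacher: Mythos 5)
Your proof is correct and follows essentially the same route as the paper's: both directions rest on evaluating the natural transformation at the universal point $\id\in X_{a}(\Cc^\infty(X))$ (resp.\ writing $f\mapsto f+\epsilon\, (Df)$ for a derivation $D$), together with the classical identification of derivations of $\Cc^\infty(X)$ with smooth vector fields. Your explicit appeal to the Yoneda lemma to get both directions of the bijection at once is a slight streamlining of the paper's two separate verifications, but it is not a different argument.
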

\begin{proof}
Recall that vector fields on $X$ identify with derivations of $\Cc^\infty(X)$.
Let $\frac{\partial}{\partial \vec v}$ be such a derivation. Then if $A$ is smoothly affine,
the map
$$\vec v:X_{a}(A)\to TX_{a}(A)$$
that sends $f:\Cc^\infty(X)\to A$ to
$$
\vec v(f)=f+\epsilon.f\circ \frac{\partial}{\partial \vec v}:\Cc^\infty(X)\to A[\epsilon]/(\epsilon^2)
$$
is well-defined because its image is additive and
$$
\begin{array}{ccc}
f(gh)+\epsilon.f(\frac{\partial(gh)}{\partial \vec v}) & = &
f(g).f(h)+\epsilon.f(\frac{\partial g}{\partial \vec v}.h+g.\frac{\partial h}{\partial \vec v})\\
&=&f(g).f(h)+\epsilon. f(\frac{\partial g}{\partial \vec v}).f(h)+f(g).f(\frac{\partial h}{\partial \vec v})\\
&=&[f(g)+\epsilon.f(\frac{\partial g}{\partial \vec v})].[f(h)+\epsilon.f(\frac{\partial h}{\partial \vec v})].
\end{array}
$$
Conversely, given a vector field $\vec v:X_{a}\to TX_{a}$, and since
$\Cc^\infty(X)$ is smoothly affine by theorem \ref{smoothly-affine-basic},
one can compute the value of $\vec v$ on the universal point
$$\id_{X}\in X_{a}(\Cc^\infty(X)).$$
This gives a morphism
$$\vec v(\id_{X})=\id_{X}+\epsilon.D:\Cc^\infty(X)\to \Cc^\infty(X)[\epsilon]/(\epsilon^2)$$
where $D:\Cc^\infty(X)\to \Cc^\infty(X)$ is a derivation.
We have thus shown the equivalence between the two notions in this case.
\end{proof}

\subsection{A simple example of differential calculus on spaces}
\label{simple-example}
We will now give a simple but generic example of differential calculus on spaces
of fields. The same computational method works whenever the category $\Legos$ has
a well-behaved notion of differential form (for example, for algebraic spaces or superspaces).

Consider the lagrangian variational problem of Newtonian
mechanics, with fiber bundle
$$\pi:C=\R^3\times [0,1]\to [0,1]$$
whose sections are smooth maps $x:[0,1]\to \R^3$, which represent
a material point moving in $\R^3$. The space of histories is
given by fixing a pair of starting and ending points for trajectories
$\{x_{0},x_{1}\}$, i.e.,
$$H=\{x\in \uGamma(M,C),\;x(0)=x_{0},\;x(1)=x_{1}\}.$$
More precisely, if $U$ is a parameterizing open subset of some $\R^n$,
$$
H(U)=
\{x(t,u)\in \uGamma(M,C)(U)\cong \Hom([0,1]\times U,\R^3),
\;x(0,u)=x_{0},\;x(1,u)=x_{1}\}.
$$
If $\langle.,\rangle$ is the standard metric on $\R^3$ and $V:\R^3\to \R$ is
a given ``potential'' function, one defines
\footnote{We remark that for this
to be well-defined, one needs to put a domination condition on $x(t,u)$ to apply Lebesgue's
dominated derivation under the integral. We will implicitely add this condition
everywhere in this paper.}
the action functional morphism
$S:H\to \R$ by
$$
\begin{array}{ccccc}
S_{U}: 	& H(U) & \to & \Cc^\infty(U)=\R(U)\\
		& x(t,u) & \mapsto &\int_{M}\frac{1}{2}m\|\partial_{t}x(t,u)\|^2+V(x(t,u))dt
\end{array}
$$
for $U$ a parameterizing open subset of some $\R^n$.
The differential of $S$ is a one form $\omega$ on $\uGamma(M,C)$, which is formalized
mathematically as a family of one forms $\{x^*\omega\}$ on $U$ for
every morphism $x:U\to \uGamma(M,C)$, which are compatible with pull-back
along commutative diagrams
$$
\xymatrix{
U\ar[r]^x & \uGamma(M,C)\\
V\ar[ur]_{x'}\ar[u]^{f} &}
$$
meaning that $f^*x^*\omega=(x')^*\omega$.
Concretely, if $x:M\times U\to C$ is given and $\vec u$ is a vector field on $U$, one has
$$
\langle d_{x}S_{U},\vec u\rangle=
\frac{\partial}{\partial \vec u}\left[\int_{M}\frac{1}{2}m\|\partial_{t}x(t,u)\|^2-V(x(t,u))dt\right],
$$
so that
$$
\langle d_{x}S_{U},\vec u\rangle=
\int_{M}\frac{1}{2}m\frac{\partial}{\partial \vec u}\|\partial_{t}x\|^2-\frac{\partial}{\partial \vec u}V(x)dt
$$
and one gets
$$
\langle d_{x}S_{U},\vec u\rangle=
\int_{M}m\Langle\partial_{t}x,\frac{\partial}{\partial \vec u}\partial_{t}x\Rangle-
\Langle d_{x}V(x),\frac{\partial x}{\partial \vec u}\Rangle dt.
$$
By permuting the $\vec u$ and $t$ derivative and integrating by parts in $t$, one gets
$$
\langle d_{x}S_{U},\vec u\rangle=
\int_{M}\Langle-m\partial_{t}^2x-d_{x}V(x),\frac{\partial x}{\partial \vec u}
\Rangle dt+m\left[\Langle\partial_{t}x,\frac{\partial x}{\partial \vec u}\Rangle\right]_0^1.
$$
Since $x(0,u)=x_{0}$ and $x(1,x)=x_{1}$ are constant in $u$,
the boundary term vanishes and one gets finally
$$
\langle d_{x}S_{U},\vec u\rangle=
\int_{M}\Langle-m\partial_{t}^2x-d_{x}V(x),\frac{\partial x}{\partial \vec u}
\Rangle dt.
$$
The condition that $d_{x}S=0$ is then equivalent to the usual Newton equation,
so that the $U$-valued points of the space of trajectories are
$$T(U)=\{x\in H(U)|d_{x}S_{U}=0\}=\{x\in H(U)|m\partial_{t}^2x(t,u)=-V'(x(t,u))\}.$$

We remark that the above computation is completely standard in physics,
and we just gave a mathematical language to formulate it. Usually,
one uses functional analytic methods here, but they do not generalize properly
to the super (fermionic) case, contrary to ours. Moreover, the main input of
this mathematical formulation is that the spaces of histories $H$ and
trajectories $T$ are exactly of the same nature as the spaces of parameters
$M$ and of configurations $C$.

\subsection{Various types of spaces used in physics}
The main objects of classical field theory are spaces of functions
$$\uHom(X,Y)$$
between two given spaces $X$ and $Y$.
As explained in the previous sections, one can see these spaces
as spaces similar to $X$ and $Y$, if one embeds all of them in a
category of sheaves on a Grothendieck site $(\Legos,\tau)$.
The choice of the site will then depend on the needs of the situation:
if one needs only differential forms and $X$ and $Y$ are usual smooth varieties,
the diffeological setting will be sufficient and
$\Legos=\Open_{\Cc^\infty}$.
However, to have a notion of vector field on spaces, one will need a category
of $\Legos$ given by some algebras, e.g.
$\Legos=\Alg_{\R}^{op}$
for algebraic spaces. If one wants to work with solution spaces to smooth equations,
one can also use $\Legos=\Alg_{scg}^{op}$.

To describe supervarieties and spaces of morphisms between them, one can
not really avoid working with legos given by categories $\Legos=\Alg_{s,\R}^{op}$
opposite to superalgebras (see the forthcoming sections).
This is a good reason to work from the start with algebras also in the classical smooth case.

We will also see that similar functorial and scheme theoretic methods can be applied to
local functional calculus, and to the study of solution spaces to
nonlinear partial differential equations. In this case, one will have
$\Legos=\Alg_{\Dc}^{op}$, the category of $\Dc$-algebras (see the forthcoming section).

Since the resolution of an equation (or more generally of a problem)
is sometimes obstructed, one will also work with more general types of space,
of a homotopical nature, that encode the obstructions to the resolution of
the given equations. We recall from Toen and Vezzosi's work \cite{Toen-HAG-Essen}
the idea of the construction of these spaces that allow a geometric treatment of obstruction theory.

These are essentially, in their most general form, given by some homotopy classes of functors
$$X:(\Legos,\tau,W)^{op}\to (C,W_{C})$$
where $(\Legos,W)$ is a model category (category with a notion of weak equivalences)
equipped with the homotopical analog of
a Grothendieck topology $\tau$ and $(C,W_{C})$ is a model category. The definition of
the homotopy equivalence relation on these functors involves not only the week equivalences
$W$ and $W_{C}$ in $\Legos$ and $C$ but also the topology $\tau$.
We do not go into the details of their quite technical definition, but illustrate it by physical examples.

Let $\Delta$ be the category of ordered sets of the form $[1,\dots,n]$ with
increasing morphisms. If $C$ is a category, we denote $C_{\Delta}$ the
category of functors $\Delta^{op}\to C$ and call it the category of simplicial
objects in $C$. The first homotopical generalization of spaces one can do
is to consider as building blocks a usual category $\Legos$ (for example the category
$\Open_{\Cc^\infty}$ of building blocks for diffeologies)
and to consider simplicial presheaves
$$X:\Legos^{op}\to \Sets_{\Delta}=:\SSets$$
as spaces.
This gives a version of the theory of stacks, which are necessary
for studying parameter spaces for objects up to isomorphisms.
The main physical example of the usefulness of this setting is gauge theory:
if $G$ is a group, the space of principal $G$-bundles on $M$ is to be considered
as (the homotopy class of) a simplicial presheaf
$$\Bun_{G}(M):\Open_{\Cc^\infty}^{op}\to \SSets$$
whose coarse moduli space is simply the set of isomorphism classes of 
principal $G$-bundles on $M$ parameterized by $U$.
There is no physical reason for choosing one particular $G$-bundle
on $M$ and this explains why it is natural to work directly with the universal
principal bundle $E=P\to\Bun_{G}(M)$ in gauge field theory.

The obstruction theory to infinitesimal deformation of stacks can not
be dealt with properly without sinking a little bit further in homotopical methods. 
Derived geometry is a homotopical generalization of spaces which uses as building
blocks simplicial categories $\Legos_{\Delta}$ and as spaces simplicial presheaves
$$X:\Legos_{\Delta}^{op}\to \Sets_\Delta=:\SSets.$$
These are useful for studying derived moduli spaces and necessary
for making deformation and obstruction theory (for example the cotangent complex)
functorial. For example, if one starts from usual scheme theory where $\Legos$
is the category of algebras, one gets as homotopical counterpart a geometry where
building blocks are given by the differential graded category of differential graded
(or simplicial) algebras.
These kinds of space naturally appear in the quantization of gauge theory,
as one can guess from the BRST-BV method for quantization
(see \cite{Henneaux-Teitelboim} and the articles of Stasheff et al., for example
\cite{fulp-lada-stasheff}).

We refer to Toen and Vezzosi's long opus \cite{Toen-Vezzosi-HAG-II}
for theory and mathematical applications of these homotopical methods.
We will talk a bit along the way about their physical applications.

In all these context, given two spaces $X$ and $Y$ constructed from
a given category of building blocks $\Legos$, one can easily define
the corresponding mapping space $\underline{\Hom}(X,Y)$ as the sheaf on $\Legos$
associated to the presheaf of sets
$$U\mapsto \Hom_{\Legos/U}(X\times U,Y\times U).$$
The homotopical settings need more care (resolutions) but are essentially similar.

In lagrangian relativistic physics, the space of (bosonic) Feynman history is simply
the space of all sections $s:M\to E$ of a given bundle $E$
on spacetime $M$. Since they are infinite dimensional, such functional
spaces are not so easily studied using the observable viewpoint of
physics (i.e. functions on them) because it is hard to find what
is a natural and general notion of function (i.e. of an observable) on
such a space. We remark also that the bundle $E$ need not be linear
(interaction particles are given by connections, which form an affine
bundle)
and can also be a moduli stack (as for example the moduli stack of principal
bundles, which is the natural setting for gauge theory).

The punctual viewpoint is much closer to the physicists' viewpoint
and allows us to consider completely canonical geometrical structures on functional
spaces.

We now turn to an abstract version of geometry that allows us to
treat classical spaces (bosonic) and superspaces (fermionic) with a unique
and concise language.

\subsection{Generalized algebras and relative geometry}
\label{relative-geometry}
So far, we have been working with the $\Legos$ categories $\Alg_{\R}^{op}$ and
$\Alg_{scg}^{op}$ of usual and smoothly closed geometric algebras. The spaces used
by physicists in variational calculus (for example superalgebras or $\Dc$-algebras)
are based on a generalized notion of algebra,
which is usually defined as a monoid in a given symmetric monoidal
(sometimes model) category.

We recall the definition of a scheme relative to a given symmetric monoidal
category, due to Toen and Vaqui\'e \cite{Toen-Vaquie}. This work is of course
very much inspired by Grothendieck's viewpoint of geometry. This approach
to supergeometry was also emphasized in Deligne's lecture notes
\cite{notes-on-supersymmetry}.
These are spaces whose building blocks are commutative monoids in the given monoidal category. They can also be generalized to the homotopical situation of symmetric
monoidal model categories, as explained by Toen and Vezzosi
\cite{Toen-Vezzosi-HAG-II}. We will need these generalizations but prefer
to restrict ourselves to the classical case since it is sufficiently instructive.

These methods are necessary for systematically studying spaces of fermionic-valued
fields $\psi:M\to \Pi S$ from the punctual viewpoint, like for example electrons
on usual spacetime, because these spaces are superspaces.
They also cannot be avoided if one wants to study superalgebras geometrically
(for example if $M$ is a supermanifold, as in supersymmetric field theories), and the
use of monoidal categories greatly simplifies the computations since it
allows one to completely forget about signs and to work with superspaces
as if these were usual spaces.

Let $K$ be a base field of characteristic $0$.
\begin{definition}
A symmetric monoidal category over $K$ is a tuple $(\Cc,\otimes,\1,\un,\as,\com)$ composed of
\begin{enumerate}
\item an abelian $K$-linear category $\Cc$,
\item a $K$-linear bifunctor $\otimes:\Cc\times \Cc\to \Cc$,
\item an object $\1$ of $\Cc$ called the unit object,
\item for each object $A$ of $\Cc$, two unity isomorphisms $\un^r_{A}:A\otimes \1\to A$
and $\un^l_{A}:\1\otimes A\to A$,
\item for each triple $(A,B,C)$ of objects of $\Cc$, an associativity isomorphism
$$\as_{A,B,C}:A\otimes (B\otimes C)\to (A\otimes B)\otimes C,$$
\item for each pair $(A,B)$ of objects of $\Cc$, a commutativity isomorphism
$$\com_{A,B}:A\otimes B\to B\otimes A,$$
\end{enumerate}
that are supposed to fulfil (for more details, we refer the reader to the article on monoidal
categories on wikipedia)
\begin{enumerate}
\item a pentagonal axiom for associativity isomorphisms,
\item a compatibility of unity and associativity isomorphisms,
\item an hexagonal axiom for compatibility between the commutativity and
the associativity isomorphisms, and
\item the idempotency of the commutativity isomorphism:
$\com_{A,B}\circ \com_{B,A}=\id_{A}$.
\end{enumerate}
The tensor category is called closed if it has internal homomorphisms, i.e.,
if for every pair $(B,C)$ of objects of $\Cc$, the functor
$$A\mapsto \Hom(A\otimes B,C)$$
is representable by an object
$\uHom(B,C)$ of $\Cc$.
\end{definition}

The main example of a closed commutative tensor category is the category
$\Vect_{K}$ of $K$-vector spaces. The idea for defining differential calculus
on algebras in an abstract symmetric monoidal category is to formalize it
for usual algebras using only the tensor structure and morphisms in $\Vect_{K}$.

Consider now the category whose objects are graded vector spaces
$$V=\oplus_{k\in \Z}V^k$$
and whose morphisms are linear maps respecting the grading.
We denote it $\Vect_{g}$. A graded vector space restricted to
degree $0$ and $1$ is called a supervector space, and we denote
$\Vect_s$ the category of supervector spaces.
These are abelian and even $K$-linear categories.
If $a\in V^k$ is a homogeneous element of a graded vector space $V$,
we denote $\deg(a):=k$ its degree.
The tensor product of two graded vector spaces $V$ and $W$
is the usual tensor product of the underlying vector spaces equipped
with the grading
$$(V\otimes W)_{k}=\oplus_{i+j=k}V_{i}\otimes W_{j}.$$
There is a natural homomorphism object in $\Vect_{g}$, defined
by
$$\uHom(V,W)=\oplus_{n\in \Z}\Hom^n(V,W)$$
where the degree $n$ component $\Hom_{n}(V,W)$ is the set of all linear maps
$f:V\to W$ such that $f(V^k)\subset W^{k+n}$.
It is an internal homomorphism object meaning that for every $X$,
there is a natural bijection
$$\Hom(X,\uHom(V,W))\cong \Hom(X\otimes V,W).$$
The tensor product of two internal homomorphisms $f:V\to W$ and $f':V\to W'$
is defined using the Koszul sign rule on homogeneous components. We have
$$(f\otimes g)(v\otimes w)=(-1)^{\deg(g)\deg(v)}f(v)\otimes g(w).$$

The tensor product is associative with unit $\mathbf{1}=K$ in degree $0$,
the usual associativity isomorphisms of $K$-vector spaces.
The main difference with the tensor category $(\Vect,\otimes)$ of usual
vector space is given by the non-trivial commutativity isomorphisms
$$c_{V,W}:V\otimes W\to W\otimes V$$
defined by extending by linearity the rule
$$v\otimes w\mapsto (-1)^{\deg(v)\deg(w)}w\otimes v.$$
One thus obtains a symmetric monoidal category $(\Vect_g,\otimes)$
which is moreover closed, i.e., has internal homomorphisms.

Let $A$ be an associative unital ring.
Let $(\Mod_{dg}(A),\otimes)$ be the monoidal category of
graded (left) $A$-modules, equipped with a linear map $d:C\to C$ of degree $-1$
such that $d^2=0$, with graded morphisms that commute with $d$
and tensor product $V\otimes W$ of graded vector spaces, endowed with the
differential $d:d_{V}\otimes \id_{W}+\id_{V}\otimes d_{W}$
(tensor product of graded maps, i.e., with a graded Leibniz rule),
and also the same anticommutative commutativity constraint.
This is also a closed symmetric monoidal category.

\begin{definition}
Let $(\Cc,\otimes)$ be a symmetric monoidal category over $K$.
An algebra in $\Cc$ is a triple $(A,\mu,\nu)$ composed of
\begin{enumerate}
\item an object $A$ of $\Cc$,
\item a multiplication morphism $\mu:A\otimes A\to A$, and
\item a unit morphism $\nu:\1\to A$,
\end{enumerate}
such that for each object $V$ of $\Cc$, the above maps fulfil
the usual associativity, commutativity and unit axiom with
respect to the given associativity, commutativity and unity isomorphisms
in $\Cc$. We denote $\Alg_{\Cc}$ the category of algebras in $\Cc$.
\end{definition}

In particular, a superalgebra is an algebra
in the monoidal category $(\Vect_s,\otimes)$.
Recall that the commutativity of a superalgebra
uses the commutativity isomorphism of the tensor category $\Vect_s$,
so that it actually means a graded commutativity: $(A,\mu)$ is commutative
if $\mu\circ \com_{A,A}=\mu$.
We denote $\Alg_{s}$ the category of real superalgebras.

We now define, following Toen and Vaqui\'e \cite{Toen-Vaquie}, the notion of scheme
on $\Cc$. Given an algebra $A$ in $\Cc$, one defines the corresponding
affine scheme by its ``functor of points''
$$
\begin{array}{cccc}
\uSpec(A): 	& \Alg_{\Cc} 	& \to 	& \Sets\\
		& C			&\mapsto&\uSpec(A)(C):=\Hom(A,C)
\end{array}
$$
and Yoneda's lemma implies that there is a natural bijection
$$\Hom(A,B)\overset{\sim}{\to} \Hom(\uSpec(B),\uSpec(A))$$
between algebra morphisms and affine schemes morphisms.
\begin{definition}

An algebra morphism $f:A\to B$ (or the corresponding morphism of functors
$\uSpec(B) \to \uSpec(A)$) is called a standard Zariski open if it is
a flat and finitely presented monomorphism:
\begin{enumerate}
\item (monomorphism) for every algebra $C$, $\uSpec(B)(C)\subset \uSpec(A)(C)$,
\item (flat) the base change functor $\_\otimes_{A}B:A-\Mod\to B-\Mod$ is exact, and
\item (finitely presented) if $\uSpec(B/A)$ denotes $\uSpec(B)$ restricted to $A$-algebras,
$\uSpec(B/A)$ commutes with filtered inductive limits. 
\end{enumerate}
A family of morphisms $\{f_{i}:A\to A_{i}\}_{i\in I}$ is called a Zariski covering if
\begin{enumerate}
\item for each $i$, $A\to A_{i}$ is flat,
\item there exists a finite subset $J\subset I$ such that the functor
$$\prod_{i\in J}\_\otimes_{A}A_{j}:A-\Mod\to \prod_{j\in J}A_{j}-\Mod$$
is conservative, and
\item every $f_{i}:A\to A_{i}$ is Zariski open.
\end{enumerate}
The Grothendieck topology generated by the Zariski coverings is called the Zariski
topology. If $\uSpec(A):\Alg\to \Sets$ is an affine scheme and $U\subset X$
is a subfunctor, $U$ is called a Zariski open if it is the image of a morphism
$$\coprod_{i\in I}\uSpec(A_{i})\to \uSpec(A)$$
induced by standard Zariski open subschemes $\uSpec(A_{i})\subset \uSpec(A)$. 
If $X:\Alg_{\Cc}\to \Sets$ is any functor, a Zariski open in $X$ is
a subfunctor $U\subset X$ such that for all $\uSpec(A)\to X$,
$$U\times_{X}\uSpec(A)\subset \uSpec(A)$$
is a Zariski open morphism.
We say that $X$ is a (relative) scheme on $\Cc$ if
\begin{enumerate}
\item it is a sheaf for the Zariski topology, and
\item it has a covering by Zariski open subfunctors.
\end{enumerate}
\end{definition}

Using this definition, it is not hard to generalize
the basic functorial results on schemes in EGA to relative schemes on $\Cc$.
This also allows us to define group schemes in $\Cc$ in a completely transparent way.

\subsection{Relative differential calculus}
\label{diff-calc-sym-monoidal}
We essentially follow Lychagin \cite{Lychagin93} and also Krasilsh'chik and Verbovetsky
\cite{Krasilshchik-Verbovetsky-1998} here.
The generalization to the homotopical setting is done in
Toen and Vezzosi \cite{Toen-Vezzosi-HAG-II} for derivations.

From now on, let $\Cc$ be a symmetric monoidal category over $K$ and $(A,\mu)$
be an algebra in $\Cc$. We will now define differential invariants of $(A,\mu)$.

A left $A$-module is an object $M$ of $\Cc$ equipped with an external
multiplication map $\mu^l_{M}:A\otimes M\to M$. If $A$ is a commutative algebra
in $\Cc$, one can put on $M$ a right $A$-module structure
$\mu^r:M\otimes A\to M$ defined by $\mu^r_{M}:=\mu\circ \com_{M,A}$.
We will implicitly use this right $A$-module structure in the
formulas below.

\begin{definition}
Let $M$ be an $A$-module. A morphism $D:A\to M$ in $\Cc$
is called a derivation if
$$D(fg)=fD(g)+D(f)g$$
which means more precisely that the morphism
$D\circ \mu:A\otimes A\to M$ is equal to the sum
$\mu^l_{M}\circ (\id_{A}\otimes D)+\mu^r_{M}\circ (D\otimes \id_{A})$.
\end{definition}

The description of $\Der(A,M)$ shows that it can be seen as a subobject $\uDer(A,M)$ of
the internal homomorphisms $\uHom(A,M)$ in $\Cc$ defined
by the kernel of the morphism
$\uHom(A,M)\to \uHom(A\otimes A,M)$ defined by
$$D\mapsto D\circ \mu-\mu^l_{M}\circ (\id_{A}\otimes D)+\mu^r_{M}\circ (D\otimes \id_{A}).$$
We remark that for $\Cc=\Vect_s$ this expression can be expressed as a graded Leibniz rule
by definition of the right $A$-module structure on $M$ above.


The representing object for the functor $\Der:A-\Mod(\Cc)\to \Cc$ is the $A$-module
$\Omega^1_{A}$ of (Kaehler) $1$-forms.
One can restrict the derivation functor on $A$-modules to a subcategory $\Cc_{g}$ of $\Cc$,
and this allows us to define various other types of differential form on $A$,
called admissible differential forms for $\Cc_{g}$.
If $A$ is smoothly closed geometric, one usually uses geometric modules, since they
give back usual differential forms in the case $A=\Cc^\infty(X)$ for $X$ a manifold.
We recall their definition from \cite{Nestruev}.
\begin{definition}
Let $A$ be an $\R$-algebra. An $A$-module $P$ is called geometric
if
$$\cap_{x\in \Spec_{\R}(A)}\mfk_{x}M=0,$$
where $\mfk_{x}$ denotes the ideal of functions that annihilate at $x$,
i.e., the kernel of the map $x:A\to \R$.
We denote $\Mod_{g,A}$ the category of geometric modules.
\end{definition}

\begin{theorem}
Let $X$ be a smooth variety. Admissible differential forms for
the category $\Mod_{g,A}$ identify with usual differential forms
on $X$. In particular, if $U\subset \R^n$ is a lego, there is
an identification
$$\Omega^1(U)\cong \Gamma(U,T^*U)$$
where $T^*U:=U\times (\R^n)^*$ is the cotangent space on $U$.
\end{theorem}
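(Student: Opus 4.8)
The plan is to identify the representing object of the derivation functor restricted to geometric modules with the \emph{geometrization} of the module of Kähler differentials, and then to compute this geometrization explicitly for $A=\Cc^\infty(X)$ by a local Hadamard-lemma argument.

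First I would recall that the unrestricted functor $M\mapsto \Der(A,M)$ is represented by $\Omega^1_A$, and observe that the assignment $N\mapsto N_g:=N/\bigcap_{x\in\Spec_{\R}(A)}\mfk_x N$ is left adjoint to the inclusion $\Mod_{g,A}\hookrightarrow A\text{-}\Mod$. Indeed $N_g$ is geometric, and any $A$-module map $\phi\colon N\to M$ with $M$ geometric satisfies $\phi(\mfk_x N)\subseteq\mfk_x M$, hence kills the invisible submodule $\bigcap_x\mfk_x N$ since $\bigcap_x\mfk_x M=0$. Consequently, for every geometric module $M$,
$$\Der(A,M)\cong \Hom_A(\Omega^1_A,M)\cong \Hom_A((\Omega^1_A)_g,M),$$
so the admissible differential forms for $\Mod_{g,A}$ are represented by the geometric module $(\Omega^1_A)_g$. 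It then remains to prove $(\Omega^1_{\Cc^\infty(X)})_g\cong \Gamma(X,T^*X)$.

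Next I would construct the natural map $\pi\colon\Omega^1_A\to\Gamma(X,T^*X)$ sending $df$ to the usual differential, and check surjectivity (locally $\Gamma(U,T^*U)$ is generated over $\Cc^\infty(U)$ by $dt_1,\dots,dt_n$, which lie in the image). I would verify that the target is geometric: writing $\mfk_x M$ for the submodule of forms $\sum g_i\omega_i$ with $g_i(x)=0$, one has $\omega\in\mfk_x\Gamma(X,T^*X)$ exactly when the value $\omega(x)\in T^*_xX$ vanishes, whence $\bigcap_x\mfk_x\Gamma(X,T^*X)=0$. Since the target is geometric and $\pi$ is surjective, $\pi$ factors as a surjection $(\Omega^1_A)_g\twoheadrightarrow\Gamma(X,T^*X)$, and the theorem reduces to the identity $\ker\pi=\bigcap_x\mfk_x\Omega^1_A$.

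The hard part is this kernel computation, which is precisely where smoothness enters (the purely algebraic Kähler module of $\Cc^\infty(\R^n)$ is enormous, and only the geometrization recovers the cotangent sheaf). The inclusion $\bigcap_x\mfk_x\Omega^1_A\subseteq\ker\pi$ is immediate, since elements of $\mfk_x\Omega^1_A$ have vanishing value at $x$. For the reverse inclusion I would work in a chart with coordinates $t_1,\dots,t_n$ centered at $x$ and apply Hadamard's lemma: writing $\alpha=\sum_i g_i\,df_i$ with $\pi(\alpha)(x)=0$, split $g_i=g_i(x)+(g_i-g_i(x))$ and expand $f_i-f_i(x)=\sum_j (t_j-x_j)h_{ij}$ with $h_{ij}(x)=\partial_j f_i(x)$. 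Modulo $\mfk_x\Omega^1_A$ this rewrites $\alpha$ as $\sum_j\big(\sum_i g_i(x)h_{ij}\big)\,dt_j$, whose coefficients vanish at $x$ precisely because $\pi(\alpha)(x)=\sum_j\big(\sum_i g_i(x)\partial_j f_i(x)\big)\,dt_j|_x=0$; hence $\alpha\in\mfk_x\Omega^1_A$ for every $x$. Combining both inclusions gives $(\Omega^1_A)_g\cong\Gamma(X,T^*X)$. Specializing to $X=U$ yields the displayed identification $\Omega^1(U)\cong\Gamma(U,T^*U)$, where $\Gamma(U,T^*U)$ is free over $\Cc^\infty(U)$ on $dt_1,\dots,dt_n$; the passage from charts to general $X$ is routine via a partition of unity.
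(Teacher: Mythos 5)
Your proof is correct, and it is worth noting that the paper itself gives no argument here: its ``proof'' is the single line ``See Nestruev \cite{Nestruev}, theorem 1.43.'' What you have written is essentially a self-contained reconstruction of the standard argument behind that citation: (i) the geometrization functor $N\mapsto N_g=N/\bigcap_x\mfk_xN$ is left adjoint to the inclusion of geometric modules, so the representing object for derivations into geometric modules is $(\Omega^1_A)_g$; (ii) the comparison map to $\Gamma(X,T^*X)$ is surjective onto a geometric module; (iii) its kernel is exactly the invisible submodule, by Hadamard's lemma. All three steps are sound, and the key computation --- reducing $\alpha=\sum_i g_i\,df_i$ modulo $\mfk_x\Omega^1_A$ to $\sum_j c_j\,dt_j$ with $c_j(x)$ equal to the $j$-th component of $\pi(\alpha)(x)$, then absorbing $c_j\,dt_j=(c_j-c_j(x))\,dt_j$ into $\mfk_x\Omega^1_A$ --- is exactly right. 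The one place you should be more explicit is the local-to-global passage in step (iii): for a general manifold $X$ the coordinates $t_j$ and the Hadamard factors $h_{ij}$ live only on a chart, so the identity $f_i-f_i(x)=\sum_j(t_j-x_j)h_{ij}$ does not hold in $A=\Cc^\infty(X)$ as written; one must first split $\alpha=\rho^2\alpha+(1-\rho^2)\alpha$ with $\rho$ a bump function supported in the chart and equal to $1$ near $x$ (the second piece already lies in $\mfk_x\Omega^1_A$), and then run the argument with the globally defined functions $\rho t_j$, $\rho h_{ij}$. The same device is needed to check surjectivity of $\pi$ and geometricity of $\Gamma(X,T^*X)$ beyond the case $X=U\subset\R^n$; your closing appeal to a partition of unity covers this, but since it is the only point where smoothness of the manifold (as opposed to an open set) genuinely intervenes, it deserves a sentence rather than a wave.
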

\begin{proof}
See Nestruev \cite{Nestruev}, theorem 1.43.
\end{proof}

We now define the differential operators on $A$ following Lychagin in \cite{Lychagin93}.
Let $M$ and $N$ be two left $A$-modules. The internal homomorphisms
$\uHom(M,N)$ are naturally equipped with two $A$-module structures
$$
\mu^l:A\otimes \uHom(M,N)\to \uHom(M,N)\textrm{ and }
\mu^r:\uHom(M,N)\otimes A\to \uHom(M,N).
$$
Define the morphism $\delta^l:A\otimes\uHom(M,N)\to \uHom(M,N)$ by
$$\delta^l=\mu^{l}-\mu^{r}\circ \com_{\uHom(M,N),A}$$
and define $\delta^r:\uHom(M,N)\otimes A\to \uHom(M,N)$ by
$\delta^r=-\delta^l\circ \com_{A,\uHom(M,N)}$.
Let $\delta^l_{n}:A^{\otimes n}\otimes \uHom(M,N)\to \uHom(M,N)$
be the $n$-tuple composition of $\delta^l$.

\begin{definition}
The module of differential operators of order $n$ from $M$ to $N$ of order $n$ is the
subobject $\uDiff_{n}(M,N)$ of $\uHom(M,N)$ given by intersecting
the kernel of $\delta^l_{n}$ with $\uHom(M,N)$ in $A^{\otimes n}\otimes \uHom(M,N)$.
If $M$ is a fixed object of $A-\Mod(\Cc)$, the representing object for the functor
on $\uDiff_{n}(M,.):A-\Mod(\Cc)\to \Cc$ is called the jet module of $M$, and is denoted
$J^n(M)$.
\end{definition}

In the monoidal category of usual vector spaces, this gives back the usual definition
of differential operators and algebraic jet modules. To get smooth jet modules,
one has to work with the subcategory of geometric modules on smoothly closed
geometric algebras.

\begin{definition}
Let $A$ be a superalgebra. The heart $|A|$ of $A$ is the quotient of
$A$ by the ideal generated by the odd part $A^1$.
One calls $A$ smoothly closed geometric if its heart $|A|$ is smoothly closed geometric
and its odd part $A^1$ is a geometric module over $|A|$.
The corresponding category is denoted $\Alg_{s,scg}$.
\end{definition}

\begin{definition}
The affine superspace of dimension $n|m$ with values in $A$ is
the superspace with values on a superalgebra given by
$$\A^{n|m}(A)=(A^0)^n\oplus (A^1)^m.$$
\end{definition}

This superspace is affine. More precisely, one has
$$\A^{n|m}=\uSpec(\R[x_{1},\dots,x_{n};\theta_{1},\dots,\theta_{m}])$$
with $x_{i}$ commuting variables and $\theta_{i}$ anticommuting variables
in the algebraic setting and
$$\A^{n|m}=\uSpec(\Cc^\infty(\R^n)[\theta_{1},\dots,\theta_{m}])$$
in the smoothly closed geometric case.

The superspaces mostly used by physicists are spaces modeled on the category
$\Alg_{s,scg}$.

\subsection{Spaces of histories and non-local observables}
We now have introduced all the mathematical technology
necessary to define the spaces of histories of a field
theory and the notion of a non-local observable properly.

\begin{definition}
Let $\pi:C\to M$ be a morphism of supervarieties modeled on the category
$\Legos=\Alg_{s}^{op}$ or $\Alg_{s,scg}^{op}$, with the topology $\tau_{Zar}$.
A space of histories for $\pi$ is a subspace
$H$ of the space $\uGamma(M,C)$ whose points with values in a superalgebra
$A$ are given by
$$\uGamma(M,C)(A):=\Gamma_{\uSpec(A)}(M\times \uSpec(A),C\times \uSpec(A)).$$
A non-local observable is a morphism
$$F:H\to B$$
with values in another space $B$ of the same type.
\end{definition}

Abstract observables are just a weak mathematical version of what physicists
call observable in DeWitt's covariant field theory \cite{Dewitt}.
The space of values of a non-local observable is usually simply $B=\R=\A^{1|0}$.

Recall that $M$ and $C$ are defined as functors
$\underline{C},\underline{M}:\Aff_{s,\R}^{op}\to \Sets$
that are sheaves for the Zariski topology.
The basic cases used to construct field theories in experimental physics are following two.
\begin{itemize}
\item Bosonic field theory: $M$ is a four-dimensional (non super) variety and $C$ is a usual fiber bundle over $M$ (for example a connection bundle or a vector bundle). For example,
the interaction particles are usually given by connections and the Higgs boson is a
scalar field (usual function with $C=M\times \R$). 
\item Fermionic field theory: $M$ is a $4$-dimensional variety and $C$ is an odd (spinorial)
vector bundle over $M$.
The algebra of ``functions'' on $C$ is $\Cc^\infty(C):=\Gamma(M,\wedge C^*)$, i.e.
``functions'' on $C$ that are smooth on $M$ and antisymmetric on the fibers of $C\to M$.
\end{itemize}
In theoretical physics, there are many more possibilities. For example:
\begin{itemize}
\item Fermionic particles in spacetime: $M=\R^{0|1}$ and $C=M\times X$ with $X$ a four-dimensional
Lorentz manifold. Dirac's first quantization of the electron rests on this sound basis. 
\item Supersymmetric sigma models: $M$ is a supervariety obtained by adding odd coordinates
to a given classical variety $|M|$ that is usually a Riemann surface and $C$ is of the form
$M\times X$ with $X$, say a Calabi-Yau manifold. This is the starting point of superstring theory
and of many interesting mathematical applications (Gromov-Witten theory, mirror symmetry,
Topological quantum field theory).
\end{itemize} 

In all cases, if $A$ is a superalgebra, one defines the points of $C$ and $M$
with values in $A$ as
$\underline{C}(A)=\Hom_{s\Alg}(\Cc^\infty(C),A)$ and
$\underline{M}(A)=\Hom_{s\Alg}(\Cc^\infty(M),A)$.
We remark for example that for $C$ an odd vector bundle, a function
$$f:C\to \R=\A^{1|0}$$
is simply an element of $\Gamma(M,\wedge^{2*}C)$.

In the case of a bosonic field theory, $C$ and $M$ being usual spaces, one can
restrict $\uGamma(M,C)$ to $\Alg_{\R}$, $\Alg_{sa}$, or even to $\Open_{\Cc^\infty}$,
depending on the needs. Concretely, the functor
$\uGamma(M,C):\Open_{\Cc^\infty}\to \Sets$
sends an open subset $U$ in some $\R^n$ to the families
$s:U\times M\to C$ of sections of $C\to M$ parameterized by $U$.
This is the original diffeological approach of Souriau to spaces of maps,
and this approach is very close to the way physicists compute.

An $\R$-valued observable is then a natural transformation
$$F:\uGamma(M,C)(.)\to \Hom_{\Cc^\infty}(.,\R)$$
that sends parameterized families $s:U\times M\to C$ of sections to
parameterized real numbers, i.e., to functions $r:U\to \R$. In particular,
for $U=\{pt\}$, we associate to each section $s:M\to C$ (field) a
real number $F(s)\in \R$. 

One can also consider observable of evaluation at a point $x\in M$,
which sends $s:M\times U\to C$ to $s(x,.):U\to C$, denoted
$$\ev_{x}:\uGamma(M,C)\to C.$$
It actually takes its values in the fiber $C_{x}$ of $C$ at $x$.
If we suppose that $C$ is a supervector bundle, we can make sense of
the standard observables whose mean values give correlation
functions, by taking the formal product $\ev_{x}\ev_{y}$
for $x$ and $y$ two evaluation observables. It takes values in
the space algebra
$$\Lambda_{\infty,C}:=\Sym_{s}^\bullet(\oplus_{x\in M}C_{x}),$$
where the functor $\Sym_{s}$ denotes the symmetric algebra
taken in the super sense.
 
The relation of our description of observables with the ``polynomial observables''
of Costello \cite{Costello} is the following.
The functorial version of Costello's observables is given by the space algebra
$$\Oc_{C}:=\oplus_{n\geq 0} \Hom_{\A^1-\Mod}(\Gamma(M^n,\boxtimes^n C),\A^1).$$
If $A\in \Oc_{C}$ is an element, one defines the corresponding $\A^1$-valued observable
$$
\begin{array}{ccccc}
A: 	& \uGamma(M,C)& \to 	 & \A^1\\
	& \phi			&\mapsto& \sum_{n\geq 0}A_{n}(\phi\boxtimes \dots\boxtimes \phi).
\end{array}
$$
It can be useful to work directly with $A\in \Oc_{C}$ as a multilinear map
$$A:\oplus_{n\geq 0}\Gamma(M,C)^{\otimes n}\to \A^1,$$
where the tensor product and direct sums are made in the category of space $\A^1$-modules.

Another type of observable is given by local functionals, which we will study in
detail in section \ref{local-functional-calculus}:
if $J^\infty(C)$ denotes the space of infinite jets of sections
of $C\to M$ with coordinates $(x,u,u_{\alpha})$ representing
formal derivatives of sections, a function $L(x,u,u_{\alpha})\in \Cc^\infty(J^\infty(C),\R)$
with $|\alpha|\leq k$ defines a horizontal differential form $\omega=Ld^nx$
on $J^\infty C$. If $s:M\to C$ is a section of $C$, its infinite jet is a section
$j_{\infty}s:M\to J^\infty C$ and the pull-back of $\omega$ along this section
is an $n$ form on $M$ that can be integrated. If one fixes
a compact domain $K\subset M$, one defines
\footnote{Remark that this only defines a partial function, whose domain can be defined
by Lebesgue's domination condition on its integrand. This condition is functorial in $U$ and defines
a subspace of the space of sections. This remark applies to all integrals written in this paper.}
an observable $S_{L,K}$ on sections
with support in $K$, with values in $\R$, called the lagrangian action by
$$
\begin{array}{cccc}
S_{L}: & \uGamma_{K}(M,C)(U) & \to & \Cc^\infty(U,\R)\\
&[s:U\times M\to C] & \mapsto  & [u\mapsto \int_{M}j_{\infty}(s(u,.))^*\omega].
\end{array}
$$

More generally, if $Y\subset J^\infty C$ corresponds to a differential equation
(see the forthcoming section on $\Dc$-schemes), there is a natural period pairing
$$H_{k,c}(M)\times \bar{H}^k_{dR}(Y)\to \underline{\Hom}(\underline{Y},\underline{\R})$$ 
(where $\underline{Y}\subset \uGamma(M,C)$ is the subspace defined by $Y$)
given by the same formula
$$
(\alpha,\omega)\mapsto
\left[\phi(u,x)\mapsto \int_{\alpha}(j_{\infty}\phi(u,.))^*\omega\right]$$
whose image is called the space of secondary functions on $\underline{Y}$
with values in $\R$.

It is in the case of fermionic field theory that one sees the real input
of the functorial viewpoint of spaces of histories. Indeed, the usual
notion of a point of a supervariety is not well behaved and one really
has to use points valued in superalgebras. If $M$ is usual spacetime
and $C\to M$ is an odd fiber bundle, the usual duality between spaces and algebras
implies that points of the space of fermionic
histories $\uGamma(M,C)$ with values in $\{pt\}=\Spec(\R)$
are given by retractions $s^*:\Cc^\infty(C)\to \Cc^\infty(M)$ of the
map $\pi^*:\Cc^\infty(M)\to \Cc^\infty(C)$ that induces the projection
$$
\pi:\underline{C}=\Hom_{s\Alg}(\Cc^\infty(C),.)\to \Hom_{s\Alg}(\Cc^\infty(M),.)=\underline{M}.
$$
Since $\Cc^\infty(C)$ is partially antisymmetric and $\Cc^\infty(M)$ is commutative,
such retractions will have to be trivial on odd coordinates. If one replaces
$\{pt\}$ by the superspectrum of an odd algebra $\Spec(A)$, the parameterized retractions
$s^*:\Cc^\infty(C)\to \Cc^\infty(M)\otimes A$ can be much more general.
DeWitt in his enormous book \cite{Dewitt} has chosen to use the completion
of the free odd algebra on a countable number of generators
$$A=\Lambda_{\infty}:=\widehat{\wedge^*\R^{(\N)}}$$
to study fermionic fields,
which is sufficient for most of the computations needed with fermionic
functional integrals, but there is no physical reason to choose this algebra or
another one.
In any case, an $\A^{1,1}_{\R}$-valued abstract observable
$$F:\uGamma(M,C)(.)\to \A^{1,1}_{\R}(.)$$
will associate to each retraction $s^*$ of $\pi^*$ a real supernumber parameterized
by a given superalgebra $A$, which is the same as an element of $A$ because
$\A^{1,1}_{\R}(A)=A$. This means that a fermionic observable is essentially determined
(and this is how DeWitt formalizes it) by its $\Lambda_{\infty}$ values
$$
F_{\Lambda_{\infty}}:
\uGamma(M,C)(\Lambda_{\infty})\to \A^{1,1}(\Lambda_{\infty})=\Lambda_{\infty}.
$$

To convince the reader, let us give a further example of a trajectory with fermionic
parameter, which is at the basis of Dirac's quantization of the electron.

\begin{proposition}
Let $X$ be a smooth variety, seen as a superspace. Denote
$\pi:C=X\times \R^{0|1}\to \R^{0|1}=M$ the natural projection map.
It is the configuration space for the so-called fermionic particle.
There is a natural isomorphism of functors on $\Alg_{s,scg}$
$$\uGamma(M,C)\cong \uHom(\R^{0|1},M)\cong \uSpec(\Omega^*(X)).$$
In particular, the real-valued functions on $\uGamma(M,C)$, given by morphisms
$$\uGamma(M,C)\to \R=\A^{1|0},$$
identify with even differential forms in $\Omega^{2*}(X)$.
\end{proposition}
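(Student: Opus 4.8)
The plan is to establish the two isomorphisms in turn, always at the level of functors of points on $\Alg_{s,scg}$, and then to read off the description of functions by duality. The first isomorphism is formal: since $\pi=\pr_{\R^{0|1}}$ is the projection off a product, a section of $\pi$ is the same datum as a map into the fibre $X$. Concretely, for $A\in\Alg_{s,scg}$ one has $\uGamma(M,C)(A)=\Gamma_{\uSpec(A)}(\R^{0|1}\times\uSpec(A),\,X\times\R^{0|1}\times\uSpec(A))$, and discarding the tautological $\R^{0|1}$-component identifies a relative section with a morphism $\R^{0|1}\times\uSpec(A)\to X$, i.e.\ with an element of $\uHom(\R^{0|1},X)(A)$. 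This bijection is natural in $A$, so $\uGamma(M,C)\cong\uHom(\R^{0|1},X)$.

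The core of the proof is the second isomorphism, which I would obtain through the duality between superspaces and their function algebras. Recalling from the description of $\A^{n|m}$ that $\Cc^\infty(\R^{0|1})=\R[\theta]$ with $\theta$ odd and $\theta^{2}=0$, and using Theorem \ref{smoothly-affine-basic} on tensor products, one has $\R^{0|1}\times\uSpec(A)=\uSpec(A[\theta])$ with $A[\theta]=A\oplus A\theta$. Therefore
$$\uHom(\R^{0|1},X)(A)=\Hom_{s\Alg}(\Cc^\infty(X),A[\theta]).$$
Writing such a morphism as $\phi=\phi_{0}+\phi_{1}\theta$, parity forces $\phi_{0}$ into $A^{0}$ and $\phi_{1}$ into $A^{1}$; expanding $\phi(fg)=\phi(f)\phi(g)$ and using $\theta^{2}=0$ (with the even elements $\phi_{0}(g)$ commuting past $\theta$) shows that $\phi_{0}:\Cc^\infty(X)\to A^{0}$ is an algebra morphism and that $\phi_{1}$ is a derivation along $\phi_{0}$, that is $\phi_{1}(fg)=\phi_{0}(f)\phi_{1}(g)+\phi_{1}(f)\phi_{0}(g)$.

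It remains to recognise the pair $(\phi_{0},\phi_{1})$ as a morphism $\Omega^*(X)\to A$. Here I would invoke the representing property of the module $\Omega^1(X)$ of Kähler $1$-forms together with the theorem (stated after the definition of geometric modules) identifying admissible forms for geometric modules with genuine forms on $X$; this gives $\Omega^*(X)=\Sym^{\bullet}_{s}(\Omega^1(X))$, the super-symmetric (hence exterior) algebra over $\Cc^\infty(X)$ on the odd module $\Omega^1(X)$, with $\Omega^1(X)$ corepresenting derivations via $f\mapsto df$. Consequently a superalgebra map $\Omega^*(X)\to A$ is exactly an algebra map $\Cc^\infty(X)\to A^{0}$ (its restriction to $\Omega^{0}$) together with a derivation $\Cc^\infty(X)\to A^{1}$ (its composite with $d$), which is precisely the data $(\phi_{0},\phi_{1})$. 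Naturality in $A$ then yields $\uHom(\R^{0|1},X)\cong\uSpec(\Omega^*(X))$. The hard part is exactly this step: in the smooth setting one cannot use the naive algebraic Kähler module but must pass to geometric/admissible forms so that $\Omega^1(X)$ is the module of ordinary $1$-forms and the freeness $\Omega^*(X)=\Sym^{\bullet}_{s}\Omega^1(X)$ holds over $\Cc^\infty(X)$; granting this, Step 2 is only parity bookkeeping and Step 1 is pure formalism.

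Finally, for the assertion on functions I would use $\R=\A^{1|0}=\uSpec(\Cc^\infty(\R))$ with $\Cc^\infty(\R)$ purely even. By the Yoneda bijection $\Hom_{s\Alg}(\Cc^\infty(\R),\Omega^*(X))\cong\Hom(\uSpec(\Omega^*(X)),\uSpec(\Cc^\infty(\R)))$ and the smoothly-affine property (the heart being $\Cc^\infty(X)$ and the positive-degree part nilpotent), the morphisms $\uGamma(M,C)\cong\uSpec(\Omega^*(X))\to\A^{1|0}$ are $\Hom_{s\Alg}(\Cc^\infty(\R),\Omega^*(X))=(\Omega^*(X))^{0}=\Omega^{2*}(X)$, the even-degree differential forms.
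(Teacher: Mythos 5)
Your proposal is correct and follows essentially the same route as the paper: the first isomorphism from the triviality of the bundle, the decomposition $\phi=\phi_{0}+\phi_{1}\theta$ with $\phi_{0}$ an even algebra morphism and $\phi_{1}$ an odd derivation along $\phi_{0}$ (using $\theta^{2}=0$), and the identification of that derivation with a module map $\Omega^1(X)\to A^1$ via the geometricity of $A^1$, hence with a superalgebra map $\Omega^*(X)\to A$. Your final paragraph on even forms just makes explicit a step the paper leaves to the reader, using the same duality.
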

\begin{proof}
Let $A$ be a superalgebra. The first isomorphism follows from
the trivial bundle structure of $\pi:C\to M$. The space
$\uHom(\R^{0|1},X)$ is defined by
$$\uHom(\R^{0|1},X)(A):=\Hom_{\Alg_{s,g}}(\Cc^\infty(X),\R[\theta]\otimes A)$$
where $\theta$ is an anticommuting variable. We remark that if
$A=\R$, we get the usual set of morphisms $\Hom(\R^{0|1},X):=\Hom_{\Alg_{s,g}}(\Cc^\infty(X),\R[\theta])$, which identifies, since $\theta$ is odd, with $\Hom_{\Alg_{s,g}}(\Cc^\infty(X),\R)=X$.
The main advantage of adding an odd parameter algebra $A$ is that the even part
of $\R[\theta]\otimes A$ is $A^0\oplus\R.\theta\otimes A^1$.
Let $f^*:\Cc^\infty(X)\to \R[\theta]\otimes A$ be a morphism. Then
$f^*$ can be written as $f_{0}+\theta f_{1}$, where $f_{0}:\Cc^\infty(X)\to A^0$ is
a usual morphism and $f_{1}:\Cc^\infty(X)\to A^1$ is a derivation (because $\theta^2=0$)
compatible with the $\Cc^\infty(X)$-module structure on $A^1$ induced by $f_{0}$
and the multiplication in $A$. Since $A^1$ is a geometric module,
such a derivation can be identified with a
$\Cc^\infty(X)$-module morphism $\Omega^1(X)\to A^1$.
This identifies with a superalgebra morphism $\Omega^*(X)\to A$.
\end{proof}

Let us explain why this fermionic particle is so important in physics.
In Dirac's first quantization of the electron, one considers the Clifford algebra
$\Cliff(TX,g)$ for a given lorentzian metric $g$ on $X$ as a quantization of the fermionic particle
$$x:\R^{0|1}\to X,$$
because the Clifford algebra has a filtration $F$ such that
$$\gr^\bullet_{F}\Cliff(TX,g)\cong \Omega^*_{X}$$
and the commutator in the Clifford algebra corresponds by this isomorphism
to the (odd) Poisson bracket on
$$\Omega^*_{X}\overset{g^{-1}}{\overset{\sim}{\to}} \wedge^*\Theta_{X}$$
Its state space is the space $\Gamma(M,S)$ of sections of a spinor bundle $S$ for $g$,
supposed to exist.

\section{Local observables and differential schemes}
\label{local-functional-calculus}
We will now give an account of the differential calculus on local
action functionals. A more general differential calculus on
local functionals, called secondary differential calculus,
was developed in the smooth setting by Vinogradov
\cite{Vinogradov}.
It is essentially a homotopical version of the following.

\subsection{Partial differential equations and $\Dc$-algebras}
Many spaces of functions used in physics are described by
partial differential equations. To study spaces of solutions of
partial differential equations from a functorial viewpoint,
one needs to know what kind of algebraic structure is necessary
to write down a given partial differential equation.

Let $\pi:C=\R\times \R\to \R=M$ be the trivial bundle.
A polynomial partial differential equation on the sections
of this bundle is a polynomial expression
$$F(t,x,\partial_{t}x,\dots,\partial_{t}^nx)=0$$
that involves the parameter $t\in M$, a section $x:M\to C$ and
its derivatives. To write down the same expression in a more
general algebraic structure, one needs:
\begin{itemize}
\item an $\Oc_{M}=\R[t]$-algebra $A$,
\item with a compatible action of $\partial_{t}$, and
\item a morphism $\Oc_{C}=\R[t,x]\to A$.
\end{itemize}
One can also see this datum as a $\Dc_{M}$-module $A$ (where $\Dc_{M}$
is the algebra of differential operators on $M$, generated by the action
of $\partial_{t}$ and $\Oc_{M}$ on $\Oc_{M}$), equipped with a multiplication
$$\mu:A\otimes_{\Oc_{M}}A\to A$$
that is $\Dc_{M}$-linear, for the $\Dc_{M}$-module structure on the tensor
product given by Leibniz's rule. One moreover needs a morphism
$$\Oc_{C}\to A$$
to make sense of $x\in A$. If such a datum is given, one writes
the solution space to $F(t,\partial_{t}^i x)=0$ with values in $A$ as
$$\Sol_{F=0}(A):=\{x\in A,F(t,\partial_{t}^i x)=0\}.$$
One sees here a strong similarity with the space of solutions of
a polynomial equation, described in \ref{polynomial-solution}.
The point is that the equation $F$ itself lives in the universal
$\Oc_{C}\otimes_{\Oc_{M}}\Dc_{M}$-algebra, that is the jet
$\Dc_{M}$-algebra $\Jet(\Oc_{C})=\R[t,x_{i}]$ with action of
$\partial_{t}$ given by $\partial_{t}x_{i}=x_{i+1}$.

So one gets a perfect analogy between polynomials and polynomial partial
differential equations given by
\vskip0.5cm
\begin{tabular}{|c|c|c|}\hline
Equation	& Polynomial & Partial differential\\\hline
Formula 	& $P(x)=0$	 & $F(t,\partial^\alpha x)=0$\\
Naive variable & $x\in \R$ & $x\in \Hom(\R,\R)$\\
Algebraic structure & commutative unitary ring $A$   & $\Dc_{M}$-algebra $A$\\
Free structure & $P\in \R[x]$	& $F\in \Jet(\Oc_{C})$\\
Solution space & $\{x\in A,F(x)=0\}$ & $\{x\in A,F(t,\partial^\alpha x)=0\}$\\\hline
\end{tabular}

To work with non-polynomial smooth partial differential equations, one has
to work in the category $\Alg_{scg}$ of smoothly closed geometric algebras.
In this setting, the jet algebra is the smooth geometric closure of the polynomial
jet algebra and the equation $F$ lives in
$$\Cc^\infty(J^\infty C):=\overline{\Jet(\Oc_{C})}^{scg}.$$

\subsection{$\Dc$-modules}
We recall some properties of the categories of $\Dc$-modules that can mostly be found
in \cite{Kashiwara-Schapira-sheaves-on-manifolds}, \cite{Kashiwara-D-modules} and \cite{Schneiders}
for most of them, except the compound tensor structure, which was defined in
\cite{Beilinson-Drinfeld-Chiral}.

Let $M$ be a smooth variety of dimension $n$ and $\Dc$ be the algebra of
differential operators on $M$.
We recall that, locally on $M$, one can write an operator $P\in \Dc$
as a finite sum
$$P=\sum_{\alpha}a_{\alpha}\partial^\alpha$$
with $a_{\alpha}\in \Oc_{M}$,
$$\partial=(\partial_{1},\dots,\partial_{n}):\Oc_{M}\to \Oc_{M}^n$$
the universal derivation and $\alpha$ some multi-indices.

To write down the equation $Pf=0$ with $f$ in an $\Oc_{M}$-module $\Sc$,
one needs to define the universal derivation $\partial:\Sc\to \Sc^n$. This is equivalent
to giving $\Sc$ the structure of a $\Dc$-module. The solution space
of the equation with values in $\Sc$ is then given by
$$\Sol_{P}(\Sc):=\{f\in \Sc,\;Pf=0\}.$$
We remark that
$$\Sol_{P}:\Mod(\Dc)\to \Vect_{\R_{M}}$$
is a functor that one can think of as representing the space of
solutions of $P$. Denote $\Mc_{P}$ the cokernel of the $\Dc$-linear map
$$\Dc\overset{.P}{\longrightarrow}\Dc$$
given by right multiplication by $P$.
Applying the functor $\Homc_{\Mc(\Dc)}(.,\Sc)$ to the exact sequence
$$\Dc\overset{.P}{\longrightarrow}\Dc\longrightarrow \Mc_{P}\to 0,$$
we get the exact sequence
$$0\to \Homc_{\Mod(\Dc)}(\Mc_{P},\Sc)\to \Sc\overset{P.}{\longrightarrow} \Sc,$$
which gives a natural isomorphism
$$\Sol_{P}(\Sc)=\Homc_{\Mod(\Dc)}(\Mc_{P},\Sc).$$
This means that the $\Dc$-module $\Mc_{P}$ represents
the solution space of $P$, so that $\Dc$-modules are a convenient
setting for the functor of point approach to linear partial differential equations.

We remark that it is even better to consider the derived solution space
$$\R\Sol_{P}(\Sc):=\R\Homc_{\Mod(\Dc)}(\Mc_{P},\Sc)$$
because it also encodes information on the inhomogeneous
equation
$$Pf=g.$$

Recall that the subalgebra $\Dc$ of $\End_{\R}(\Oc)$
is generated by left multiplication by functions in $\Oc_{M}$
and by the derivation induced by vector fields in $\Theta_{M}$.
There is a natural right action of $\Dc$ on the $\Oc$-module $\Omega^n_{M}$
by
$$\omega.\partial=-L_{\partial}\omega$$
with $L_{\partial}$ the Lie derivative.

There is a tensor product in the category $\Mod(\Dc)$ given by
$$\Mc\otimes \Nc:=\Mc\otimes_{\Oc}\Nc$$
where the $\Dc$-module structure on the tensor product is given
on vector fields $\partial\in \Theta_{M}$ by Leibniz's rule
$$\partial(m\otimes n)=(\partial m)\otimes n+m\otimes (\partial n).$$
There is also an internal homomorphism $\Homc(\Mc,\Nc)$ given by
the $\Oc$-module $\Homc_{\Oc}(\Mc,\Nc)$ equipped with the action
of derivations $\partial\in \Theta_{M}$ by
$$\partial(f)(m)=\partial(f(m))-f(\partial m).$$
The functor
$$\Mc\mapsto \Mc^r:=\Omega^n_{M}\otimes_{\Oc}\Mc$$
induces an equivalence of categories between the categories
$\Mod(\Dc)$ and $\Mod(\Dc^{op})$ of left and right $\Dc$-modules
whose quasi-inverse is
$$\Nc\mapsto \Nc^\ell:=\Homc_{\Oc_{M}}(\Omega^n_{M},\Nc).$$

\begin{definition}
Let $\Sc$ be a right $\Dc$-module.
The de Rham functor with values in $\Sc$ is the functor
$$\DR_{\Sc}:\Mod(\Dc)\to \Vect_{\R_{M}}$$
that sends a left $\Dc$-module to
$$\DR_{\Sc}(\Mc):=\Sc\Lotimes_{\Dc}\Mc.$$
The de Rham functor with values in $\Sc=\Omega^n_{M}$
is denoted $\DR$ and simply called the de Rham functor.
One also denotes $\DR^r_{\Sc}(\Mc)=\Mc\Lotimes_{\Dc}\Sc$
if $\Sc$ is a fixed left $\Dc$-module and $\Mc$ is a varying
right $\Dc$-module, and $\DR^r:=\DR^r_{\Oc}$.
\end{definition}

\begin{proposition}
\label{universal-de-Rham}
The natural map
$$
\begin{array}{ccc}
\Omega^n_{M}\otimes_{\Oc}\Dc&\to& \Omega^n_{M}\\
\omega\otimes Q&\mapsto&\omega Q
\end{array}
$$
extends to a $\Dc^{op}$-linear quasi-isomorphism
$$\Omega^*_{M}\otimes_{\Oc}\Dc[n]\overset{\sim}{\to}\Omega^n_{M}.$$
\end{proposition}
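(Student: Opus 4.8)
The statement is local on $M$, since being a quasi-isomorphism of complexes of sheaves can be checked stalkwise; so the plan is to fix coordinates $x_{1},\dots,x_{n}$ on a chart, with dual vector fields $\partial_{1},\dots,\partial_{n}$ spanning $\Theta_{M}$, and to prove exactness there. First I would make the complex explicit: after the shift, $\Omega^*_{M}\otimes_{\Oc}\Dc[n]$ sits in cohomological degrees $-n,\dots,0$, with the term $\Omega^p_{M}\otimes_{\Oc}\Dc$ placed in degree $p-n$, and the natural augmentation is the right-$\Dc$-linear map $\epsilon:\Omega^n_{M}\otimes_{\Oc}\Dc\to \Omega^n_{M}$, $\omega\otimes Q\mapsto \omega\cdot Q$, using the right $\Dc$-action $\omega\cdot\partial=-L_{\partial}\omega$ recalled above. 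The differential I would use is $d(\omega\otimes P)=d\omega\otimes P+\sum_{i}(dx_{i}\wedge\omega)\otimes\partial_{i}P$; rather than check coordinate-independence by hand, I would identify it intrinsically as the de Rham differential of the tautological flat connection on $\Dc$ viewed as a left $\Dc$-module ($\nabla_{\partial}P=\partial P$). This makes $d$ manifestly globally defined and right-$\Dc$-linear (right multiplication on $P$ commutes with the left action), and $d^2=0$ follows from flatness, i.e. from $\partial_{i}\partial_{j}=\partial_{j}\partial_{i}$ together with $dx_{i}\wedge dx_{j}=-dx_{j}\wedge dx_{i}$.

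Next I would verify that $\epsilon$ is a morphism of complexes, i.e. $\epsilon\circ d=0$ on $\Omega^{n-1}_{M}\otimes_{\Oc}\Dc$. Since $\epsilon$ and $d$ are both right $\Dc$-linear and $\omega\otimes P=(\omega\otimes 1)\cdot P$, it suffices to treat $P=1$, where the claim reduces to the top-degree identity $\sum_{i}L_{\partial_{i}}(dx_{i}\wedge\omega)=d\omega$ for an $(n-1)$-form $\omega$; this is immediate from the coordinate formula for $d$ and from the fact that $L_{\partial_{i}}$ acts on a top form by differentiating its coefficient.

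The heart of the argument is the exactness of $0\to\Dc\to \Omega^1_{M}\otimes_{\Oc}\Dc\to\cdots\to \Omega^n_{M}\otimes_{\Oc}\Dc\overset{\epsilon}{\to}\Omega^n_{M}\to 0$, which I would obtain from the order filtration $F_{\bullet}\Dc$. Setting $G_{m}(\Omega^p_{M}\otimes_{\Oc}\Dc):=\Omega^p_{M}\otimes_{\Oc}F_{m+p}\Dc$ turns $G_{\bullet}$ into an exhaustive, degreewise bounded-below filtration by subcomplexes (the main term of $d$ raises the order by one, the $d\omega$ term by zero, and both land in $G_{m}$). On the associated graded one has $\gr^{F}\Dc=\Sym_{\Oc}\Theta_{M}$, and left multiplication by $\partial_{i}$ becomes multiplication by its symbol $\xi_{i}\in\Theta_{M}$; the $d\omega$ term dies, so $\gr d$ is the Koszul-type differential $\omega\otimes s\mapsto \sum_{i}(dx_{i}\wedge\omega)\otimes\xi_{i}s$ on $\Omega^p_{M}\otimes_{\Oc}\Sym^{m+p}\Theta_{M}$.

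I expect the crux to be proving that each graded piece is acyclic away from the relevant corner. The plan is to write down the explicit contracting homotopy $h(\omega\otimes s)=\sum_{i}(\iota_{\partial_{i}}\omega)\otimes\partial_{\xi_{i}}s$ (with $\partial_{\xi_{i}}$ the derivation of $\Sym_{\Oc}\Theta_{M}$ dual to $\xi_{i}$) and to compute, using the two Euler identities $\sum_{i}dx_{i}\wedge\iota_{\partial_{i}}\omega=p\,\omega$ on $p$-forms and $\sum_{i}\xi_{i}\partial_{\xi_{i}}s=(m+p)\,s$ on $\Sym^{m+p}$, that $\gr d\circ h+h\circ \gr d=(n+m)\cdot \id$ on the weight-$m$ graded complex. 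Hence this complex is contractible whenever $m\neq -n$, while for $m=-n$ it collapses to $\Omega^n_{M}$ placed in degree $0$ (the only surviving term $\Omega^n_{M}\otimes_{\Oc}\Sym^0\Theta_{M}$). Since the filtration is exhaustive and bounded below in each degree, the associated spectral sequence converges and degenerates at $E_{1}$, forcing $H^0=\Omega^n_{M}$ and $H^{\neq 0}=0$, the isomorphism $H^0\cong\Omega^n_{M}$ being exactly the one induced by $\epsilon$. As a sanity check and alternative route, one can recognize the whole complex as the side-changing transform (tensoring by $\omega_{M}=\Omega^n_{M}$) of the Spencer resolution $\Dc\otimes_{\Oc}\wedge^{-\bullet}\Theta_{M}\to\Oc_{M}$ of the left $\Dc$-module $\Oc_{M}$, whose acyclicity rests on the same Koszul computation.
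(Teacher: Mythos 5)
The paper states Proposition \ref{universal-de-Rham} without proof, deferring implicitly to the standard $\Dc$-module references it cites (Kashiwara--Schapira, Schneiders), so there is no in-text argument to compare against; your write-up supplies exactly the classical proof, namely the Spencer/Koszul resolution of $\Oc_{M}$ transformed by side-changing, and it is correct. The key computations all check out: the augmentation identity reduces, via right $\Dc$-linearity, to $\sum_{i}L_{\partial_{i}}(dx_{i}\wedge\omega)=d\omega$ for $\omega\in\Omega^{n-1}_{M}$, which holds with the sign convention $\omega\cdot\partial=-L_{\partial}\omega$ recalled in the paper; the order filtration $G_{m}=\Omega^{p}_{M}\otimes F_{m+p}\Dc$ is indeed preserved by $d$ with the zero-order term $d\omega\otimes P$ dropping to $G_{m-1}$, so $\gr d$ is the Koszul differential on $\Omega^{p}_{M}\otimes\Sym^{m+p}\Theta_{M}$; and the homotopy identity $\gr d\circ h+h\circ\gr d=(n+q-p)\,\id=(n+m)\,\id$ on $\Omega^{p}_{M}\otimes\Sym^{q}\Theta_{M}$ (the cross terms of the two Euler identities contributing the $n$) gives contractibility for $m\neq -n$ and leaves only $\Omega^{n}_{M}\otimes\Sym^{0}\Theta_{M}$ in degree $0$ for $m=-n$. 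Since the filtration is exhaustive and bounded below in each cohomological degree, the spectral-sequence conclusion is legitimate, and $\epsilon$ restricts to the identity on $\Omega^{n}_{M}\otimes F_{0}\Dc$, so it does induce the isomorphism on $H^{0}$. The only point worth making explicit if this were written out in full is the verification that the coordinate formula for $d$ glues (which you correctly sidestep by identifying it as the de Rham differential of the left $\Dc$-module $\Dc$); as stated, the argument is complete.
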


We will see that in the super setting, this proposition can be taken
as a definition of the right $\Dc$-modules of volume forms, the so called
Berezinian.

\begin{proposition}
Let $\Sc$ be a right coherent $\Dc$-module and $\Mc$ be a coherent left $\Dc$-module.
There is a natural quasi-isomorphism
$$\R\Sol_{\Db(\Mc)}(\Sc):=\R\Hom(\Db(\Mc),\Sc)\cong\DR_{\Sc}(\Mc),$$
where $\Db(\Mc):=\R\Hom(\Mc,\Dc)$ is the $\Dc$-module dual of $\Mc$.
\end{proposition}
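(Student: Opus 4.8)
The plan is to realize the asserted quasi-isomorphism as the derived version of a tautological evaluation pairing, and then to prove it is an isomorphism by reducing to the free case via a finite free resolution. First I would construct the natural map. For any ring, a right module $\Sc$ and a left module $\Mc$ carry a canonical pairing
$$\Sc\otimes_{\Dc}\Mc\longrightarrow \Hom_{\Dc^{op}}\big(\Hom_{\Dc}(\Mc,\Dc),\Sc\big),\qquad s\otimes m\mapsto \big(\phi\mapsto s\cdot \phi(m)\big).$$
One checks directly that this is well defined on the tensor product (using left $\Dc$-linearity of $\phi$, so that $sr\otimes m$ and $s\otimes rm$ have the same image) and that each image is a morphism of right $\Dc$-modules (using the right $\Dc$-structure on $\Db(\Mc)=\Hom_{\Dc}(\Mc,\Dc)$ inherited from right multiplication on the target $\Dc$). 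Here $\R\Hom(\Db(\Mc),\Sc)$ is interpreted as $\R\Hom_{\Dc^{op}}(\Db(\Mc),\Sc)$, since both arguments are right $\Dc$-modules. Passing to derived functors yields a natural morphism
$$\DR_{\Sc}(\Mc)=\Sc\Lotimes_{\Dc}\Mc\longrightarrow \R\Hom_{\Dc^{op}}(\Db(\Mc),\Sc),$$
and the entire content of the proposition is that this morphism is a quasi-isomorphism.

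Next I would reduce to the free case. Since $\Dc$ is coherent and of finite global dimension on the smooth variety $M$, the coherent left $\Dc$-module $\Mc$ admits, locally, a bounded resolution $\Pc_\bullet\to \Mc$ by finite free $\Dc$-modules; in other words $\Mc$ is \emph{perfect}. Using such a resolution, $\Db(\Mc)=\R\Hom_{\Dc}(\Mc,\Dc)$ is computed by the bounded complex $\Hom_{\Dc}(\Pc_\bullet,\Dc)$ of finite free right $\Dc$-modules, so both sides of the comparison map are computed termwise from $\Pc_\bullet$. For the base case $\Mc=\Dc$ one has $\Db(\Dc)=\Dc$ as a right module, $\R\Hom_{\Dc^{op}}(\Dc,\Sc)=\Sc$ and $\Sc\Lotimes_{\Dc}\Dc=\Sc$, and the evaluation pairing is the identity; by additivity the same holds for every finite free $\Mc=\Dc^{n}$.

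Finally, dévissage concludes the argument: the comparison map is a morphism of complexes which is a termwise isomorphism when $\Mc$ is replaced by each free term $\Pc_k$, hence induces a quasi-isomorphism of the associated total complexes, and therefore a quasi-isomorphism for $\Mc$ itself. Naturality in both $\Mc$ and $\Sc$ is automatic, being inherited from the naturality of the evaluation pairing. The main obstacle is the homological bookkeeping that makes the three derived functors interact correctly: one must know that $\Db(\Mc)$ remains perfect, so that $\R\Hom_{\Dc^{op}}(\Db(\Mc),-)$ is computed by the naive Hom out of $\Hom_{\Dc}(\Pc_\bullet,\Dc)$, and that no boundedness issue obstructs the identification $\Sc\Lotimes_{\Dc}\Pc_\bullet\simeq \Hom_{\Dc^{op}}\big(\Hom_{\Dc}(\Pc_\bullet,\Dc),\Sc\big)$. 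This is precisely where coherence of $\Mc$ and finiteness of the global dimension of $\Dc$ enter, and without perfectness of $\Mc$ the comparison map is in general not an isomorphism.
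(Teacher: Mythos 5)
Your argument is correct. Note that the paper offers no proof of this proposition at all: it is one of the facts ``recalled'' at the head of the section with a pointer to the standard references (Kashiwara--Schapira, Schneiders), so there is no in-paper argument to compare against, and what you have written is precisely the standard biduality/tensor-hom argument for perfect complexes that those references use. You correctly isolate the two load-bearing points: the globally defined underived evaluation pairing $\Sc\otimes_{\Dc}\Mc\to \Homc_{\Dc^{op}}(\Homc_{\Dc}(\Mc,\Dc),\Sc)$, and the local perfectness of a coherent $\Dc$-module on a smooth variety (coherence of $\Dc$ plus finite homological dimension), which lets both $\R\Hom$ out of $\Db(\Mc)$ and $\Sc\Lotimes_{\Dc}\Mc$ be computed termwise from a single bounded finite free resolution. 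The only cosmetic caveat is that everything here is a sheaf of modules, so $\R\Hom$ must be read as the sheaf-Hom $\R\Homc_{\Dc^{op}}$ and the finite free resolutions exist only on a suitable open cover; since a morphism of complexes of sheaves is a quasi-isomorphism exactly when it is one locally, your ``construct globally, check locally'' structure already accounts for this.
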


\subsection{$\Dc$-modules on supervarieties and the Berezinian}
\label{Berezinian}
We refer to Penkov's article \cite{Penkov} for a complete study
of the Berezinian in the $\Dc$-module setting.

Let $M$ be a supervariety of dimension $n|m$.
As explained in subsection \ref{diff-calc-sym-monoidal}
one defines $\Omega^1_{M}$ as the
representing object for the internal derivation functor $\underline{Der}(\Oc_{M},.)$
on geometric $\Oc_{M}$-modules. One also defines $\Omega^*_{M}$
as the superexterior power
$$\Omega^*_{M}:=\wedge^*\Omega^1_{M}.$$

The super version of Proposition \ref{universal-de-Rham}
can be taken as a definition of the Berezinian,
as a complex of $\Dc$-modules, up to quasi-isomorphism.
\begin{definition}
The Berezinian of $M$ is defined
in the derived category of $\Dc_{M}$-modules by the formula
$$\Ber_{M}:=\Omega^*_{M}\otimes_{\Oc}\Dc[n].$$
The complex of integral forms $I_{*,M}$ is defined by
$$I_{*,M}:=\R\Hom_{\Dc}(\Ber_{M},\Ber_{M}).$$
\end{definition}

The following proposition (see \cite{Penkov}, 1.6.3) gives a description of the Berezinian
as a $\Dc$-module.
\begin{proposition}
The Berezinian complex is concentrated in degree $0$ and equal there
to
$$\Ber_{M}:=\Extc^n_{\Dc}(\Oc,\Dc).$$
The functor
$$
\begin{array}{ccc}
\Mod(\Dc) 	& \to 		& \Mod(\Dc^{op})\\
\Mc			& \mapsto 	& \Mc^r:=\Ber_{M}\otimes \Mc 
\end{array}
$$
is an equivalence of categories with quasi-inverse $\Mc\mapsto \Mc^\ell:=\Ber_{M}^{-1}\otimes \Mc$.
This equivalence and the tensor product of left $\Dc$-modules over $\Oc$
induce a monoidal structure on $\Mod(\Dc^{op})$, denoted $\otimes^!$.
\end{proposition}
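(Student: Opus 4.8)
The plan is to follow Penkov \cite{Penkov} and reduce all three assertions to a local computation, since each is local on $M$. For the first claim the key is to recognise the defining complex $\Omega^*_M\otimes_{\Oc}\Dc$ as the one computing $\R\Homc_{\Dc}(\Oc,\Dc)$. First I would write down the super Spencer complex
$$
\Dc\otimes_{\Oc}\wedge^\bullet\Theta_M\longrightarrow \Oc,
$$
where $\wedge^\bullet$ is taken in the super sense, so the odd tangent directions contribute symmetric (polynomial) powers and the complex is unbounded. Applying $\Homc_{\Dc}(-,\Dc)$ and using $\Homc_{\Dc}(\Dc\otimes_{\Oc}\wedge^k\Theta_M,\Dc)\cong\wedge^k\Omega^1_M\otimes_{\Oc}\Dc=\Omega^k_M\otimes_{\Oc}\Dc$ (each $\wedge^k\Theta_M$ being locally free over $\Oc$), I would identify the de Rham complex $\Omega^*_M\otimes_{\Oc}\Dc$ with $\Homc_{\Dc}(\Dc\otimes_{\Oc}\wedge^\bullet\Theta_M,\Dc)$. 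Once the Spencer complex is known to be a resolution of $\Oc$, its cohomology is $\Extc^\bullet_{\Dc}(\Oc,\Dc)$, and the shift $[n]$ in the definition of $\Ber_M$ places $\Extc^n_{\Dc}(\Oc,\Dc)$ in degree $0$.

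The substantive point is the concentration $\Extc^i_{\Dc}(\Oc,\Dc)=0$ for $i\neq n$. Here I would pass to the local model $M=\R^{n|m}$, where $\Dc_M\cong\Dc_{\R^n}\otimes_{\R}\Cliff_m$ with $\Cliff_m$ the finite-dimensional Clifford algebra generated by the odd coordinates $\theta_j$ and odd derivations $\partial_{\theta_j}$, and $\Oc\cong\Cc^\infty(\R^n)\otimes_{\R}\wedge^*\R^m$. A K\"unneth argument then splits
$$
\Extc^\bullet_{\Dc_M}(\Oc,\Dc_M)\cong\Extc^\bullet_{\Dc_{\R^n}}(\Cc^\infty(\R^n),\Dc_{\R^n})\otimes_{\R}\Extc^\bullet_{\Cliff_m}(\wedge^*\R^m,\Cliff_m).
$$
The even factor is classical: by Proposition \ref{universal-de-Rham} it is concentrated in degree $n$ and equals $\Omega^n_{\R^n}$. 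For the odd factor I would use that $\Cliff_m$ is semisimple, isomorphic to $\End(\wedge^*\R^m)$ acting on its unique irreducible spinor module $\wedge^*\R^m$; hence this module is projective and $\Extc^i_{\Cliff_m}(\wedge^*\R^m,\Cliff_m)$ vanishes for $i>0$, reducing in degree $0$ to the dual spinor, which is free of rank one over the Frobenius algebra $\wedge^*\R^m$. Multiplying the two factors gives concentration in degree $n$ with value an invertible $\Oc$-module, the Berezinian. The main obstacle is exactly this odd computation: although the super Spencer (equivalently de Rham) complex is unbounded, since the odd directions produce polynomials $\Sym^\bullet(d\theta)$, one must prove it is acyclic away from a single degree; this Koszul-duality fact for the Clifford algebra is the content of \cite{Penkov}, 1.6.3.

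For the equivalence of categories the argument is formal, once $\Ber_M=\Extc^n_{\Dc}(\Oc,\Dc)$ is known to be invertible over $\Oc$ (rank $1|0$ or $0|1$, the Berezinian line of the rank $n|m$ bundle $\Omega^1_M$) with its natural right $\Dc$-module structure inherited from the right action of $\Dc$ on itself. Then $\Mc\mapsto\Ber_M\otimes_{\Oc}\Mc$, whose right $\Dc$-structure combines the right action on $\Ber_M$ with Leibniz's rule, is the super analogue of the classical equivalence $\Mc\mapsto\Omega^n_M\otimes_{\Oc}\Mc$ recalled above. Invertibility of $\Ber_M$ makes $\Mc\mapsto\Ber_M^{-1}\otimes_{\Oc}\Mc$ a two-sided quasi-inverse, the unit and counit being the canonical isomorphisms $\Ber_M^{-1}\otimes_{\Oc}\Ber_M\cong\Oc\cong\Ber_M\otimes_{\Oc}\Ber_M^{-1}$; I would only check these are $\Dc$-linear, which is routine.

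Finally, the monoidal structure $\otimes^!$ is pure transport of structure: I would carry the tensor product $\otimes_{\Oc}$ of left $\Dc$-modules (with the Leibniz $\Dc$-action recalled in the previous subsection) across the equivalence, setting
$$
\Nc_1\otimes^!\Nc_2:=\Ber_M\otimes_{\Oc}\big((\Ber_M^{-1}\otimes_{\Oc}\Nc_1)\otimes_{\Oc}(\Ber_M^{-1}\otimes_{\Oc}\Nc_2)\big).
$$
The associativity, unit and commutativity constraints are then inherited from those of $\otimes_{\Oc}$ through the equivalence, so nothing beyond the equivalence itself requires verification.
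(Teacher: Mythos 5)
Your proposal is correct, and it actually does more than the paper, which offers no argument at all for this proposition: the text simply refers to Penkov \cite{Penkov}, 1.6.3. What you have written is a faithful reconstruction of what that reference does, and you correctly isolate the one non-formal input. The formal part — dualizing the super Spencer resolution $\Dc\otimes_{\Oc}\wedge^\bullet\Theta_M\to\Oc$ to identify $\Omega^*_M\otimes_{\Oc}\Dc$ with $\R\Homc_{\Dc}(\Oc,\Dc)$, the side-changing equivalence once $\Ber_M$ is known to be an invertible $\Oc$-module with a right $\Dc$-action, and the transport of $\otimes_{\Oc}$ to $\otimes^!$ — is exactly as in the classical case recalled in Proposition \ref{universal-de-Rham}. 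The substantive point is the concentration of $\Extc^\bullet_{\Dc}(\Oc,\Dc)$ in degree $n$, and your local splitting $\Dc_{\R^{n|m}}\cong\Dc_{\R^n}\otimes\Cliff_m$ together with the semisimplicity of $\Cliff_m\cong\End(\wedge^*\R^m)$ is the right mechanism: the odd factor contributes only in degree $0$ and yields the dual spinor module, free of rank one over $\wedge^*\R^m$, so the K\"unneth product is concentrated in degree $n$ and invertible over $\Oc$. Two small points to keep in mind if you were to write this out in full: the exactness of the unbounded super Spencer complex is itself part of what must be proved (you note this, and it follows from the same even/odd splitting, the odd piece being a Koszul-type resolution over the matrix algebra $\Cliff_m$); and the splitting of $\Dc_M$ and of $\Oc$ as external tensor products is only valid in local coordinates, which suffices because all three assertions are local on $M$.
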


In the supersetting, the equivalence of left and right $\Dc$-modules,
given by the functor
$$\Mc\mapsto \Ber_{M}\otimes_{\Oc}\Mc$$
of twist by the Berezinian right $\Dc$-module,
can be computed by using the definition
$$\Ber_{M}:=\Omega^*_{M}\otimes_{\Oc}\Dc[n]$$
and passing to degree $0$ cohomology.

A more explicit description of the complex of integral forms (up to
quasi-isomorphism) is given by
$$
I_{*,M}:=\R\Homc_{\Dc}(\Ber_{M},\Ber_{M})
\cong
\Homc_{\Dc}(\Omega^*_{M}\otimes_{\Oc}\Dc[n],\Ber_{M})
$$
so that we get
$$
I_{*,M} \cong
\Homc_{\Oc}(\Omega^*_{M}[n],\Ber_{M})
\cong \Homc_{\Oc}(\Omega^*_{M}[n],\Oc)\otimes_{\Oc}\Ber_{M}
$$
and in particular $I_{n,M}\cong \Ber_{M}$.

We remark that proposition \ref{universal-de-Rham} shows that
if $M$ is a usual variety, then $\Ber_{M}$ is quasi-isomorphic
with $\Omega^n_{M}$, and this implies that
$$
I_{*,M}\cong \Homc_{\Oc}(\Omega^*_{M}[n],\Oc)\otimes_{\Oc}\Ber_{M}
\cong \wedge^*\Theta_{M}\otimes _{\Oc} \Omega^n_{M}\overset{i}{\longrightarrow}
\Omega^*_{M},
$$
where $i$ is the insertion morphism. This implies the isomorphism
$$
I_{n-p,M}\cong \Omega^p_{M},
$$
so that in the purely even case, integral forms essentially identify with
usual differential forms.

We recall from Bernstein and Leites' work \cite{Bernstein-Leites} that,
with a convenient notion of compactly supported homology
$H_{*,c}(M)$ on a given supermanifold $M$, there is an integration pairing
$$H_{*,c}(M)\times h^*(I_{*,M})\to \R$$
that reduces to the usual integration pairing
$$H_{*,c}(M)\times H^*_{dR}(M)\to \R$$
in the classical case.
The integration of an integral form in $I_{n-m-p}$ is done on subsupermanifolds
of dimension $p|m$ of a given supermanifold of dimension $n|m$.

\subsection{$\Dc$-algebras and partial differential equations}
In this subsection, we will work with varieties modeled on the category
$\Alg_{\R}$, $\Alg_{scg}$, $\Alg_{s}$ or $\Alg_{s,scg}$.

The tensor structure on the category of left $\Dc$-modules allows us
to define $\Dc$-algebras and $\Dc$-schemes using the philosophy of
relative geometry in monoidal categories, like in section \ref{relative-geometry}.

We recall that the $\Dc$-module structure
on $\Mc\otimes_{\Oc}\Nc$ is given by Leibniz's rule
$$D(a\otimes b)=Da\otimes b+a\otimes Db$$
on the level of derivations. This means that the following notion of $\Dc$-algebra
is just a $\Dc$-module equipped with a multiplication that fulfils Leibniz rule for
derivations.

\begin{definition}
A (commutative) $\Dc$-algebra is a monoid in the monoidal category
of (left) $\Dc$-module, i.e.,
it is a $\Dc$-module $\Ac$ equipped with a multiplication morphism of $\Dc$-modules
$$\mu:\Ac\otimes_{\Oc_{X}}\Ac\to \Ac$$
and a unit map $1:\Oc_{X}\to \Ac$ that fulfil associativity, unity and commutativity
axioms. If $\Ac$ is a $\Dc$-algebra, a module over $\Ac$ is given by a $\Dc$-module
$\Mc$ and a morphism $\mu:\Ac\otimes \Mc\to \Mc$ of external multiplication that is
compatible with unit and multiplication on $\Ac$ in the usual sense.
\end{definition}

\begin{definition}
A $\Dc$-space is a sheaf on the site $(\Alg_{\Dc}^{op},\tau_{Zar})$ of $\Dc$-algebras
with their Zariski topology.
\end{definition}

We now introduce the differential algebraic analog of polynomial algebra,
called jet algebra, by recalling the following result \cite{Beilinson-Drinfeld-Chiral}, 2.3.2.
\begin{proposition}
Let $\pi:C\to M$ be a smooth map between varieties. There exists a free
$\Dc_{M}$-algebra generated by $\Oc_{C}$, denoted $\Jet(\Oc_{C})$.
More precisely, one has, for every $\Dc$-algebra $\Ac$, a natural isomorphism
$$\Hom_{\Dc_{M}-\Alg}(\Jet(\Oc_{C}),\Ac)\cong \Hom_{\Oc_{M}-\Alg}(\Oc_{C},\Ac).$$
Its spectrum is denoted $\Jet(\pi)$, or simply $\Jet(C)$.
\end{proposition}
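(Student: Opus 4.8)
The plan is to read the displayed isomorphism as an adjunction and to build the left adjoint by composing adjunctions already available from the relative differential calculus of Section~\ref{diff-calc-sym-monoidal}. Any $\Dc_{M}$-algebra $\Ac$ restricts, along the inclusion $\Oc_{M}\hookrightarrow\Dc_{M}$ of functions as order-zero operators, to an ordinary $\Oc_{M}$-algebra, which defines a forgetful functor $U\colon\Dc_{M}\text{-}\Alg\to\Oc_{M}\text{-}\Alg$. The asserted bijection is precisely the statement that $\Jet$ is left adjoint to $U$, together with a unit $\Oc_{C}\to U\,\Jet(\Oc_{C})$; so it suffices to construct this left adjoint.

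First I would collect three elementary adjunctions. Since $(\Mod(\Dc_{M}),\otimes_{\Oc_{M}})$ is symmetric monoidal, the forgetful functor $\Dc_{M}\text{-}\Alg\to\Mod(\Dc_{M})$ has as left adjoint the relative symmetric algebra $\Sym^{\bullet}_{\Oc_{M}}$ formed inside $\Mod(\Dc_{M})$ with its Leibniz action; restriction $\Mod(\Dc_{M})\to\Mod(\Oc_{M})$ has the induction functor $\Ind(N)=\Dc_{M}\otimes_{\Oc_{M}}N$ as left adjoint; and $\Oc_{M}\text{-}\Alg\to\Mod(\Oc_{M})$ has the ordinary symmetric algebra $\Sym_{\Oc_{M}}$ as left adjoint. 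For a free $\Oc_{M}$-algebra $\Oc_{C}=\Sym_{\Oc_{M}}(N)$ I would set $\Jet(\Oc_{C}):=\Sym^{\bullet}_{\Oc_{M}}(\Dc_{M}\otimes_{\Oc_{M}}N)$ and verify the universal property by splicing the three adjunction isomorphisms,
$$\Hom_{\Dc_{M}\text{-}\Alg}\!\big(\Sym^{\bullet}_{\Oc_{M}}(\Dc_{M}\otimes_{\Oc_{M}}N),\Ac\big)\cong\Hom_{\Oc_{M}\text{-}\Mod}(N,\Ac)\cong\Hom_{\Oc_{M}\text{-}\Alg}(\Sym_{\Oc_{M}}(N),U\Ac).$$
This simultaneously defines $\Jet$ on free algebras and proves the adjunction there.

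To reach a general $\Oc_{C}$, I would present it as a (reflexive) coequalizer of free $\Oc_{M}$-algebras and \emph{define} $\Jet(\Oc_{C})$ as the coequalizer in $\Dc_{M}\text{-}\Alg$ of the image of this diagram under the construction above. Because the functors involved are left adjoints, they preserve colimits, and the universal property extends from the free case by naturality. Smoothness of $\pi$ renders this explicit and local over $M$: locally $C\cong M\times\R^{k}$, so $\Oc_{C}\cong\Oc_{M}[u_{1},\dots,u_{k}]$, and the formula yields the algebra freely generated by symbols $u_{i,\alpha}$ indexed by multi-indices $\alpha$, with $\Dc_{M}$-action $\partial_{j}u_{i,\alpha}=u_{i,\alpha+e_{j}}$ --- exactly the coordinate picture of the infinite jet bundle used in the previous subsection. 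Since the construction is functorial in the $\Oc_{M}$-algebra and local on $M$, these models glue along the transition data of $\pi$, and the spectrum of the result is $\Jet(\pi)$.

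The main obstacle is not the formal adjunction but two compatibility checks. First, one must confirm that the Leibniz $\Dc_{M}$-module structure descends through the relations presenting $\Oc_{C}$ --- i.e. that the coequalizer is genuinely taken in $\Dc_{M}\text{-}\Alg$ and is independent of the chosen presentation; this is exactly where the freeness in the derived symbols $\partial^{\alpha}u_{i}$ is used. Second, as the text already warns, $\Sym^{\bullet}$ produces a polynomial algebra that is not smoothly closed geometric, so in the $\Alg_{scg}$ (or $\Alg_{s,scg}$) setting one must replace it by the smooth geometric closure and work with $\Cc^{\infty}(J^{\infty}C)=\overline{\Jet(\Oc_{C})}^{\,scg}$; checking that the universal property survives this closure, via the adjunction of Theorem~\ref{smoothly-affine-basic}(7), is the delicate step.
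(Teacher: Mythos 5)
Your proof is correct, but it is organized differently from the paper's. The paper constructs $\Jet(\Oc_{C})$ in one explicit step: it takes the free $\Dc$-algebra $\Sym^\bullet(\Dc\otimes_{\Oc_{M}}\Oc_{C})$ on the \emph{underlying $\Oc_{M}$-module} of $\Oc_{C}$ and quotients by the $\Dc$-ideal generated by $\partial(1\otimes r_{1}\cdot 1\otimes r_{2}-1\otimes r_{1}r_{2})$ and $\partial(1\otimes 1-1)$, i.e.\ by exactly the relations forcing the canonical map $r\mapsto 1\otimes r$ to be a morphism of $\Oc_{M}$-algebras; the universal property then reads off directly, since a $\Dc$-algebra map out of the quotient is an $\Oc_{M}$-linear map $\Oc_{C}\to\Ac$ that happens to be multiplicative and unital. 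You instead build the left adjoint on free $\Oc_{M}$-algebras by splicing the three elementary adjunctions and then extend to arbitrary $\Oc_{C}$ by a coequalizer of a free presentation. Both routes rest on the same composite adjunction $\Sym^\bullet\circ(\Dc_{M}\otimes_{\Oc_{M}}-)$, and imposing relations versus taking a reflexive coequalizer are two faces of the same coin; but the paper's version is presentation-free and hands you generators and relations on the nose, which is what makes the coordinate model ($u_{i,\alpha}$ with $\partial_{j}u_{i,\alpha}=u_{i,\alpha+e_{j}}$) and the later definition of differential ideals immediate, whereas yours is cleaner categorically at the cost of checking that coequalizers exist in $\Dc_{M}$-$\Alg$ and that the result is presentation-independent (which does follow, since the object you build represents a functor defined without reference to the presentation --- though your phrase ``because the functors involved are left adjoints'' should be replaced by that representability argument to avoid sounding circular, $\Jet$ not yet being known to be a left adjoint at that stage). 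Your final remarks on the smooth geometric closure go beyond what the proposition asserts but match the paper's own caveat following the proof.
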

\begin{proof}
The algebra $\Jet(\Oc_{C})$ is given by the quotient of the symmetric algebra
$$\Sym^\bullet(\Dc\otimes_{\Oc_{X}}\Oc_{C})$$
by the ideal generated by the elements
$\partial(1\otimes r_{1}.1\otimes r_{2}-1\otimes r_{1}r_{2})\in \Sym^2(\Dc\otimes_{\Oc_{X}}R)+
\Dc\otimes_{\Oc_{X}}R$ and $\partial(1\otimes 1_{R}-1)\in \Dc\otimes R+\Oc_{X}$,
$r_{i}\in R$, $\partial\in \Dc$, $1_{R}$ the unit of $R$.
\end{proof}

We remark that the $\Jet$ algebra is not in general finitely generated as an algebra over $\Oc_{M}$,
but by definition, it is finitely generated as a $\Dc_{M}$-algebra if $\Oc_{C}$ is finitely generated
over $\Oc_{M}$. If $s:M\to C$ is a section, we denote $j_{\infty}s:M\to \Jet(C)$ the corresponding
map with values in the jet space.

We have defined here only the algebraic jet space, but one can define the usual jet space
by working with (super)algebras that are smoothly closed geometric.
Indeed, the smooth closure of the algebraic jet algebra gives the algebra
of functions on usual infinite jet space, by construction. Our methods thus apply also
to the smooth case, if one works with the convenient category of algebras and modules
over them as in \cite{Nestruev}.

We now give a definition of a partial differential equation and of its spaces of solutions,
which works equally well in the smooth, algebraic or supergeometric setting.
\begin{definition}
A partial differential equation on the sections of $\pi:C\to M$ is a $\Dc$-ideal $\Ic$ in $\Jet(\Oc_{C})$.
Its differential solution space is the $\Dc$-space whose points with values in
$\Jet(\Oc_{C})$-$\Dc$-algebras $\Ac$ are
$$\Sol_{\Dc,\Ic=0}(\Ac)=\{x\in \Ac,\;f(x)=0,\;\forall f\in \Ic\}.$$
Its solution space is the subspace of $\uGamma(M,C)$ whose points with values
in a test algebra $A$ (in $\Alg_{scg}$, $\Alg_{\R}$ or $\Alg_{s,scg}$) are given by
$$\Sol_{\Ic=0}(A)=\{s(t,u)\in \uGamma(M,C)(A),\;f\circ (j_{\infty,t}s)(t,u)=0,\;\forall f\in \Ic\},$$ 
where we identify $\uGamma(M,C)(A)$ with a subset $\Hom(M\times \uSpec(A),C)$.
\end{definition}

We remark that one has an isomorphism of $\Dc$-spaces
$$\Sol_{\Dc,\Ic=0}\cong \uSpec_{\Dc}(\Jet(\Oc_{C})/\Ic)$$
which means that the differential solution space is in some sense
(which will be clarified in the next subsection) finite dimensional.
This is very different of the diffeological solution space that is far away from
being a finite-dimensional manifold in general. This finite dimensionality
can be seen as the mathematical reason why physicist like to work
with local functionals.

\subsection{Local functionals and local differential calculus}
Let $M$ be a supermanifold of dimension $n|m$ and $\Mc$ be a $\Dc_{M}$-module.
We suppose that the underlying manifold $|M|$ is oriented.

\begin{definition}
The central de Rham cohomology of $\Mc$ is defined by
$$h(\Mc):=\Ber_{M}\otimes_{\Dc}\Mc.$$
The variational de Rham complex of $\Mc$ is defined by
$$\DR_{var}(\Mc):=(I_{n-m-*,M}\otimes_{\Oc}\Dc)\otimes_{\Dc}\Mc.$$
\end{definition}

We remark that in the classical case of dimension $n|0$, the variational
de Rham complex identifies with the usual de Rham complex
$$
\DR(\Mc):=\Omega^n_{M}\Lotimes_{\Dc}\Mc=(\Omega^*_{M}\otimes_{\Oc}\Dc[n])\otimes_{\Dc}\Mc
$$
and the central de Rham cohomology is isomorphic to
$$h(\Mc)=\Omega^n_{M}\otimes_{\Dc}\Mc.$$

Let $\pi:C\to M$ be a bundle and $H\subset \uGamma(M,C)$ be a subspace
that is a solution space of a given partial differential $\Ic_{H}$ on $\uGamma(M,C)$.
Let $\Ac=\Jet(\Oc_{C})/\Ic_{H}$ be the corresponding $\Dc$-algebra. We suppose
that it is $\Dc$-smooth.

Integral forms can be integrated on subsupermanifolds. This allows us
to define \footnote{The functionals in play have a domain of definition
given by Lebesgue's domination condition on the integrand.} an integration pairing.

\begin{proposition}
There is a well-defined integration pairing
$$
\begin{array}{ccc}
H_{*,c}(M)  \times  h^*(\DR_{var}(\Ac))&\to& \uHom(H,\R)\\
(\Sigma,\omega) & \mapsto & F_{\Sigma,\omega}:s(t,u)\mapsto \int_{\Sigma}(j_{\infty,t}s(t,u))^*\omega.
\end{array}
$$
\end{proposition}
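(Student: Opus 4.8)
The plan is to factor this pairing through the integration pairing on integral forms recalled in subsection \ref{Berezinian}, namely the Bernstein--Leites pairing $H_{*,c}(M)\times h^*(I_{*,M})\to \R$. The key geometric input is that, for a history $s$ lying in $H$, the infinite jet prolongation $j_{\infty,t}s$ pulls variational forms on the equation back to integral forms on $M$. Concretely, since $\Ac=\Jet(\Oc_{C})/\Ic_{H}$ and $s$ solves $\Ic_{H}$, the prolongation $j_{\infty,t}s$ determines a $\Dc_{M}$-algebra morphism $\Ac\to \Oc_{M}$ (parameterized by the test algebra $A$, hence valued in $\Oc_{M}\otimes A$). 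First I would use this morphism to define the pullback
$$(j_{\infty,t}s)^*:\DR_{var}(\Ac)=(I_{n-m-*,M}\otimes_{\Oc}\Dc)\otimes_{\Dc}\Ac \longrightarrow (I_{n-m-*,M}\otimes_{\Oc}\Dc)\otimes_{\Dc}\Oc_{M}\cong I_{n-m-*,M},$$
the final isomorphism using $\Dc\otimes_{\Dc}\Oc_{M}\cong \Oc_{M}$. Thus a representative of $\omega$ pulls back to an honest integral form on $M$, whose degree matches the dimension $p|m$ of $\Sigma$ precisely when $\omega\in h^{n-m-p}$, which is exactly the range in which an integral form can be integrated over $\Sigma$.

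The crucial structural point, which I would establish next, is that $(j_{\infty,t}s)^*$ is a \emph{morphism of complexes}: the variational (horizontal) differential on $\DR_{var}(\Ac)$ is carried to the de Rham differential on $I_{*,M}$. This is precisely where the hypothesis that $s$ be a solution (a section of the equation, not merely of $\pi$) enters; only then does $j_{\infty,t}s$ intertwine the total derivative on the jet $\Dc$-algebra with the genuine derivation on $M$, so that the induced map descends to cohomology $h^*(\DR_{var}(\Ac))\to h^*(I_{*,M})$. Composing with the Bernstein--Leites pairing then produces the real number $\int_{\Sigma}(j_{\infty,t}s)^*\omega$.

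Well-definedness splits according to the two variables. In the form variable, the chain-map property shows that $[(j_{\infty,t}s)^*\omega]\in h^*(I_{*,M})$ depends only on the class $[\omega]\in h^*(\DR_{var}(\Ac))$, so pairing it against $\Sigma$ through the integration pairing is unambiguous. In the cycle variable, independence of the representative of $\Sigma$ is inherited directly from the homological nature of that pairing, which itself rests on Stokes' theorem for integral forms over subsupermanifolds. Finally I would verify naturality in the test algebra $A$ (equivalently in the parameterizing open set $U$): the whole construction commutes with the structural morphisms $A\to A'$, so $s(t,u)\mapsto \int_{\Sigma}(j_{\infty,t}s(t,u))^*\omega$ is a natural transformation, i.e. a genuine element $F_{\Sigma,\omega}\in \uHom(H,\R)$; the domain-of-definition caveat of the footnote is absorbed by restricting to the subspace of $H$ cut out by Lebesgue's domination condition on the integrand.

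The hardest step is the chain-map property of $(j_{\infty,t}s)^*$ relative to the variational and de Rham differentials. It requires unwinding the $\Dc$-module definition of $\DR_{var}$ through the identification $\Ac\to\Oc_{M}$ and checking compatibility of the horizontal differential on the jet $\Dc$-algebra with the de Rham differential on $M$. This is the algebraic incarnation of the classical identity $d\circ j_{\infty}s=j_{\infty}s\circ D$ valid for solutions, and it is exactly what makes both the cohomological descent and the Stokes argument go through.
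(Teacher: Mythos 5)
Your proof is correct and is essentially a fully unwound version of the paper's own proof, which consists of the single sentence ``this follows from the fact that the integral of a total derivative is zero'' --- i.e.\ exactly the Stokes/cohomological-descent step you isolate as the crux, with the factorization through the Bernstein--Leites pairing on $I_{*,M}$ and the naturality in the test algebra left implicit there. One small inaccuracy in a side remark: the intertwining $(j_{\infty,t}s)^*\circ D = d\circ (j_{\infty,t}s)^*$ of the horizontal differential with the de Rham differential holds for the jet prolongation of \emph{any} section of $\pi$, not only for solutions; the hypothesis $s\in H$ is needed only so that the pullback morphism factors through the quotient $\Ac=\Jet(\Oc_{C})/\Ic_{H}$ and is thus defined on $\DR_{var}(\Ac)$ at all.
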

\begin{proof}
This follows from the fact that the integral of a total derivative is zero.
\end{proof}

\begin{definition}
A functional $F_{\Sigma,\omega}:H\to \R$ in the image of the above pairing is
called a local functional on $H$.
\end{definition}

Let $d:\Ac\to \Omega^1_{\Ac}$ be the universal derivation with values
in $\Ac[\Dc]$-modules ($\Ac$-modules in the tensor category of $\Dc$-modules).
Let $\Omega^*_{\Ac}=\wedge_{\Ac}^*\Omega^1_{\Ac}$ be the corresponding
algebra of differential forms. One can generalize the above notion of
local functional to differential forms.

\begin{proposition}
There is a well-defined integration pairing
$$
\begin{array}{ccc}
H_{*,c}(M)  \times  h^*(\DR_{var}(\Omega^k_{\Ac}))&\to& \Omega^k_{H}\\
(\Sigma,\omega) & \mapsto & \nu_{\Sigma,\omega}:[s:M\times U\to C]\mapsto \int_{\Sigma}(j_{\infty}s)^*\omega\in \Omega^k_{U}.
\end{array}
$$
\end{proposition}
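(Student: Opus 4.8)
The plan is to treat this statement as the $\Omega^k$-valued refinement of the preceding proposition: the target $\R$ is replaced by the form space $\Omega^k_{H}$, but the mechanism is identical, so I would reduce everything to the same two facts used there, namely that the formula defines a genuine differential $k$-form on $H$ and that it descends to the cohomology class of $\omega$ and the homology class of $\Sigma$. The only genuinely new bookkeeping is the $k$ ``vertical'' form legs carried by a class in $h^*(\DR_{var}(\Omega^k_{\Ac}))$, which must survive the construction to produce the $\Omega^k_{U}$-value.

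First I would make sense of $\nu_{\Sigma,\omega}$ as a rule sending a $U$-point $s\in H(U)$, i.e. a family $[s:M\times U\to C]$, to an element of $\Omega^k_{U}$. Unwinding $\DR_{var}(\Omega^k_{\Ac})=(I_{n-m-*,M}\otimes_{\Oc}\Dc)\otimes_{\Dc}\Omega^k_{\Ac}$, a representative of $\omega$ is an integral form along $M$ (of the codimension matching the homological degree of $\Sigma$) with values in the module $\Omega^k_{\Ac}$ of vertical forms on $\Ac=\Jet(\Oc_{C})/\Ic_{H}$. The fibrewise prolongation $j_{\infty}s:M\times U\to\Jet(C)$ pulls $\omega$ back to a form on $M\times U$ whose $M$-legs constitute an integral form and whose remaining $k$ legs lie in the $U$-directions. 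Integrating the integral-form part over $\Sigma$ by the Bernstein--Leites pairing $H_{*,c}(M)\times h^*(I_{*,M})\to\R$ recalled in subsection \ref{Berezinian} (with $|M|$ oriented, as assumed) leaves a $k$-form on $U$, i.e. an element of $\Omega^k_{U}$; the footnote domination condition guarantees both that this integral is defined and that it is smooth in $u$. To check that $\nu_{\Sigma,\omega}$ is a form on $H$ in the sense of the earlier definition, I would verify compatibility with reparameterizations: for $f:V\to U$, prolongation commutes with base change in the parameter, and since $\Sigma\subset M$ is independent of the parameter the operator $\int_{\Sigma}$ commutes with $f^*$ (pullback under the integral sign, again legitimized by the domination condition), giving $f^*\nu_{\Sigma,\omega}(s)=\nu_{\Sigma,\omega}(s\circ(\id_{M}\times f))$.

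The heart of the argument is well-definedness on classes, which is the $\Omega^k_{\Ac}$-valued version of ``the integral of a total derivative is zero.'' The key point is that the parameterized prolongation $j_{\infty}s$ intertwines the horizontal (total) differential $d_{\tot}$ computing $h^*(\DR_{var}(\Omega^k_{\Ac}))$ with the ordinary de Rham differential $d$ on $M$, the $k$ vertical legs riding along as passive coefficients. Consequently, if $\omega=d_{\tot}\eta$ is a total derivative then $(j_{\infty}s)^*\omega=d\big((j_{\infty}s)^*\eta\big)$, so for $\Sigma$ a cycle $\int_{\Sigma}d(\cdots)=\int_{\partial\Sigma}(\cdots)=0$; and if $\Sigma=\partial\Theta$ is a boundary then, $\omega$ being closed, $\int_{\partial\Theta}(j_{\infty}s)^*\omega=\int_{\Theta}(j_{\infty}s)^*d_{\tot}\omega=0$ by Stokes. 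Hence $\nu_{\Sigma,\omega}$ depends only on the class of $\omega$ in $h^*(\DR_{var}(\Omega^k_{\Ac}))$ and on the class of $\Sigma$ in $H_{*,c}(M)$, and the pairing is bilinear by construction.

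The step I expect to be the main obstacle is precisely the separation of the two differentials. One must be sure that the differential defining $h^*(\DR_{var}(\Omega^k_{\Ac}))$ is the horizontal one (the source of the total derivatives killed by integration), and that pulling back by the parameterized jet section respects the splitting of $(j_{\infty}s)^*\omega$ into its $M$-integral-form part and its $U$-form part, so that Stokes acts only on the former while the latter survives untouched as the $\Omega^k_{U}$-value. Making differentiation and pullback under the integral sign rigorous under the domination hypothesis, and matching the homological degree of $\Sigma$ with the grading in $h^*$ so that the pairing lands in the intended degree, are the remaining technical points; none of them introduces an idea beyond the Stokes argument used for the previous proposition.
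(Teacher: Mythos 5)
Your proposal is correct and is essentially the argument the paper intends: the paper proves the preceding $\R$-valued pairing with the one line ``the integral of a total derivative is zero'' and offers no separate proof for this $\Omega^k$-valued version, so your elaboration — that $j_{\infty}s$ intertwines the horizontal differential with $d$ on $M$, that the $k$ vertical legs survive as $\Omega^k_{U}$-legs, and that Stokes gives independence of the representatives of both $\omega$ and $\Sigma$ — is exactly the expansion of that remark. The functoriality check in $U$ that you add is needed for $\nu_{\Sigma,\omega}$ to be a differential form on $H$ in the paper's sense and is a worthwhile point the paper leaves implicit.
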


\begin{definition}
A differential form $\nu_{\Sigma,\omega}\in \Omega^k_{H}$ in the image of the
above pairing is called a local differential form on $H$.
\end{definition}

\begin{definition}
Let $\Ac$ be a smooth $\Dc$-algebra (see \cite{Beilinson-Drinfeld-Chiral}, chapter 2).
The $\Ac^r[\Dc^{op}]$-module of local vector fields on $\Ac$ is defined as
the $\Ac[\Dc]$-dual of differential forms, i.e., by the formula
$$\Theta_{\Ac}:=\Homc_{\Ac[\Dc]}(\Omega^1_{\Ac},\Ac[\Dc]).$$
\end{definition}

The finite dimensionality of the $\Dc$-space of solutions of a partial differential equation
can be explained by the following proposition from \cite{Beilinson-Drinfeld-Chiral},
chapter 2.

\begin{proposition}
Let $\Bc=\Jet(\Oc_{C})$ and $p:\Jet(\Oc_{C})\to C$ be the projection map.
The natural map
$$\Bc[\Dc]\otimes_{\Bc}p^*\Omega^1_{C/M}\to \Omega^1_{\Bc}$$
is an isomorphism in the jet case.
The rank of $\Omega^1_{\Bc}$ as a $\Bc[\Dc]$-module is equal to the
rank of $\Omega^1_{C/M}$ as an $\Oc_{C}$-module.
\end{proposition}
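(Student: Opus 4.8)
The plan is to prove the isomorphism by Yoneda: I will show that $\Omega^1_{\Bc}$ and the candidate module $\Bc[\Dc]\otimes_{\Bc}p^*\Omega^1_{C/M}$ represent naturally isomorphic functors on the category of $\Bc[\Dc]$-modules, and that the natural map of the statement induces this isomorphism. First I would recall the two universal properties in play. By definition of the $\Dc$-Kähler differentials, for every $\Bc[\Dc]$-module $\Mc$ one has $\Hom_{\Bc[\Dc]}(\Omega^1_{\Bc},\Mc)\cong\Der_{\Dc}(\Bc,\Mc)$, the $\Dc$-linear derivations of $\Bc$ into $\Mc$. On the other side, extension of scalars along $\Oc_C\to\Bc\to\Bc[\Dc]$ together with the universal property of $\Omega^1_{C/M}=\Omega^1_{\Oc_C/\Oc_M}$ gives, upon rewriting $\Bc[\Dc]\otimes_{\Bc}p^*\Omega^1_{C/M}=\Bc[\Dc]\otimes_{\Oc_C}\Omega^1_{C/M}$, a chain $\Hom_{\Bc[\Dc]}(\Bc[\Dc]\otimes_{\Oc_C}\Omega^1_{C/M},\Mc)\cong\Hom_{\Oc_C}(\Omega^1_{C/M},\Mc)\cong\Der_{\Oc_M}(\Oc_C,\Mc)$, where $\Mc$ is viewed as an $\Oc_C$-module through $\Oc_C\to\Bc[\Dc]$.

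The natural map of the statement is the one induced by the universal $\Dc$-derivation $d:\Bc\to\Omega^1_{\Bc}$: restricting $d$ to the image of $\Oc_C$ yields an $\Oc_M$-derivation $\Oc_C\to\Omega^1_{\Bc}$, hence an $\Oc_C$-linear map $\Omega^1_{C/M}\to\Omega^1_{\Bc}$, which extends $\Bc[\Dc]$-linearly to $\theta:\Bc[\Dc]\otimes_{\Bc}p^*\Omega^1_{C/M}\to\Omega^1_{\Bc}$. Under the two identifications above, $\theta^*=\Hom_{\Bc[\Dc]}(\theta,\Mc)$ becomes exactly the restriction map $\Der_{\Dc}(\Bc,\Mc)\to\Der_{\Oc_M}(\Oc_C,\Mc)$, $D\mapsto D|_{\Oc_C}$. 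So it suffices to show this restriction map is bijective for all $\Mc$; by Yoneda, $\theta$ is then an isomorphism.

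The heart of the argument—where the jet (freeness) hypothesis enters—is establishing that bijection. I would interpret a $\Dc$-derivation $D:\Bc\to\Mc$ as a morphism of $\Dc$-algebras $\id+D:\Bc\to\Bc\oplus\Mc$ into the square-zero (dual number) extension, which is a section of the projection $\Bc\oplus\Mc\to\Bc$. One first checks that $\Bc\oplus\Mc$, with $\Mc$ a square-zero ideal, is a commutative $\Dc$-algebra: the $\Dc$-action is diagonal and the multiplication is $\Dc$-linear precisely because $\Mc$ is a $\Bc[\Dc]$-module, so Leibniz compatibility holds. Then the free property of the jet algebra, $\Hom_{\Dc-\Alg}(\Jet(\Oc_C),\Ac)\cong\Hom_{\Oc_M-\Alg}(\Oc_C,\Ac)$ applied to $\Ac=\Bc\oplus\Mc$, identifies $\Dc$-algebra sections of $\Bc\oplus\Mc\to\Bc$ with $\Oc_M$-algebra maps $\Oc_C\to\Bc\oplus\Mc$ lying over the canonical $\Oc_C\to\Bc$, that is, with $\Oc_M$-derivations $\Oc_C\to\Mc$. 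Tracking the correspondence shows it is exactly $D\mapsto D|_{\Oc_C}$, giving the desired bijectivity of $\theta^*$.

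Finally, the rank assertion is immediate: since $\Omega^1_{\Bc}\cong\Bc[\Dc]\otimes_{\Bc}p^*\Omega^1_{C/M}$ is obtained by extension of scalars along $\Bc\to\Bc[\Dc]$ from $p^*\Omega^1_{C/M}=\Bc\otimes_{\Oc_C}\Omega^1_{C/M}$, which is locally free over $\Bc$ of the same rank as $\Omega^1_{C/M}$ over $\Oc_C$ (as $\pi$ is smooth), the module $\Omega^1_{\Bc}$ is locally free over $\Bc[\Dc]$ of that same rank. The main obstacle I anticipate is the careful verification that the dual-number extension $\Bc\oplus\Mc$ is a genuine $\Dc$-algebra and that the free property applies to sections over $\Bc$ with the correct $\Oc_M$-linearity; this bookkeeping is exactly what forces the restriction map to be bijective, and it is the step where the freeness of the jet algebra, as opposed to a general $\Dc$-algebra quotient, is essential—for a quotient one would instead pick up a conormal contribution and only a right-exact sequence rather than an isomorphism.
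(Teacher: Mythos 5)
Your proof is correct. Note that the paper itself offers no argument for this proposition --- it is quoted from Beilinson--Drinfeld, \emph{Chiral Algebras}, Chapter 2 --- so there is no in-text proof to compare against; what you have written is the natural adjunction argument that the cited source uses, namely that the left adjoint $\Jet$ carries the cotangent module of $\Oc_{C}$ to the induced module, proved by testing both sides on an arbitrary $\Bc[\Dc]$-module $\Mc$ via the square-zero extension $\Bc\oplus\Mc$. All the delicate points are handled: $\Bc\oplus\Mc$ is a $\Dc$-algebra precisely because $\Mc$ is a $\Bc$-module in the tensor category of $\Dc$-modules; the adjunction identifies $\Dc$-algebra sections of $\Bc\oplus\Mc\to\Bc$ with $\Oc_{M}$-algebra maps $\Oc_{C}\to\Bc\oplus\Mc$ over $\Oc_{C}\to\Bc$, i.e.\ with $\Der_{\Oc_{M}}(\Oc_{C},\Mc)$; and the resulting bijection is restriction along $\Oc_{C}\to\Bc$, which matches the dual of the stated natural map. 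Your closing remark that a proper quotient of the jet algebra would only yield a conormal presentation (a right-exact sequence) rather than an isomorphism correctly isolates where freeness is used, and the rank statement follows as you say from $\Omega^1_{\Bc}\cong\Bc[\Dc]\otimes_{\Oc_{C}}\Omega^1_{C/M}$ together with local freeness of $\Omega^1_{C/M}$ for smooth $\pi$.
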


\begin{corollary}
\label{tangent-D-module}
Let $\Bc=\Jet(\Oc_{C})$, $\Bc^r:=\Ber_{M}\otimes_{\Oc_{M}}\Bc$ and $p:\Jet(C)\to C$
be the projection map. The natural morphism of
$\Bc^r[\Dc^{op}]$-modules
$$\Theta_{\Bc}\to (p^*\Theta_{C/M})\otimes_{\Bc^r}\Bc^r[\Dc^{op}]$$
is an isomorphism. The rank of $\Theta_{\Bc}$ as a $\Bc^r[\Dc^{op}]$-module
is equal to the rank of $\Theta_{C/M}$ as an $\Oc_{C}$-module.
\end{corollary}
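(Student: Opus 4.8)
The plan is to obtain the corollary by \emph{dualizing} the preceding proposition, i.e. by applying the functor $\Homc_{\Bc[\Dc]}(-,\Bc[\Dc])$ to the isomorphism $\Omega^1_{\Bc}\cong \Bc[\Dc]\otimes_{\Bc}p^*\Omega^1_{C/M}$. By the definition of local vector fields, $\Theta_{\Bc}=\Homc_{\Bc[\Dc]}(\Omega^1_{\Bc},\Bc[\Dc])$, so substituting the proposition's isomorphism immediately rewrites the left-hand side as $\Homc_{\Bc[\Dc]}(\Bc[\Dc]\otimes_{\Bc}p^*\Omega^1_{C/M},\Bc[\Dc])$. The whole argument then amounts to simplifying this dual of an \emph{induced} module.

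The key algebraic step is the adjunction between induction $\Bc[\Dc]\otimes_{\Bc}(-)$ and restriction along $\Bc\hookrightarrow \Bc[\Dc]$. For any $\Bc$-module $N$ one has a natural isomorphism $\Homc_{\Bc[\Dc]}(\Bc[\Dc]\otimes_{\Bc}N,\Bc[\Dc])\cong \Homc_{\Bc}(N,\Bc[\Dc])$, where on the right $\Bc[\Dc]$ is viewed as a $(\Bc,\Bc[\Dc])$-bimodule, so that the resulting object carries the right $\Bc[\Dc]$-module structure coming from right multiplication. Applying this with $N=p^*\Omega^1_{C/M}$ reduces everything to computing $\Homc_{\Bc}(p^*\Omega^1_{C/M},\Bc[\Dc])$.

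Now I would use that $\pi:C\to M$ is smooth, so $\Omega^1_{C/M}$ is locally free of finite rank over $\Oc_{C}$ and hence $p^*\Omega^1_{C/M}$ is locally free of finite rank over $\Bc$. For such a module the canonical map $\Homc_{\Bc}(p^*\Omega^1_{C/M},\Bc)\otimes_{\Bc}\Bc[\Dc]\to \Homc_{\Bc}(p^*\Omega^1_{C/M},\Bc[\Dc])$ is an isomorphism, and $\Homc_{\Bc}(p^*\Omega^1_{C/M},\Bc)=p^*\Theta_{C/M}$ because dualization commutes with the pullback of a finite locally free module. This yields $\Theta_{\Bc}\cong p^*\Theta_{C/M}\otimes_{\Bc}\Bc[\Dc]$. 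It remains to rephrase the right $\Bc[\Dc]$-module structure in terms of right $\Dc$-modules: applying the left-to-right equivalence $\Mc\mapsto \Mc^r=\Ber_{M}\otimes\Mc$ (equivalently tensoring with $\Bc^r=\Ber_{M}\otimes_{\Oc_{M}}\Bc$) turns $\Bc[\Dc]$ into $\Bc^r[\Dc^{op}]$ and transports the identity above into the stated isomorphism $\Theta_{\Bc}\cong (p^*\Theta_{C/M})\otimes_{\Bc^r}\Bc^r[\Dc^{op}]$. The rank statement then follows formally: $\Bc^r[\Dc^{op}]$ is free of rank one over itself, so the rank of $\Theta_{\Bc}$ equals the $\Bc$-rank of $p^*\Theta_{C/M}$, which equals the $\Oc_{C}$-rank of $\Theta_{C/M}$, itself equal to that of $\Omega^1_{C/M}$.

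The point that requires the most care is the bookkeeping of the left/right $\Dc$-module structures and the Berezinian twist: one must check that the natural morphism asserted in the statement is genuinely the $\Homc_{\Bc[\Dc]}(-,\Bc[\Dc])$-dual of the natural morphism of the previous proposition, and that the projection-formula isomorphism of the third step is compatible with the $\Dc^{op}$-action obtained after the twist. In the super case one must additionally track parities through the Berezinian, but since the whole construction is formal once the proposition is granted, no new geometric input beyond smoothness of $\pi$ and the finite local freeness of $\Omega^1_{C/M}$ is needed.
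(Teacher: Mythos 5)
Your proposal is correct and follows essentially the same route as the paper's proof: both reduce to the isomorphism $\Bc[\Dc]\otimes_{\Bc}p^*\Omega^1_{C/M}\cong\Omega^1_{\Bc}$ of the preceding proposition, apply $\Homc_{\Bc[\Dc]}(-,\Bc[\Dc])$, and use the induction--restriction adjunction together with the finite local freeness of $\Omega^1_{C/M}$ to pull the $\Dc$ factor out of the $\Homc$. The only (cosmetic) difference is that the paper first builds the natural map from the morphism $\Dc\otimes_{\Oc}p^*\Omega^1_{C/M}\to\Omega^1_{\Jet(C)/M}$ and then invokes the proposition to see it is an isomorphism, whereas you dualize the isomorphism directly; you are in fact slightly more explicit than the paper about why $\Homc_{\Bc}(p^*\Omega^1_{C/M},\Bc[\Dc])\cong\Homc_{\Bc}(p^*\Omega^1_{C/M},\Bc)\otimes\Dc$ holds.
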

\begin{proof}
The morphism $p:\Jet(C)\to C$ induces an exact sequence of $\Bc$-modules
$$0\to p^*\Omega^1_{C/M}\to \Omega^1_{\Jet(C)/M}\to \Omega^1_{\Jet(C)/C}\to 0$$
and a natural morphism
$$\Dc\otimes_{\Oc}p^*\Omega^1_{C/M}\to \Omega^1_{\Jet(C)/M}$$
of left $\Dc$-modules (the right-hand side is equipped with its canonical
$\Dc$-module structure). This gives a natural map
$$
\Theta_{\Bc}:=\Homc_{\Bc[\Dc]}(\Omega^1_{\Jet(C)/M},\Bc[\Dc])\to
\Homc_{\Bc[\Dc]}(\Dc\otimes_{\Oc}p^*\Omega^1_{C/M},\Bc[\Dc])
$$
and combining it with the natural isomorphisms
$$
\begin{array}{ccc}
\Homc_{\Bc[\Dc]}(\Dc\otimes_{\Oc}p^*\Omega^1_{C/M},\Bc[\Dc])
& \cong &
\Homc_{\Bc}(p^*\Omega^1_{C/M},\Bc[\Dc])\\
&\cong&
\Homc_{\Bc}(p^*\Omega^1_{C/M},\Bc)\otimes \Dc\\
&=:& (p^*\Theta_{C/M})\otimes\Dc
\end{array}
$$
induces a natural map
$$\Theta_{\Bc}\to (p^*\Theta_{C/M})\otimes\Dc^{op}.$$
The fact that it is an isomorphism comes from the fact that
the natural map
$$\Bc[\Dc]\otimes_{\Bc}p^*\Omega^1_{C/M}\to \Omega^1_{\Bc}$$
is an isomorphism in the jet case.
\end{proof}

\subsection{Variational calculus}
We now recall our general notion of variation problem.
We use here superspaces modeled on geometric superalgebras.

\begin{definition}
A lagrangian variational problem is composed of the following data:
\begin{enumerate}
\item a space $M$ called the parameter space for trajectories,
\item a space $C$ called the configuration space for trajectories,
\item a morphism $\pi:C\to M$ (often supposed to be surjective),
\item  a subspace $H\subset \uGamma(M,C)$ of the space of sections of $\pi$
$$\uGamma(M,C):=\{x:M\to C,\;\pi\circ x=\id\},$$
called the space of histories, and
\item a functional $S:H\to \R$ called the action functional.
\end{enumerate}
The space of classical trajectories for the variational problem is
the subspace $T$ of $H$ defined by
$$T=\{x\in H|\;d_{x}S=0\}.$$
If $B$ is another space, a classical $B$-valued observable is a functional
$F:T\to B$ and a quantum $B$-valued observable is a functional
$F:H\to B$.
\end{definition}

Virtually every example of variational calculus that can be found in the 
classical physical literature is of the following type.
\begin{definition}
A variational problem is called local if the following hold.
\begin{enumerate}
\item The space of histories $H\subset \uGamma(M,C)$ is defined by a differential equation
$\Ic_{H}\subset \Jet(\Oc_{C})$, such that $\Jet(\Oc_{C})/\Ic_{H}$ is $\Dc$-smooth,
\item The action functional is the local functional associated to a cohomology class
$S\in h(\Ac)$ for $\Ac=\Jet(\Oc_{C})/\Ic_{H}$.
\end{enumerate}
\end{definition}

Suppose that we are given a local variational problem.
We suppose that $\Ac$ is $\Dc$-smooth (see \cite{Beilinson-Drinfeld-Chiral}).
Using the biduality isomorphism
$$\Omega^1_{\Ac}\cong \Homc_{\Ac^r[\Dc^{op}]}(\Theta_{\Ac},\Ac^r[\Dc^{op}]),$$
one gets a well-defined isomorphism
$$h(\Omega^1_{\Ac})\cong \Homc_{\Ac[\Dc]}(\Ber_{M}^{-1}\otimes\Theta_{\Ac},\Ac).$$
To the given action functional $S\in h(\Ac)$ corresponds its differential
$h(d)(S)\in h(\Omega^1_{\Ac})$ and by the above isomorphism,
an insertion map
$$i_{dS}:\Ber_{M}^{-1}\otimes\Theta_{\Ac}\to \Ac.$$

\begin{definition}
The image of the above insertion map is called the Euler-Lagrange ideal and
denoted $\Ic_{S}$.
The lagrangian variational problem is said to have simplifying histories if its
space of trajectories $T=\{x\in H,\;d_{x}S=0\}$ identifies with
the solution space of the Euler-Lagrange ideal, i.e.,
$$T\cong \Sol_{\Ic_{S}=0}\subset H.$$
\end{definition}

To sum up, a variational problem has simplifying histories if the conditions
imposed on trajectories to define $H$ annihilate the boundary
terms of the integration by part that is used to compute
$d_{x}S$ explicitely (see for example section \ref{simple-example}).

\section{Gauge theories and homotopical geometry}
In this last section, we briefly describe a coordinate-free
formulation of gauge theory and of the classical BV formalism using the language
of $\Dc$-schemes of \cite{Beilinson-Drinfeld-Chiral}.
We are inspired here by a huge physical literature,
starting with \cite{Henneaux-Teitelboim} and \cite{Fisch-Henneaux} as general references,
but also \cite{Stasheff2} and \cite{Stasheff1}for some homotopical inspiration, and
\cite{fulp-lada-stasheff},  \cite{Barnich-BV-2010} and \cite{Cattaneo-Felder} for
explicit computations.

For the consistency of this article, we want to insist on the geometrical meaning
of these constructions, continuing to deal with spaces defined by their functor
of points. We will thus use without further comment
\begin{itemize}
\item the language of homotopical geometry, referring to \cite{Toen-HAG-Essen}
for a survey and more references, and
\item the language of pseudo-tensor operations, which we will call here local operations,
referring to \cite{Beilinson-Drinfeld-Chiral},
chapter 1 and 2, for their definition and use.
\end{itemize}
Just recall from Chapter 2 of \cite{Beilinson-Drinfeld-Chiral} the following definition
(we replace everywhere the word pseudo-tensor in this reference by the word local).
\begin{definition}
Let $\Ac$ be a $\Dc$-algebra, $\Mc$ be an $\Ac[\Dc]$-module
and $\Ac^r:=\Ber_{M}\otimes \Ac$ be the corresponding
algebra in the symmetric monoidal category $(\Mod(\Dc^{op}),\otimes^!)$
of right $\Dc$-modules with $\Mc^r:=\Ber_{M}\otimes \Mc$.
A local $2$-ary operation on the $\Ac[\Dc]$-module
$\Mc$ is a morphism
$$\Mc^r\boxtimes \Mc^r\to \Delta_{*}\Mc^r$$
where $\Delta:M\to M\times M$ is the diagonal embedding.
The inner dual of a projective $\Ac[\Dc]$-module $\Mc$ of finite rank is the
$\Ac[\Dc]$-module defined by
$$\Mc^\circ:=\Ber_{M}^{-1}\otimes \Homc_{\Ac[\Dc]}(\Mc,\Ac[\Dc]).$$
\end{definition}

If $\Mc$ is an $\Ac^r[\Dc^{op}]$-module, we denote
$$\Mc^\ell:=\Ber_{M}^{-1}\otimes \Mc$$
the corresponding $\Ac[\Dc]$-module.

\begin{definition}
A variational problem $(\pi:C\to M,H,S\in h(\Ac))$ with simplifying histories
is called a gauge theory.
The kernel of its insertion map
$$i_{dS}:\Theta_{\Ac}^\ell\to \Ac$$
is called the space $\Nc_{S}$ of Noether identities.
Its right version
$$\Nc_{S}^r=\Ber_{M}\otimes \Nc_{S}\subset \Theta_{\Ac}$$
is called the space of Noether gauge symmetries.
\end{definition}

We remark that there is a natural local Lie bracket operation
$$[.,.]:\Theta_{\Ac}\boxtimes \Theta_{\Ac}\to \Delta_{*}\Theta_{\Ac}$$
that plays the role of the Lie bracket between local vector fields.
We refer to Beilinson and Drinfeld's book \cite{Beilinson-Drinfeld-Chiral}
for the following proposition.
\begin{proposition}
The local Lie bracket of vector fields extends naturally to an odd local Poisson bracket on
the dg-$\Ac$-algebra
$$
\Ac_{P}:=
\Sym_{dg}(
[\Theta_{\Ac}^\ell[1]\overset{i_{dS}}{\longrightarrow}\Ac]).
$$
\end{proposition}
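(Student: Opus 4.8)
The plan is to identify $\Ac_P$, before taking its differential into account, with the algebra of \emph{local polyvector fields} on the $\Dc$-scheme $\Spec_{\Dc}(\Ac)$, to recognize the desired odd bracket as the corresponding local Schouten--Nijenhuis bracket, and then to show that the Koszul differential $i_{dS}$ is a derivation of this bracket. First I would equip the underlying graded-commutative algebra $\Sym_{\Ac}(\Theta_{\Ac}^\ell[1])$ obtained by forgetting the differential of $\Ac_P$ with an odd local $2$-ary bracket. Following the calculus of local (pseudo-tensor) operations of \cite{Beilinson-Drinfeld-Chiral}, chapters~1--2, it suffices to prescribe the bracket on generators and extend it by a graded Leibniz rule. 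On generators it is given by three pieces of data already at hand: on $\Ac\boxtimes\Ac$ it is zero (functions commute); on $\Theta_{\Ac}\boxtimes\Ac$ it is the anchor $v\boxtimes f\mapsto v\cdot f$, i.e.\ the action of local vector fields on $\Ac$ as derivations through the $\Ac[\Dc]$-module structure and the pairing with $d:\Ac\to\Omega^1_{\Ac}$; and on $\Theta_{\Ac}\boxtimes\Theta_{\Ac}$ it is the local Lie bracket $[\cdot,\cdot]:\Theta_{\Ac}\boxtimes\Theta_{\Ac}\to\Delta_*\Theta_{\Ac}$ recalled above. By construction this already extends the local Lie bracket.

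The first thing to verify is that the graded Leibniz extension is well defined as a genuine local operation on $\Ac_P^r=\Ber_M\otimes\Ac_P$, compatible with the Koszul sign rule and with the right $\Dc$-module structures obtained from the Berezinian twist $\Mc\mapsto\Ber_M\otimes\Mc$. This is the local counterpart of the classical fact that polyvector fields carry a canonical Gerstenhaber (Schouten--Nijenhuis) bracket, and it is where the bookkeeping of the $\boxtimes$/$\Delta_*$ formalism must be carried out carefully.

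Next I would check the two algebraic axioms of an odd Poisson bracket on this graded algebra, both of which reduce degree by degree to identities on generators. Graded antisymmetry reduces to antisymmetry of the local Lie bracket on $\Theta_{\Ac}$ together with the evident sign on the anchor pairing. The graded Jacobi identity reduces to two classical facts: the Jacobi identity for the local Lie bracket $[\cdot,\cdot]$, and the Lie-algebra action identity $[v,w]\cdot f = v\cdot(w\cdot f)-w\cdot(v\cdot f)$ expressing that the anchor is a morphism of local Lie algebras from $\Theta_{\Ac}$ into the local derivations of $\Ac$. Both are available in the pseudo-tensor framework by the results of \cite{Beilinson-Drinfeld-Chiral}.

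Finally I would bring in the differential. Here one uses that $d^2=0$ is automatic, since $\Ac_P$ is by definition $\Sym_{dg}$ of the two-term complex $[\Theta_{\Ac}^\ell[1]\overset{i_{dS}}{\to}\Ac]$, so only the derivation property of $d$ with respect to the bracket remains. On generators this property reduces to $i_{dS}([v,w]) = v\cdot(i_{dS}w)-w\cdot(i_{dS}v)$, which is exactly the Lie-algebra action identity above applied to $S$ in place of a function $f$, since by definition of the insertion map $i_{dS}(v)=\langle dS,v\rangle = v\cdot S$; morally $i_{dS}=\{S,\cdot\}$ is the inner derivation of the degree-$0$ datum $S$, which is automatically a derivation of any graded Poisson bracket by graded Jacobi. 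Unwinding this identification is just a matter of tracing through the isomorphism $h(\Omega^1_{\Ac})\cong\Homc_{\Ac[\Dc]}(\Ber_M^{-1}\otimes\Theta_{\Ac},\Ac)$ used to define $i_{dS}$. I expect the main obstacle to be the construction in the first two paragraphs, namely establishing the local Schouten--Nijenhuis bracket rigorously in the pseudo-tensor formalism: one must check that the Leibniz extension of a local operation remains a local operation and that all Koszul signs are consistent with the Berezinian twists relating left and right $\Dc$-modules. By contrast, once the bracket is known to be a well-defined odd local operation, the Jacobi identity and the compatibility with $i_{dS}$ are formal consequences of the two standard identities cited.
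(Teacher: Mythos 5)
Your proposal is correct in outline, but it is worth noting that the paper offers no proof of this proposition at all: it simply defers to Beilinson--Drinfeld. What you have written is essentially a reconstruction of the argument the paper is implicitly appealing to, and it is consistent with the paper's own later usage, where the proof of the theorem on $\Ac_{BV,bigrad}$ invokes exactly the ``local Schouten--Nijenhuis bracket'' on $\wedge^*\Theta_{\Ac}$ that you build here by Leibniz extension from the anchor and the local Lie bracket. So your route is the intended one, fleshed out rather than genuinely different. Two points deserve emphasis as the real content of such a proof. First, the Leibniz extension of a $2$-ary $*$-operation to the symmetric algebra is not automatic sign-chasing: it relies on the compound tensor structure of \cite{Beilinson-Drinfeld-Chiral}, which is precisely what lets one compose a $*$-operation in one argument with the $\otimes$-multiplication in the others; you correctly flag this as the main technical burden. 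Second, your final step is slightly understated: $S$ lives in $h(\Ac)=\Ber_{M}\otimes_{\Dc}\Ac$ rather than in $\Ac$, and $i_{dS}$ is defined through $h(d)(S)\in h(\Omega^1_{\Ac})\cong \Homc_{\Ac[\Dc]}(\Ber_{M}^{-1}\otimes\Theta_{\Ac},\Ac)$, so the identity $i_{dS}([v,w])=v\cdot(i_{dS}w)-w\cdot(i_{dS}v)$ is not literally the action identity for an element of $\Ac$ but its image under the $h(-)$ functor, i.e.\ a local Cartan formula modulo total derivatives. That passage is routine but is exactly where a careless version of the argument would break, so it should be spelled out rather than dismissed as ``tracing through the isomorphism.'' With those two points made precise, your proof is complete.
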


One can see the dg-algebra $\Ac_{P}$ as a dg-$\Dc$-space
$$
\begin{array}{ccccc}
P:=\Spec(\Ac_{P}):& dg-\Ac-\Alg & \to 	  &\SSets\\
    & \Rc		&\mapsto & s\Hom_{dg-Alg_{\Dc}}(\Ac_{P},\Rc).
\end{array}
$$
One has then by construction that
$$\pi_{0}(P)\cong \Sol_{\Ic_{S}=0},$$
i.e., the classical (non-homotopical part of) $P$ is exactly the $\Dc$-space of critical points
of the action functional $S$. However, it can have non-trivial higher homotopy,
if the space of Noether identities is non-trivial.

\begin{definition}
The above space $P$ is called the non-proper derived critical space of the given system.
\end{definition}

\begin{corollary}
The natural map
$$\Nc_{S}^r\boxtimes \Nc_{S}^r\to \Delta_{*}\Theta_{\Ac}$$
induced by the bracket on local vector fields always factors through $\Delta_{*}\Nc_{S}^r$
and the natural map
$$\Nc_{S}^r\boxtimes \Ac^r/\Ic_{S}^r\to \Delta_{*}\Ac^r/\Ic_{S}^r$$
is a Lie $\Ac$-algebroid action.
\end{corollary}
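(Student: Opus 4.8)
The plan is to reduce both assertions to a single Cartan-type identity for the Euler--Lagrange insertion map and then to transport the elementary finite-dimensional computation into the local-operations formalism. First I would record that $i_{dS}$ is $\Ac[\Dc]$-linear: by construction it is an element of $\Homc_{\Ac[\Dc]}(\Theta_{\Ac}^\ell,\Ac)$ under the identification $h(\Omega^1_{\Ac})\cong \Homc_{\Ac[\Dc]}(\Ber_{M}^{-1}\otimes\Theta_{\Ac},\Ac)$, so its image $\Ic_S$ is a genuine $\Dc$-ideal and the quotients $\Ac/\Ic_S$, $\Ac^r/\Ic_S^r$ make sense. The key input is that $h(d)(S)$ is a \emph{closed} variational one-form: since it lies in the image of $h(d)$ and $h(d)\circ h(d)=0$, the local pairing it defines obeys the local incarnation of the Cartan formula for a closed one-form, namely
$$i_{dS}([v,w])=v\cdot i_{dS}(w)-w\cdot i_{dS}(v),$$
where $[\,.\,,\,.\,]$ is the local Lie bracket recalled above and $v\cdot(-)$ is the derivation action of local vector fields on $\Ac$ (signs understood in the Koszul sense).

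Granting this identity the first claim is immediate. If $v,w$ represent elements of $\Nc_S=\ker(i_{dS})$, then $i_{dS}(v)=i_{dS}(w)=0$, whence $i_{dS}([v,w])=0$ and $[v,w]\in\Nc_S$; twisting by $\Ber_{M}$ this says exactly that the restriction of the local bracket to $\Nc_S^r\boxtimes\Nc_S^r\to\Delta_*\Theta_{\Ac}$ factors through $\Delta_*\Nc_S^r$. For the second claim I would first check that Noether symmetries preserve the Euler--Lagrange ideal: every element of $\Ic_S$ is of the form $i_{dS}(w)$, and for $v\in\Nc_S$ the identity gives $v\cdot i_{dS}(w)=i_{dS}([v,w])+w\cdot i_{dS}(v)=i_{dS}([v,w])\in\Ic_S$ since $i_{dS}(v)=0$. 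Hence $v$ sends $\Ic_S$ into $\Ic_S$ and descends to a local derivation of $\Ac/\Ic_S$; this is precisely the anchor of the asserted action, in its right-$\Dc$-module form $\Nc_S^r\boxtimes\Ac^r/\Ic_S^r\to\Delta_*\Ac^r/\Ic_S^r$.

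It then remains to verify the Lie $\Ac$-algebroid axioms, all of which I would deduce from the structure on $(\Theta_{\Ac},[\,.\,,\,.\,])$ recalled from Beilinson--Drinfeld. The Jacobi identity for the bracket on $\Nc_S^r$ is inherited from that on $\Theta_{\Ac}$, which now closes on $\Nc_S^r$ by the first claim; the anchor is a morphism of local Lie brackets because $[v,w]$ acts on $\Ac/\Ic_S$ as the commutator of the actions of $v$ and $w$ (the same identity of the first paragraph, read as a homomorphism property); and the Leibniz compatibility between the bracket, the anchor, and the $\Ac$-module structure on $\Ac^r/\Ic_S^r$ is the restriction of the Leibniz rule already satisfied by local vector fields acting on $\Ac$.

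The hard part will be the first step. The identity $i_{dS}([v,w])=v\cdot i_{dS}(w)-w\cdot i_{dS}(v)$ must be established not as a relation between ordinary binary maps but as an equality of pseudo-tensor (local) operations valued in $\Delta_*\Ac$, which forces one to track carefully the Berezinian twists entering the passage $\Theta_{\Ac}\leftrightarrow\Theta_{\Ac}^\ell=\Ber_{M}^{-1}\otimes\Theta_{\Ac}$ and the identification of $h(\Omega^1_{\Ac})$ with $\Homc_{\Ac[\Dc]}(\Theta_{\Ac}^\ell,\Ac)$. Concretely I would produce it by lifting $S$ to an actual element of the variational de Rham complex, computing $h(d)(S)$ there, and using $h(d)\circ h(d)=0$ together with the description of the local bracket as the commutator of derivations; once this local closedness is in hand, the remaining verifications are the local-operation transcriptions of the classical one-line computations.
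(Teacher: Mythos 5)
Your argument is correct and is in substance the same as the paper's: the corollary is presented there as an immediate consequence of the preceding proposition that the local Lie bracket extends to an odd local Poisson bracket on $\Ac_{P}=\Sym_{dg}([\Theta_{\Ac}^\ell[1]\overset{i_{dS}}{\to}\Ac])$, and the Leibniz compatibility of that bracket with the Koszul differential $i_{dS}$, read on degree $-1$ elements, is exactly your Cartan identity $i_{dS}([v,w])=v\cdot i_{dS}(w)-w\cdot i_{dS}(v)$. Since the paper writes no further proof, your explicit chain — the kernel $\Nc_{S}$ is closed under the bracket, $\Nc_{S}$ preserves the ideal $\Ic_{S}$ and hence acts on $\Ac/\Ic_{S}$, and the algebroid axioms are inherited from the local Lie-$*$ structure on $\Theta_{\Ac}$ — is precisely the content the corollary leaves implicit.
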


Let $\gfk_{S}\to \Nc_{S}$ be a projective $\Ac[\Dc]$-resolution of the Noether identities,
and suppose (to simplify, but this is rarely the case) that the dg-algebra
$$\Bc=\Sym_{dg}([\gfk_{S}[2]\to \Theta_{\Ac}^\ell[1]\to \Ac])$$
is a cofibrant resolution of $\Ac/\Ic_{S}$, whose differential is denoted $d_{KT}$.
From the point of view of derived geometry,
differential forms on this resolution give a definition of the cotangent complex on
the $\Dc$-space $\uSpec_{\Dc}(\Ac/\Ic_{S})$ of critical points of the action functional.
One can think of the derived $\Dc$-stack
$$
\begin{array}{ccccc}
\R T:=\R\uSpec_{\Dc}(\Ac/\Ic_{S}):& dg-\Alg_{\Dc} & \to 	  &\SSets\\
    	& \Rc		&\mapsto & s\Hom_{dg-\Alg_{\Dc}}(\Bc,\Rc)
\end{array}
$$
as a proper solution space for the Euler-Lagrange equation (physicist's language),
or a proper derived critical space. It is a homotopical replacement of the $\Dc$-space
$\uSpec_{\Dc}(\Ac/\Ic_{S})$.

For the following definition, we give a local version of the notion of
$L_{\infty}$-algebroid, whose classical definition can be found in Loday and Vallette's
book \cite{Loday-Vallette-Operads}.
Roughly speeking, a local $L_{\infty}$-algebroid is a representation
of the $L_{\infty}$-operad in the pseudo-tensor category of $\Ac^r[\Dc^{op}]$-modules.
This can be shown to be equivalent to the datum of an inner coderivation on some coalgebra.
Since we are mostly interested in the Chevalley-Eilenberg complex, we will use this
definition.
\begin{definition}
A local $L_{\infty}$-algebroid structure on a graded $\Ac[\Dc]$-module $\Lc$
is given by an $\Ac^r[\Dc^{op}]$-inner coderivation $d$ of degree $1$ of the cocommutative
coassociative inner coalgebra $(C^c(\Lc^r[1]),\Delta)$ over $\Ac^r[\Dc^{op}]$, where $C^c(\Lc^r[1])$ is
the inner exterior algebra of $\Lc^r$ equipped with its natural coalgebra structure
$\Delta:C^c(\Lc^r[1])\to C^c(\Lc^r[1])\wedge C^c(\Lc^r[1])$ defined by
$$\Delta(\gamma_{1})=0$$
and
$$
\Delta(\gamma_{1}\wedge\cdots\wedge \gamma_{n})=
\frac{1}{2}\sum_{k=0}^{n-1}\frac{1}{k!(n-k)!}\sum_{\epsilon\in S_{n}}
\sgn(\epsilon). \gamma_{\epsilon(1)}\wedge\cdots\wedge \gamma_{\epsilon(k)}\bigwedge
\gamma_{\epsilon(k+1)}\wedge\cdots\wedge \gamma_{\epsilon(n)}.
$$
If we suppose that all components of $\Lc^r[1]$ are projective of finite type,
we can dualize $d$ to a derivation $d^\circ$ on $\Sym(\Lc^\circ[1])$
(where $\Lc^\circ$ is the inner dual of $\Lc$).
This gives a definition of the Chevalley-Eilenberg complex of the
$\Ac[\Dc]$-$L_\infty$-algebroid $\Lc$
$$(C(\Lc),d_{CE}):=(\Sym(\Lc^\circ[1]),d^\circ).$$
\end{definition}

\begin{proposition}
Let $\bar{\gfk}_{S}\to \Nc_{S}/\Ic_{S}$ be a projective resolution of
the space $\Nc/\Ic_{S}$ of on-shell Noether identities. There is a natural local
$L_{\infty}$-algebroid structure on $\bar{\gfk}_{S}$. If $\bar{\gfk}_{S}$
is bounded finitely generated, there is a well-defined Chevalley-Eilenberg differential
$$d_{CE}:\Sym(\bar{\gfk}_{S}^\circ[1])\to \Sym(\bar{\gfk}_{S}^\circ[1]).$$
\end{proposition}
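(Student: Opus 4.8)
The plan is to produce the $L_\infty$-algebroid structure by homotopy transfer from the strict on-shell structure already in hand, and then to obtain $d_{CE}$ by dualization. First I would recall that the preceding corollary endows the Noether gauge symmetries $\Nc_{S}^r$ with a local Lie bracket closing into $\Delta_{*}\Nc_{S}^r$, and makes $\Nc_{S}$ act on $\Ac^r/\Ic_{S}^r$ as a local Lie $\Ac$-algebroid; reducing modulo the Euler--Lagrange ideal $\Ic_{S}$, this descends to a strict local Lie $\Ac$-algebroid structure on the on-shell Noether identities $\Nc_{S}/\Ic_{S}$. Since $\bar{\gfk}_{S}\to \Nc_{S}/\Ic_{S}$ is a projective resolution, $\Nc_{S}/\Ic_{S}$ is exactly the homology of the complex $\bar{\gfk}_{S}$, so I am in the standard situation of a strict structure carried by the homology of a complex, which one expects to lift to a homotopy-coherent structure on the complex itself.

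Next I would fix the resolution $\bar{\gfk}_{S}\to \Nc_{S}/\Ic_{S}$ and promote it to a contraction --- a projection $p$, a quasi-inverse inclusion $i$, and a contracting homotopy $h$ satisfying the usual side conditions --- in the category of $\Ac^r[\Dc^{op}]$-modules, projectivity of the terms of $\bar{\gfk}_{S}$ being what guarantees that such data exist and can be chosen compatibly with the $\Dc$-module structure. I would then invoke the homotopy transfer theorem for the $L_\infty$-operad (Loday--Vallette \cite{Loday-Vallette-Operads}), applied inside the local pseudo-tensor category of right $\Dc$-modules: the strict Lie bracket, viewed as a representation of the Lie operad, transfers along the contraction to a family of local $n$-ary brackets
$$l_{n}:(\bar{\gfk}_{S}^r)^{\boxtimes n}\to \Delta_{*}\bar{\gfk}_{S}^r,$$
given by the customary sum over rooted trees with the bracket at the nodes, $h$ on internal edges, $i$ on leaves and $p$ at the root. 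Packaging the $l_{n}$ as a single degree-one $\Ac^r[\Dc^{op}]$-inner coderivation $d$ of the inner cocommutative coassociative coalgebra $(C^c(\bar{\gfk}_{S}^r[1]),\Delta)$, the transfer theorem yields $d^2=0$, which is precisely the definition of a local $L_\infty$-algebroid structure; different choices of contraction give $L_\infty$-isomorphic structures, whence the asserted naturality.

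For the second statement I would use the boundedness and finite-generation hypotheses. They ensure that each graded piece of $\bar{\gfk}_{S}^r[1]$ is projective of finite type, so the inner dual $\bar{\gfk}_{S}^\circ$ is well-behaved and, as already recorded in the definition of the Chevalley--Eilenberg complex, the coalgebra coderivation $d$ admits a dual derivation on $\Sym(\bar{\gfk}_{S}^\circ[1])$. I would set $d_{CE}:=d^\circ$; the identity $d^2=0$ dualizes termwise to $d_{CE}^2=0$, and boundedness keeps $\Sym(\bar{\gfk}_{S}^\circ[1])$ locally finite so that the transferred sums are finite and $d_{CE}$ is a genuine differential.

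The hard part will be executing the homotopy transfer inside the local formalism rather than in ordinary complexes of modules. Two points require care: that the contraction $(p,i,h)$ can be realized by \emph{local} operations, i.e.\ bidifferential operators supported on the diagonal, so that the tree formulas keep the $l_{n}$ local and valued in $\Delta_{*}\bar{\gfk}_{S}^r$; and that projectivity over $\Ac^r[\Dc^{op}]$ is strong enough to split the resolution in the required way. Once the contraction is local, the tree summation produces local operations automatically, and the boundedness hypothesis controls convergence, so these are the two places where the argument genuinely uses the hypotheses and the pseudo-tensor structure of \cite{Beilinson-Drinfeld-Chiral}.
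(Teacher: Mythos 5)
Your proposal is correct and follows essentially the same route as the paper: the paper's proof is a one-line appeal to homotopical transfer of the strict local Lie bracket on $\Nc_{S}/\Ic_{S}$ to the projective resolution (via the pseudo-tensor version of Theorem 3.5 of Berger--Moerdijk), followed by dualization under the finiteness hypotheses to get $d_{CE}$. The only difference is cosmetic --- you invoke the explicit Loday--Vallette contraction/tree form of the transfer theorem and spell out the locality of the contraction data, which the paper compresses into the phrase ``pseudo-tensor version''.
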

\begin{proof}
The result follows from the pseudo-tensor version of theorem
3.5 of \cite{Berger-Moerdijk}, by homotopical transfer of the local Lie bracket on $\Nc_{S}/\Ic_{S}$
to a local $L_{\infty}$-structure on $\gfk_{S}/\Ic_{S}$.
\end{proof}

The aim of the Batalin-Vilkoviski formalism is to define a Poisson differential
graded algebra $(\Ac_{BV},D)$ whose derived $\Dc$-stack
$$
\begin{array}{ccccc}
\R\uSpec_{\Dc}(\Ac_{BV},D):& dg-\Alg_{\Dc} & \to 	  &\SSets\\
    	& \Bc		&\mapsto & s\Hom_{dg-\Alg_{\Dc}}((\Ac_{BV},D),\Bc)
\end{array}
$$
is a kind of homotopical space of leaves
$$\R\Spec(\Ac/\Ic_{S})\underset{\Lb}{/}(\Nc_{S}^r/\Ic_{S}^r)$$
of the derived critical space $\R\Spec(\Ac/\Ic_{S})$ by the ``foliation by gauge orbits''
defined by the Lie algebroid
$\Nc_{S}^r/\Ic_{S}^r$ of on-shell gauge symmetries.
The differential $D$ is essentially obtained, under additional hypothesis, by combining in a neat way
the above Chevalley-Eilenberg differential $d_{CE}$ for the $\gfk_{S}/\Ic_{S}$-module $\Ac/\Ic_{S}$
with the Koszul-Tate differential $d_{KT}$ on the cofibrant resolution $\Bc$ of $\Ac/\Ic_{S}$.
This neat combination could be done,
for example if $\gfk_{S}\to \Nc_{S}$ was a projective resolution of the
Noether identities, by extending the local action map
$$\Nc_{S}^r\boxtimes \Ac^r/\Ic_{S}^r\to \Delta_{*}\Ac^r/\Ic_{S}^r$$
to an $\infty$-action
$$\gfk_{S}^r\boxtimes \Bc^r\to \Delta_{*}\Bc^r$$
of the local $L_{\infty}$-$\Ac$-algebroid $\gfk_{S}$ on the resolution $\Bc$ of $\Ac/\Ic_{S}$,
and taking the total complex of the associated Chevalley-Eilenberg complex
(see \cite{Beilinson-Drinfeld-Chiral}, section 1.4.5)
$$(C(\gfk_{S},\Bc),d_{CE}).$$
Remark however that this object is only an $\R_{M}$-algebra and
one would like to have an $\Ac$-algebra here, by replacing the Chevalley-Eilenberg
complex by an \emph{inner} version of it.
The existence of the inner Chevalley-Eilenberg complex is only given under very strong
finite-dimension conditions, that are not fulfilled in the above construction.
The essentially role of the Batalin-Vilkovisky construction is to give a systematic
way to fill the above conceptual gap, by using smaller generating spaces of
Noether symmetries.

Indeed, remark that the left hand side of the natural map
$$\wedge^2\Theta_{\Ac}^\ell\to \Nc.$$
is not a finitely generated $\Ac[\Dc]$-module (contrary
to what would occur in a finite dimensional geometric situation) for the
same reason that $\Dc\otimes_{\Oc}\Dc$ is not $\Dc$-coherent.
This shows that it is hard to find an $\Ac[\Dc]$-finitely generated off-shell projective
resolution $\gfk_{S}$ of the space $\Nc_{S}$ of Noether identities. Another problem
is that such a projective resolution does not give, in general, a cofibrant resolution of
the $\Ac$-algebra $\Ac/\Ic_S$ because the differential graded symmetric algebra
functor $\Sym_{dg-\Ac}$ is not always exact.
This motivates the following construction, that is also useful for computational purposes.

Generating spaces of Noether gauge symmetries can be defined by adapting Tate's construction
\cite{Tate4} to the local context. We are inspired here by Stasheff's paper \cite{Stasheff2}.
\begin{definition}
A generating space of Noether identities is a tuple $(\gfk_{S},\Ac_{n},i_{n})$
composed of
\begin{enumerate}
\item a negatively graded projective $\Ac[\Dc]$-module $\gfk_{S}$,
\item a negatively indexed family $\Ac_{n}$ of dg-$\Ac$-algebras with $\Ac_{0}=\Ac$, and
\item for each $n\leq -1$, an $\Ac[\Dc]$-linear morphism $i_{n}:\gfk_{S}^{n+1}\to Z^n\Ac_{n}$ to
the $n$-cycles of $\Ac_{n}$,
\end{enumerate}
such that if one extends $\gfk_{S}$ by setting $\gfk_{S}^1=\Theta_{\Ac}^\ell$
and if one sets
$$i_{0}=i_{dS}:\Theta_{\Ac}^\ell\to \Ac,$$
\begin{enumerate}
\item one has for all $n\leq 0$ an equality
$$
\Ac_{n-1}=\Sym_{\Ac_{n}}([\gfk_{S}^{n+1}[-n+1]\otimes_{\Ac}\Ac_{n}\overset{i_{n}}{\to} \underset{0}{\Ac_{n}}]),
$$
\item the natural projection map
$$\Ac_{KT}:=\limind \Ac_{n}\to \Ac/\Ic_{S}$$
is a cofibrant resolution, called the Koszul-Tate algebra,
whose differential is denoted $d_{KT}$.
\end{enumerate}
\end{definition}

\begin{lemma}
The complex
$$\Pc_{KT}:=[\Ac_{KT}/(\Theta_{\Ac}^\ell)]^{\leq -2}$$
of components of degree smaller than $-2$ in the quotient algebra of
the Koszul-Tate algebra by the ideal of local vector fields maps to the space
$\Nc_{S}$ of Noether identities, and surjects onto $\Nc_{S}/\Ic_{S}$.
Its underlying graded module is
$$\Pc_{KT}=\Sym_{g}(\gfk_{S}[2]).$$
The inclusion $\gfk_{S}[2]\subset \Pc_{KT}$ induces a degree $1$ map
$$\widetilde{d}:\gfk_{S}[1]\to \gfk_{S}.$$
\end{lemma}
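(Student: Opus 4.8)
The plan is to read everything off the explicit semifree structure of the Koszul--Tate algebra $\Ac_{KT}=\limind \Ac_{n}$ furnished by the recursion defining a generating space of Noether identities, and then to organize the differential $d_{KT}$ by its polynomial degree in the degree-$(-1)$ generators $\Theta_{\Ac}^{\ell}$.

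First I would compute the underlying graded module. Unwinding the recursion $\Ac_{n-1}=\Sym_{\Ac_{n}}([\gfk_{S}^{n+1}[-n+1]\otimes_{\Ac}\Ac_{n}\to \Ac_{n}])$ shows that, after forgetting $d_{KT}$, the algebra $\Ac_{KT}$ is the free graded-commutative $\Ac$-algebra on the generators $\Theta_{\Ac}^{\ell}$ (placed in homological degree $-1$ by the step $n=0$, where $i_{0}=i_{dS}$) together with the generators $\gfk_{S}^{m}$ for $m\leq 0$ (the step $n=m-1$ placing $\gfk_{S}^{m}$ in homological degree $m-2$). As a graded module these last generators assemble exactly into the shift $\gfk_{S}[2]$, since $(\gfk_{S}[2])^{m-2}=\gfk_{S}^{m}$. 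Quotienting by the ideal $(\Theta_{\Ac}^{\ell})$ generated by the degree-$(-1)$ generators therefore leaves the graded-commutative algebra $\Sym_{\Ac}(\gfk_{S}[2])$; as $\gfk_{S}[2]$ is concentrated in degrees $\leq -2$, its unit sits in degree $0$ and all positive symmetric powers in degrees $\leq -2$, so the degree-$(\leq -2)$ truncation $\Pc_{KT}$ has underlying graded module $\Sym_{g}(\gfk_{S}[2])$, as claimed.

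Next I would produce the map to $\Nc_{S}$ and the differential $\widetilde d$. The crucial point is that $(\Theta_{\Ac}^{\ell})$ is \emph{not} a differential ideal, because $d_{KT}\Theta_{\Ac}^{\ell}=i_{dS}(\Theta_{\Ac}^{\ell})=\Ic_{S}\subset\Ac$; so instead of descending $d_{KT}$ I would filter $\Ac_{KT}$ by polynomial degree in $\Theta_{\Ac}^{\ell}$ and split $d_{KT}=d_{-}+d_{+}$, where the Koszul part $d_{-}$ (built from $i_{dS}$) lowers this degree and the Tate part $d_{+}$ (built from the $i_{n}$) raises it. Restricting $d_{+}$ to the bottom linear generators gives $i_{-1}:\gfk_{S}^{0}\to Z^{-1}\Ac_{-1}$; since the degree-$(-1)$ cycles of $\Ac_{-1}=\Sym_{\Ac}(\Theta_{\Ac}^{\ell}[1])$ are exactly $\ker(i_{dS})=\Nc_{S}$, this is the announced map $\Pc_{KT}\to\Nc_{S}$, given on the bottom linear generators $\gfk_{S}^{0}\subset\Pc_{KT}$ by $i_{-1}$. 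More generally the generator-to-generator component of $d_{+}$, sending $\gfk_{S}^{m}$ to $\gfk_{S}^{m+1}$ for $m\leq -1$ and $\gfk_{S}^{0}$ into $\Theta_{\Ac}^{\ell}$, is what the inclusion $\gfk_{S}[2]\subset\Pc_{KT}$ induces; reindexing yields the degree-$1$ map $\widetilde d:\gfk_{S}[1]\to\gfk_{S}$, and $\widetilde d^{2}=0$ falls out of $d_{KT}^{2}=0$ after projecting onto the linear part.

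Finally, surjectivity onto $\Nc_{S}/\Ic_{S}$ would come from the resolution hypothesis $H^{-1}(\Ac_{KT})=0$. In homological degree $-1$ the only elements are $\Theta_{\Ac}^{\ell}$, so $Z^{-1}=\Nc_{S}$, while $B^{-1}=d_{KT}(\Ac_{KT}^{-2})$ with $\Ac_{KT}^{-2}=\Sym^{2}(\Theta_{\Ac}^{\ell}[1])\oplus\gfk_{S}^{0}$. The Leibniz rule forces $d_{KT}(\Sym^{2}(\Theta_{\Ac}^{\ell}[1]))\subseteq \Ic_{S}\cdot\Theta_{\Ac}^{\ell}$, so modulo $\Ic_{S}$ the only boundaries are $i_{-1}(\gfk_{S}^{0})$; combined with $Z^{-1}=B^{-1}$ this gives $\Nc_{S}/\Ic_{S}=\overline{i_{-1}(\gfk_{S}^{0})}$, i.e. the map surjects. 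The main obstacle I anticipate is exactly the bookkeeping forced by the non-differential ideal $(\Theta_{\Ac}^{\ell})$: one must keep the Koszul part $d_{-}$ and Tate part $d_{+}$ of $d_{KT}$ separate, control the $\Sym^{2}(\Theta_{\Ac}^{\ell})$ contribution to the degree-$(-1)$ boundaries, and be scrupulous with the homological shifts so that the linear part of the Tate differential lands back in $\gfk_{S}$ with the stated degree.
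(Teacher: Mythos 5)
Your argument is correct and is essentially an expansion of the paper's own (very terse) proof: the underlying graded module is read off the recursion defining $\Ac_{KT}$, the map to $\Nc_{S}$ is $i_{-1}$ via the identification $Z^{-1}\Ac_{-1}=\ker(i_{dS})=\Nc_{S}$, surjectivity onto $\Nc_{S}/\Ic_{S}$ comes from acyclicity of $\Ac_{KT}$ in degree $-1$ together with the fact that $d_{KT}(\Sym^{2}(\Theta_{\Ac}^{\ell}[1]))\subset \Ic_{S}\cdot\Theta_{\Ac}^{\ell}$, and $\widetilde{d}$ is the generator-to-generator component of $d_{KT}$, i.e.\ exactly the paper's composite of inclusion, differential and projection of homogeneous components. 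The only caveat is your closing aside that $\widetilde{d}^{\,2}=0$: since $d_{KT}$ applied to a term $\theta\cdot h$ contributes $i_{dS}(\theta)\cdot h$ to the linear part, one only gets $\widetilde{d}^{\,2}\equiv 0$ modulo $\Ic_{S}$ --- which is all the paper ever uses (the induced differential lives on $\gfk_{S}/\Ic_{S}$) and is not asserted in the lemma.
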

\begin{proof}
The first statements follow from the definition of the Koszul-Tate algebra.
The inclusion and projection of homogeneous components induce
natural maps
$$\gfk_{S}\to \Pc_{KT}\textrm{ and }\Pc_{KT}\to \gfk_{S},$$
that can be composed with the differential on $\Pc_{KT}$ to define $\widetilde{d}$.
\end{proof}

\begin{definition}
One says that a generating space of Noether identities is
\begin{enumerate}
\item strongly regular if the graded space $\gfk_{S}$ is bounded with finitely generated projective
components,
\item on-shell complete if the natural map $\gfk_{S}\to \Nc_{S}$ induces a projective resolution
$$\gfk_{S}/\Ic_{S}\to \Nc_{S}/\Ic_{S}$$
of $\Nc_{S}/\Ic_{S}$ as an $\Ac/\Ic_{S}$-module, with differential induced by $\widetilde{d}$.
\item on-shell algebraically complete if the natural map $\Pc_{KT}\to \Nc_{S}$ induces
a projective resolution
$$\Pc_{KT}/\Ic_{S}\to \Nc_{S}/\Ic_{S}.$$
\end{enumerate}
\end{definition}

Since $\Dc\otimes_{\Oc}\Dc$ is usually not $\Dc$-coherent,
the higher degree homogeneous components of the Koszul-Tate algebra (e.g,
the component $\wedge^2\Theta_{\Ac}^\ell$, that corresponds
to trivial Noether identities) are usually not of finite type over $\Ac[\Dc]$,
even under the strongly regular hypothesis.
This problem is specific to local field theory and does not occur in finite dimensional geometry:
the graded algebras in play are of finite type, but their higher homogeneous components are not
finitely generated modules.

\begin{proposition}
Let $\gfk_{S}$ be an on-shell complete (respectively, on-shell algebraically complete) generating
space of Noether identities.
There is a natural local $L_{\infty}$-algebroid structure on $\bar{\gfk}_{S}:=\gfk_{S}/\Ic_{S}$
(resp. $\Pc_{KT}/\Ic_{S}$).
If $\gfk_{S}$ is on-shell complete and strongly regular,
there is a well-defined degree $1$ map
$$d_{CE}:\Sym(\gfk_{S}^\circ[1])\to \Sym(\gfk_{S}^\circ[1])$$
induced by the $L_{\infty}$-algebroid structure on $\bar{\gfk}_{S}$.
\end{proposition}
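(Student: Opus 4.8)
The plan is to deduce both statements from homotopy transfer of structure along a projective resolution, in exactly the spirit of the preceding proposition but now starting from the on-shell Lie algebroid rather than from an abstract resolution of $\Nc_{S}/\Ic_{S}$. First I would record the input datum: by the corollary following the definition of the non-proper derived critical space, the space $\Nc_{S}/\Ic_{S}$ of on-shell Noether identities is a local Lie $\Ac/\Ic_{S}$-algebroid, its bracket being the restriction of the local Lie bracket $[.,.]:\Theta_{\Ac}\boxtimes\Theta_{\Ac}\to\Delta_{*}\Theta_{\Ac}$ to $\Nc_{S}^{r}\boxtimes\Nc_{S}^{r}\to\Delta_{*}\Nc_{S}^{r}$ and its anchor being the action $\Nc_{S}^{r}\boxtimes\Ac^{r}/\Ic_{S}^{r}\to\Delta_{*}\Ac^{r}/\Ic_{S}^{r}$. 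This is the structure that I will transport.

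Next, in the on-shell complete case, the defining property of $\gfk_{S}$ is precisely that $\bar{\gfk}_{S}=\gfk_{S}/\Ic_{S}\to\Nc_{S}/\Ic_{S}$ is a projective resolution of $\Ac/\Ic_{S}[\Dc]$-modules, with differential induced by $\widetilde{d}$. I would then apply the pseudo-tensor (local) version of the homotopy transfer theorem, that is Theorem 3.5 of \cite{Berger-Moerdijk} read inside the compound tensor category $(\Mod(\Dc^{op}),\otimes^{!})$ of right $\Dc$-modules, to the pair formed by the local Lie algebroid $\Nc_{S}/\Ic_{S}$ and this resolution. The output is a local $L_{\infty}$-algebroid structure on $\bar{\gfk}_{S}$, packaged as a degree one inner coderivation $d$ of the cofree cocommutative inner coalgebra $(C^{c}(\bar{\gfk}_{S}^{r}[1]),\Delta)$. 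In the on-shell algebraically complete case the argument is verbatim the same, with $\Pc_{KT}/\Ic_{S}\to\Nc_{S}/\Ic_{S}$ now serving as the projective resolution, giving the asserted structure on $\Pc_{KT}/\Ic_{S}$.

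For the Chevalley-Eilenberg differential I would invoke the strongly regular hypothesis. Since $\gfk_{S}$ is then bounded with finitely generated projective $\Ac[\Dc]$-components, the graded pieces of $\gfk_{S}^{r}[1]$ are projective of finite type, so the inner dual $\gfk_{S}^{\circ}$ exists and, degree by degree, dualization interchanges the cofree inner coalgebra $C^{c}$ with the inner symmetric algebra $\Sym$. Dualizing the transferred coderivation $d$ therefore produces a well-defined degree one derivation $d^{\circ}=:d_{CE}$ of $\Sym(\gfk_{S}^{\circ}[1])$, which is exactly the Chevalley-Eilenberg differential of the $L_{\infty}$-algebroid as defined above.

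The hard part will be justifying that the homotopy transfer theorem really applies in the local, pseudo-tensor setting and not merely in an ordinary symmetric monoidal category: one must produce contraction data for the resolution $\bar{\gfk}_{S}\to\Nc_{S}/\Ic_{S}$ whose iterated tensor powers are compatible with the local operations $\boxtimes$ and $\Delta_{*}$, and check that the operadic perturbation formulas of \cite{Berger-Moerdijk} continue to make sense for these multi-jet operations. The boundedness and finite generation encoded in \emph{strongly regular} are precisely what make the inner dual behave like an honest dual in the final step; without them one runs into the non-coherence of $\Dc\otimes_{\Oc}\Dc$ that, as noted above, already obstructs the off-shell constructions, so it is essential that the dualization be performed after passing to the on-shell finitely generated resolution.
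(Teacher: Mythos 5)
Your construction of the local $L_{\infty}$-algebroid structure on $\bar{\gfk}_{S}$ (resp.\ $\Pc_{KT}/\Ic_{S}$) is exactly the paper's argument: identify the on-shell local Lie algebroid $\Nc_{S}/\Ic_{S}$, observe that on-shell (algebraic) completeness supplies a projective resolution, and transfer via the pseudo-tensor version of Theorem 3.5 of Berger--Moerdijk. That part is fine, and your caveat about having to check that the transfer formulas survive the passage to $\boxtimes$ and $\Delta_{*}$ is the same leap of faith the paper takes.

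There is, however, a gap in your last step, and it is precisely the step that distinguishes this proposition from the one preceding it. The transferred coderivation lives on the inner coalgebra built from $\bar{\gfk}_{S}^{r}[1]$, where $\bar{\gfk}_{S}=\gfk_{S}/\Ic_{S}$ is an $\Ac/\Ic_{S}[\Dc]$-module. Dualizing it therefore yields a derivation of $\Sym(\bar{\gfk}_{S}^{\circ}[1])$, i.e.\ of the symmetric algebra on the \emph{on-shell} dual $\gfk_{S}^{\circ}/\Ic_{S}$ --- not of $\Sym(\gfk_{S}^{\circ}[1])$ as the statement requires. You silently identify the two. The paper closes this gap explicitly: since $\gfk_{S}^{\circ}\to\gfk_{S}^{\circ}/\Ic_{S}$ is componentwise surjective and the components of $\gfk_{S}^{\circ}$ are projective (this is where strong regularity is used a second time), the degree $1$ map on the quotient lifts to a degree $1$ map on $\Sym(\gfk_{S}^{\circ}[1])$. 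Without that lifting argument you have only reproved the previous proposition; you should add it.
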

\begin{proof}
The condition of on-shell completeness means that the natural map
$$\bar{\gfk}_{S}\to \Nc_{S}/\Ic_{S}$$
is a projective resolution. The result follows from the pseudo-tensor version of theorem
3.5 of \cite{Berger-Moerdijk}, by homotopical transfer of the local Lie bracket on $\Nc_{S}/\Ic_{S}$
to a local $L_{\infty}$-structure on $\bar{\gfk}_{S}$.
The extension of $d_{CE}$ to $\Sym(\gfk_{S}^\circ[1])$ is given by the fact
that $\gfk_{S}^\circ\to\gfk_{S}^\circ/\Ic_{S}$ is componentwise surjective and $\gfk_{S}^\circ$
has projective components.
\end{proof}

\begin{definition}
One says that the gauge symmetries close off-shell if there exists a
generating space of Noether gauge symmetries $\gfk_{S}$ that is finitely
$\Ac[\Dc]$-generated, and whose
image $\bar{\gfk}_{S}$ in $\Theta_{\Ac}^\ell$ admits a bracket
$$\bar{\gfk}_{S}^r\boxtimes \bar{\gfk}_{S}^r\to \Delta_{*}\bar{\gfk}^r_{S}$$
induced by the local bracket on vector fields. One says that the theory is
$N$-reducible is there exists a minimal strongly regular generating space of length $N$.
In particular, $0$-reducible gauge theories are called irreducible gauge
theories.
\end{definition}

\begin{definition}
Let $\gfk_{S}$ be a strongly regular generating space of the Noether gauge symmetries.
Such a generating space is called a space of antighosts of the gauge theory.
The inner dual space (that is well-defined because of the strong regularity
hypothesis)
$$\gfk_{S}^\circ:=\Ber_{M}^{-1}\otimes\Homc_{\Ac[\Dc]}(\gfk_{S},\Ac[\Dc])$$
is called the space of ghosts.
\end{definition}

\begin{theorem}
Let $\gfk_{S}$ be a strongly regular generating space of Noether identities and $\gfk_{S}^\circ$
its inner dual. The bigraded algebra
$$
\Ac_{BV,bigrad}:=
\Sym_{bigrad}\left(\left[
\begin{array}{ccccc}
\gfk_{S}[2] &\oplus & \Ber_{M}^{-1}\otimes\Theta_{\Ac}[1] &\oplus & \Ac\\
&&&&\oplus\\
&&&& ^t\gfk_{S}^{\circ}[-1]
\end{array}\right]\right),
$$
where $^t\gfk_{S}^{\circ}$ is the vertical chain graded space associated to $\gfk_{S}^\circ$,
is equiped with a natural local bracket
$$\{.,.\}:\Ac_{BV,bigrad}^r\boxtimes \Ac_{BV,bigrad}^r\to \Delta_{*}\Ac_{BV,bigrad}^r$$
called the antibracket.
\end{theorem}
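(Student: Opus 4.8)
The plan is to build the antibracket in two conjugate sectors and then extend it over the whole symmetric algebra by a graded Leibniz rule, reusing the odd local Poisson bracket already produced on $\Ac_{P}=\Sym_{dg}([\Theta_{\Ac}^\ell[1]\to\Ac])$ by the preceding proposition. Indeed, since $\Theta_{\Ac}^\ell=\Ber_{M}^{-1}\otimes\Theta_{\Ac}$, the summand $\Ac\oplus(\Ber_{M}^{-1}\otimes\Theta_{\Ac}[1])$ of $\Ac_{BV,bigrad}$ is precisely the generating module of $\Ac_{P}$, so the antibracket on the \emph{field--antifield} sector is already available. What remains is to produce the \emph{ghost--antighost} pairing between the antighosts $\gfk_{S}[2]$ and the ghosts $^t\gfk_{S}^\circ[-1]$, to set the bracket to vanish on all other pairs of generators, and to extend everything by the Leibniz rule. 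The antibracket will thus be the unique degree $+1$ local biderivation determined by these two pairings.

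First I would isolate the two fundamental local pairings. For the field--antifield sector, the contraction of a local vector field against the universal derivation $d\colon\Ac\to\Omega^1_{\Ac}$, together with the identification $\Theta_{\Ac}=\Homc_{\Ac[\Dc]}(\Omega^1_{\Ac},\Ac[\Dc])$, is exactly the local operation
$$
(\Ber_{M}^{-1}\otimes\Theta_{\Ac})^r\boxtimes\Ac^r\to\Delta_{*}\Ac^r
$$
underlying the bracket of the previous proposition. For the ghost--antighost sector, the inner duality $\gfk_{S}^\circ=\Ber_{M}^{-1}\otimes\Homc_{\Ac[\Dc]}(\gfk_{S},\Ac[\Dc])$ supplies a canonical evaluation, and the strong regularity hypothesis (bounded, with finitely generated projective components) makes $\gfk_{S}$ reflexive, so this evaluation is a genuine local operation
$$
\gfk_{S}^r\boxtimes(\gfk_{S}^\circ)^r\to\Delta_{*}\Ac^r
$$
carrying the correct $\Ber_{M}$ twist. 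A short check of degrees shows that the shifts $[2]$, $[1]$, $[-1]$ are arranged so that both pairings are homogeneous of total degree $+1$ and land in the degree-$0$ summand $\Ac$, so a single odd bracket is consistent.

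Next I would extend these pairings to all of $\Ac_{BV,bigrad}$ by the graded Leibniz rule
$$
\{x,yz\}=\{x,y\}\,z+(-1)^{(|x|+1)|y|}\,y\,\{x,z\},
$$
interpreted in the pseudo-tensor sense, using the $\otimes^!$-monoidal structure on right $\Dc$-modules and the compatibility of the local multiplication on $\Ac_{BV,bigrad}^r$ with $\Delta_{*}$. One then checks graded antisymmetry and the graded Jacobi identity, so that $\{.,.\}$ is an odd local Poisson (Gerstenhaber) bracket, now the bracket on $\Ac_{P}$ enlarged by the ghost--antighost pairing.

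The hard part will be showing that this Leibniz extension is well defined \emph{as a local operation}: one must verify that the value on a product is independent of the chosen factorization, that it descends through the symmetrization defining $\Sym_{bigrad}$, and that it is compatible with the $\Dc$-module structure. This is exactly where the Beilinson--Drinfeld pseudo-tensor formalism and careful bookkeeping of the $\Ber_{M}$ twists (needed to pass between left and right $\Dc$-modules whenever one forms $\boxtimes$ and $\Delta_{*}$) are indispensable, and where strong regularity of $\gfk_{S}$ is used essentially to guarantee that $\gfk_{S}^\circ$ is reflexive and the ghost--antighost pairing closes. Concretely, the Jacobi identity becomes an identity of local $3$-ary operations whose verification on generators reduces, via the two pairings above, to the identity already known on $\Ac_{P}$ together with the symmetry of the inner-dual evaluation.
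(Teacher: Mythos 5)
Your proposal is correct and follows essentially the same route as the paper: both identify the field--antifield sector with the already-constructed local (Schouten--Nijenhuis) bracket on $\Sym(\Theta_{\Ac}^\ell[1])$ and the ghost--antighost sector with the inner duality pairing $\gfk_{S}^r\boxtimes(\gfk_{S}^\circ)^r\to\Delta_{*}\Ac^r$ guaranteed by strong regularity, then combine the two. The only cosmetic difference is that where you extend the ghost--antighost pairing by a graded Leibniz biderivation, the paper obtains the same bracket as the associated graded of an inner Clifford algebra in the style of Kostant--Sternberg; these produce the same local Poisson structure.
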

\begin{proof}
There is a natural duality pairing
$$\langle.,.\rangle:\gfk_{S}^r\boxtimes (\gfk_{S}^\circ)^r\to \Delta_{*}\Ac^r$$
between antighosts and ghosts.
Similarly to the finite-dimensional case treated in
\cite{Kostant-Sternberg}, this duality and the isomorphism
$$
\gr^\bullet \Cliff_{inner}(\gfk_{S}^r[2]\oplus (\gfk_{S}^\circ)^r[-1],\langle.,.\rangle)
\cong \Sym(\gfk_{S}^r[2]\oplus (\gfk_{S}^\circ)^r[-1])
$$
induce a local Poisson bracket on
$$
\Sym(\gfk_{S}[2]\oplus \gfk_{S}^\circ[-1])\cong
\Sym(\gfk_{S}[2])\otimes \Sym(\gfk_{S}^\circ[-1]).
$$
Combining this with the local Schouten-Nijenhuis bracket
$$
\{.,.\}:(\wedge^*\Theta_{\Ac})\boxtimes (\wedge^*\Theta_{\Ac})\to \Delta_{*}(\wedge^*\Theta_{\Ac}),
$$
one gets a local Poisson bracket on the bigraded algebra
$$
\Ac_{BV,bigrad}:=
\Sym_{bigrad}\left(\left[
\begin{array}{ccccc}
\gfk_{S}[2] &\oplus & \Ber_{M}^{-1}\otimes\Theta_{\Ac}[1] &\oplus & \Ac\\
&&&&\oplus\\
&&&& ^t\gfk_{S}^{\circ}[-1]
\end{array}\right]\right).
$$
\end{proof}

\begin{definition}
Let $\gfk_{S}$ be a strongly regular generating space of Noether identities.
The corresponding BV algebra is the local Poisson algebra
$\Ac_{BV,bigrad}$. A solution to the classical master equation is an
$S_{cm}\in h(\Ac_{BV,bigrad})$ such that
\begin{enumerate}
\item the degree $(0,0)$ component of $S_{cm}$ is $S$,
\item a component of $S_{cm}$, denoted $S_{KT}$, induces
the Koszul-Tate differential $d_{KT}=\{S_{KT},.\}$
on antifields of degrees $(k,0)$, and
\item the master equation
$$\{S_{cm},S_{cm}\}=0$$
(meaning $D^2=0$ for $D=\{S_{cm},.\}$) is fulfilled in $h(\Ac_{BV,bigrad})$.
\end{enumerate}
\end{definition}

One can also add some conditions on $S_{cm}$ related to the on-shell Chevalley-Eilenberg
differential $d_{CE}$.

The main theorem of homological perturbation theory, given in a physical language
in \cite{Henneaux-Teitelboim}, chapter 17  (DeWitt indices), can be formulated
in our case by the following.

\begin{theorem}
Let $\gfk_{S}$ be a strongly regular and on-shell complete generating space
of Noether symmetries. 
There exists a solution to the corresponding classical master equation.
\end{theorem}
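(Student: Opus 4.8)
The plan is to prove existence by homological perturbation theory, following \cite{Henneaux-Teitelboim} but carried out locally in the pseudo-tensor category of $\Ac^r[\Dc^{op}]$-modules. The key observation is that the classical master equation $\{S_{cm},S_{cm}\}=0$ is equivalent to $D=\{S_{cm},\cdot\}$ being a differential, and that $D$ is to be assembled so that its leading (resolution-lowering) part is the Koszul--Tate differential $d_{KT}$ and its next part is the Chevalley--Eilenberg differential $d_{CE}$ of the local $L_\infty$-algebroid on $\bar{\gfk}_{S}$. First I would fix the resolution-degree (antighost) grading on $\Ac_{BV,bigrad}$ induced by the chain factors $\gfk_{S}[2]$ and $\Ber_{M}^{-1}\otimes\Theta_{\Ac}[1]$, so that $\{S_{KT},\cdot\}$ lowers it by one and equals $d_{KT}$; since $\gfk_{S}$ is a generating space, the associated Koszul--Tate algebra $\Ac_{KT}\to\Ac/\Ic_{S}$ is a cofibrant resolution, which is exactly the statement that $d_{KT}$ is acyclic in positive resolution degree.

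Next I would expand the master equation by this resolution degree. Writing $S_{cm}=\sum_{k\geq 0}S^{(k)}$ with $S^{(0)}=S$ and $S^{(1)}$ the degree-one term whose antibracket reproduces both $d_{KT}$ and the leading Chevalley--Eilenberg piece $d_{CE}$, the equation $\{S_{cm},S_{cm}\}=0$ decomposes into a tower of equations, the one in resolution degree $k$ having the schematic form
$$d_{KT}\bigl(S^{(k+1)}\bigr)=R^{(k)},$$
where $R^{(k)}$ is a local expression built from antibrackets of the previously constructed terms $S^{(0)},\dots,S^{(k)}$. The lowest equations are solvable directly from the local $L_\infty$-algebroid structure on $\bar{\gfk}_{S}=\gfk_{S}/\Ic_{S}$ furnished by the preceding proposition, whose brackets supply the first few $S^{(k)}$.

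The inductive step then rests on two facts. On one hand, the graded Jacobi identity for the local antibracket, combined with the master equation truncated at resolution degree $k-1$, forces the obstruction $R^{(k)}$ to be $d_{KT}$-closed; this is the usual consistency-of-descent computation and is formal once the Jacobi identity is available as a local (pseudo-tensor) operation on $\Ac_{BV,bigrad}^r$. On the other hand, acyclicity of $d_{KT}$ in positive resolution degree furnishes a local $S^{(k+1)}$ with $d_{KT}(S^{(k+1)})=R^{(k)}$, while on-shell completeness and strong regularity guarantee that the transferred $L_\infty$-algebroid structure, hence the seed $d_{CE}$ and the successive corrections, remain inside the finite-rank projective framework in which the inner duals $\gfk_{S}^\circ$ and the inner Chevalley--Eilenberg differential are defined. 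Passing to $h(\Ac_{BV,bigrad})$ discards the total-derivative ambiguity and yields the class $S_{cm}$ satisfying the three required conditions.

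The hard part will not be the formal descent but the verification that every step can be carried out \emph{locally}, i.e. that the contracting homotopies witnessing acyclicity of $d_{KT}$ and the assembly of the $R^{(k)}$ are realized as pseudo-tensor operations of $\Ac^r[\Dc^{op}]$-modules and not merely as $\R_{M}$-linear maps. This is genuinely delicate because, as the paper emphasizes, $\Dc\otimes_{\Oc}\Dc$ fails to be $\Dc$-coherent, so the higher homogeneous components of the Koszul--Tate algebra are not finitely generated even under strong regularity; one must check that acyclicity is nevertheless inherited component-by-component and that the inductive corrections never leave the locus where the inner $L_\infty$ and Chevalley--Eilenberg constructions make sense. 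Controlling this tension between local acyclicity and finite generation is where the real work lies.
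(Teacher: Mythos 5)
The paper states this theorem \emph{without proof}: it is offered as the formulation, in the local pseudo-tensor setting, of the main theorem of homological perturbation theory from \cite{Henneaux-Teitelboim}, chapter 17, and no argument is actually supplied. Your sketch is precisely the argument the paper is pointing to --- expansion of $S_{cm}$ by antighost (resolution) degree with $S^{(0)}=S$ and the Koszul--Tate/Chevalley--Eilenberg seeds in low degree, $d_{KT}$-closedness of each obstruction $R^{(k)}$ via the graded Jacobi identity for the local antibracket together with the equations already solved, and resolution of $R^{(k)}$ by acyclicity of $d_{KT}$ in positive resolution degree, which is exactly what cofibrancy of $\Ac_{KT}\to\Ac/\Ic_{S}$ provides --- so you are taking the same route, not a different one. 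The one substantive point is that the difficulty you isolate in your last paragraph (realizing the contracting homotopy for $d_{KT}$ and the assembly of the $R^{(k)}$ as genuine local operations on $\Ac_{BV,bigrad}^r$, despite the failure of $\Dc\otimes_{\Oc}\Dc$ to be $\Dc$-coherent and the consequent non--finite-generation of the higher homogeneous components of the Koszul--Tate algebra even under strong regularity) is exactly where the paper itself locates the obstacle in the surrounding discussion, and it is not resolved there either; your honest flagging of this step as open is therefore not a gap relative to the paper, but you should be aware that neither your sketch nor the paper contains a complete verification of it.
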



As explained above, the space $\R\uSpec_{\Dc}(\Ac_{BV},D)$ can be thought as a kind of
homotopical space of leaves
$$\R\Spec(\Ac/\Ic_{S})\underset{\Lb}{/}\Nc_{S}^r$$
of the foliation induced by the Noether gauge symmetries $\Nc_{S}^r$
on the derived critical space $\R\uSpec_{\Dc}(\Ac/\Ic_{S})$.
It is naturally equipped with a homotopical Poisson structure, which gives a nice starting point for quantization. 

\section*{Acknowledgements}

The author thanks Michael Baechtold, Glenn Barnich, Henrique Bursztyn, Damien Calaque,
Alberto Cattaneo, Denis-Charles Cisinski, Thiaggo Drummond, Giovanni Felder, Gregory Ginot,
Patrick Iglesias-Zemmour, Joseph Krasilsh'chik, Joan Mill\`es,
Giovanni Morando, Jim Stasheff, Pierre Schapira, Mathieu Stienon, Bertrand Toen, Bruno Vallette,
Alexander Verbovetsky, Luca Vitagliano, and Alan Weinstein for useful discussions.
He thanks the referees for their careful reading of a first version of the paper.
The author also thanks the organizers of the summer 2008 diffiety school,
Jussieu's Mathematical Institute, the University of Paris 6, and the IMPA
in Rio, for providing him with excellent working conditions for the preparation of
this article.

\bibliographystyle{alpha}
\bibliography{$HOME/Documents/travail/fred}

\end{document}